\pgfplotsset{
    table/search path={Simulations},
    compat=1.14
}
\newcommand{\pref}{\prettyref}
\newcommand{\hu}{}
\newcommand{\sh}{}
\renewcommand{\arraystretch}{1.2}
\newcommand{\fpas}{{FD-PaS}\xspace}
\newcommand{\dpas}{{D$^2$-PaS}\xspace}
\newcommand{\prob}{{\bf Problem 1}}
\newcommand{\tset}{\mbox{$\mathcal{T}$}}
\newcommand{\dropset}{\rho}
\newcommand{\rhymode}{rhythmic mode}
\newcommand{\normode}{nominal mode}
\newcommand{\rhystate}{rhythmic state}
\newcommand{\norstate}{nominal state}
\newcommand{\greedy}{greedy heuristic}
\newcommand{\tep}{\ensuremath{t_{ep}}}
\newcommand{\tnr}{\ensuremath{t_{n\rightarrow r}}}
\newcommand{\trn}{\ensuremath{t_{r\rightarrow n}}}
\newcommand{\dynamic}{\ensuremath{\tilde{S}}}
\newcommand{\static}{\ensuremath{S}}
\newcommand{\taui}{$\tau_i \; (1 \leq i \leq n)$}
\newcommand{\pkt}{\chi}
\newcommand{\pend}{{\bf Problem 1.1}}
\newcommand{\pdrop}{{\bf Problem 1.2}}
\newcommand{\dnodes}{$\mathbf{V}_{rhy}$}
\newcommand{\slotset}{E}
\newcommand{\slotvalue}{\epsilon}
\newcommand{\xset}{\ensuremath{\mathcal{S}}}
\newcommand{\retryfunc}{$\overrightarrow{R}^*_i(\cdot)$}
\newcommand{\pdrfunc}{$\lambda^*_i(\cdot)$}
\newtheorem{lemma}{Lemma}
\newtheorem{theorem}{Theorem}
\newtheorem{constraint}{Constraint}
\newcommand{\eat}[1]{}
\newcommand{\han}{}
\newcommand{\tz}{}
\newcommand{\extend}{}
\newcommand{\cam}{}
\newcommand{\pmac}{MP-MAC}
\begin{document}

\title{\LARGE Fully Distributed Packet Scheduling Framework for Handling Disturbances in Lossy Real-Time Wireless Networks}

\author{Tianyu~Zhang, Tao~Gong, Song~Han,~\IEEEmembership{Member,~IEEE,} Qingxu~Deng,~\IEEEmembership{Member,~IEEE,} Xiaobo~Sharon~Hu,~\IEEEmembership{Fellow,~IEEE} %
\IEEEcompsocitemizethanks{\IEEEcompsocthanksitem T. Zhang is with the Qingdao University, China. Email: tzhang4@nd.edu.
\IEEEcompsocthanksitem T. Gong and S. Han are with the University of Connecticut, Storrs, CT, 06269. Email: \{tao.gong,song.han\}@uconn.edu.
\IEEEcompsocthanksitem X. S. Hu is with the University of Notre Dame, Notre Dame, IN, 46556. Email: shu@nd.edu.
\IEEEcompsocthanksitem Q. Deng is with the Northeastern University, Shenyang 110819, China. Email: dengqx@mail.neu.edu.cn.}
\thanks{The first two authors have equal contribution to this work.}
}



\IEEEtitleabstractindextext{
\begin{abstract}

Along with the rapid growth of Industrial Internet-of-Things (IIoT) applications and their penetration into many industry sectors, real-time wireless networks (RTWNs) have been playing a more critical role in providing real-time, reliable and secure communication services for such applications. A key challenge in RTWN management is how to ensure real-time Quality of Services (QoS) especially in the presence of {\han unexpected disturbances} and lossy wireless links. Most prior work takes centralized approaches for handling disturbances, which are slow and subject to single-point failure, and do not scale. To overcome these drawbacks, this paper presents a fully distributed packet scheduling framework called \fpas{}. \fpas{} aims to provide guaranteed fast response to unexpected disturbances while achieving {\han minimum performance} degradation for meeting the timing and reliability requirements of all critical tasks. To combat the scalability challenge, \fpas{} incorporates several key advances in both algorithm design and data link layer protocol design to enable individual nodes to make on-line decisions locally without any centralized control. Our extensive simulation and testbed results have validated the correctness of the \fpas{} design and demonstrated its effectiveness in providing fast response for handling disturbances {\han while ensuring the designated QoS requirements}. 

\end{abstract}
\begin{IEEEkeywords}
Real-time wireless networks, disturbances, distributed and reliable packet scheduling.
\end{IEEEkeywords}}

\maketitle

\IEEEpeerreviewmaketitle
\setlength{\textfloatsep}{0.5\baselineskip}
\setlength{\intextsep}{0.5\baselineskip}
\section{Introduction}\label{sec:intro}
\eat{
1. Why we study RTWN: discuss their applications, etc.

2. The challenges of studying RTWN including (i) the explosive growth of IoT applications especially in terms of their scale and complexity, and (ii) the fact that almost all RTWNs must deal with unexpected disturbances.

3. What have been done on tackling these challenges and what are their downsides.
\begin{itemize}
    \item Static approach cannot handle unexpected disturbances or must make rather pessimistic assumptions. 
    \item Centralized approach (the state-of-the-art OLS) has very high computation overhead and relatively low performance.
    \item Distributed approach (\dpas{}) relies on at least one single point in the network to make on-line decisions and therefore has limited scalability.
\end{itemize}

4. A big picture of what problem we studied in this work.

5. The contributions and organization of this paper.
}

Real-time wireless networks (RTWNs) are fundamental to many Industrial Internet-of-Things (IIoT) applications in a broad range of fields such as military, civil infrastructure and industrial automation~\cite{Hei_INFOCOM13, Gatsis_ACC13, Karbhari_Elsevier09}. These applications have stringent {\extend timing and reliability requirements} to ensure timely collection of environmental data and {\han reliable} delivery of control decisions. The Quality of Service (QoS) offered by a RTWN is thus often measured by how well it satisfies the end-to-end (from sensors via controllers to actuators) deadlines of the real-time tasks executed in the RTWN. Packet scheduling in RTWNs plays a critical role in achieving the desired QoS. Though packet scheduling in RTWNs has been studied for a long time, the explosive growth of IIoT applications especially in terms of their scale and complexity has dramatically increased the level of difficulty in tackling this inherently challenging undertaking. The fact that most RTWNs must deal with unexpected disturbances {\extend and the {\han lossy nature of} wireless links in industrial environments further aggravates the problem.}

{\extend
Unexpected disturbances {\han in RTWNs in general} can be classified into {\em internal} disturbances within the network infrastructure ({\em e.g.}, link failure due to multi-user interference or weather related changes in channel signal to noise ratio (SNR)) and {\em external} disturbances from the environment being monitored and controlled ({\em e.g.}, detection of an emergency, sudden pressure or temperature changes). 
When an external disturbance is detected by a certain sensor node, the workload associated to this sensor node needs to be changed for a certain time duration to more frequently monitor the environment.
Many centralized dynamic scheduling approaches have been proposed in the literature, but most of them are designed for handling changes in network resource supply ({\em e.g.}, \cite{Crenshaw_TECS07, Shen_WN13, ferrari2012low}). Studies on addressing external disturbances in RTWNs, the focus of this paper, are relatively few. {Most of those work rely on centralized decision making and assume reliable network environments. This motivates us to explore a fully distributed framework for handling external disturbances in lossy RTWNs.}
In the rest of the paper, we simply refer to external disturbance as disturbance.

The challenge of handling disturbances in RTWNs comes from the unpredictability of disturbance occurrence at run time. Specifically, it is generally unknown when/which disturbance {\han will occur} and what is the network status at that point ({\em e.g.}, how many packets have been delivered to their destinations). Since it is computationally infeasible to enumerate all possibilities before the network starts, on-line dynamic scheduling approaches is required to react {\han fast} to unexpected workload changes incurred by disturbances.
}

\eat{
Since it is computationally infeasible to enumerate all possibilities before network starts, people resort to using on-line dynamic scheduling approaches to react to unexpected workload changes incurred by disturbances ({\em e.g.},~\cite{Chipara_RTSS07, Li2015Incorporating, Sha_RTSS13, Chipara_ECRTS11, hong2015online, zimmerling2017adaptive}). 
The most recent dynamic method appears in \cite{zhang2017distributed} in which authors propose a distributed dynamic packet scheduling framework, referred to as \dpas{}, to handle disturbances in RTWNs.
The main idea of \dpas{} is to rely on a centralized control point in the network, {\em e.g.} the gateway, to generate a dynamic schedule when a disturbance occurs and propagate a minimum amount of necessary information to the network. After all nodes receive such information, they are able to generate a consistent dynamic schedule in a distributed manner to start handling the disturbance. Though \dpas{} can achieve good performance in terms of minimizing the number of dropped packets to guarantee real-time deadlines of critical packets when disturbance occurs, it suffers from limited scalability since a centralized control point is required. It becomes a significant limitation as RTWNs start to be deployed over large geographic area ({\em e.g.}, thousands of devices over an oil field). Moreover, \dpas{} may need an extremely long response time to handle a disturbance especially for large-scale networks since the system cannot start to handle the disturbance until the dynamic schedule is disseminated to the whole network via broadcast packets. All these drawbacks lead to degraded system performance when disturbances occur.
}

{\extend The existence of lossy wireless links in the industrial environments raises another challenge in handling disturbances in RTWNs. Specifically, the uncertainty of lossy links in the network introduces packet losses with a certain non-zero possibility. Packet loss in a sensing process can significantly degrade the data freshness, and packet loss in a feedback control may lead to system instability and cause safety concerns. Further, if a packet that delivers disturbance-related information is lost, it may cause catastrophe to the system. Thus, most industrial RTWNs require a desired end-to-end Packet Delivery Ratio (PDR), {\em e.g.} {\han $99\%$}, for all packets running in the system.}

In this work, we introduce a fully distributed packet scheduling framework, referred to as \fpas{}, to handle disturbances in lossy RTWNs.\footnote{An earlier version of the paper appeared in \cite{fdpas}.} 
\fpas{} makes on-line decisions locally without {\em any} centralized control point 
when disturbances occur. This is achieved by sending the disturbance information only to a subset of all nodes via the routing paths of the tasks running in the network. 
In such a manner, a broadcast task is no longer needed in \fpas{} for notifying all nodes about the disturbance information, which significantly reduces the response time to handle the disturbance. To ensure this partial disturbance propagation scheme works properly, we need to overcome several challenges. For example, to avoid transmission collision among different nodes with inconsistent schedules, we propose a multi-priority wireless packet preemption mechanism called {\pmac{}} in the data link layer to ensure that high-priority packets can always be delivered by preempting the transmissions of low-priority packets.
{\extend Further, to minimize the timing and reliability degradation, we formulate a transmission dropping problem to determine a temporary dynamic schedule for individual nodes to handle the disturbance.} We prove that the transmission dropping problem is NP-hard, and introduce an efficient heuristic to be executed by individual nodes locally. Both the \pmac{} design and the dynamic schedule construction method (they jointly comprise the FD-PaS framework) are implemented on our RTWN testbed. Our extensive performance evaluation validates the correctness of the FD-PaS design and demonstrates its effectiveness in providing fast response for handling disturbances. 

{\extend 
}

The remainder of this paper is organized as follows. The related work is dicussed in Section~\ref{sec:related} and Section~\ref{sec:model} describes the system model. Section~\ref{sec:framework} gives an overview of the \fpas{} framework. We discuss how to propagate disturbances and avoid transmission collisions in Section~\ref{sec:propagating} and~\ref{sec:collisions}, respectively. 
{\extend Section~\ref{sec:dropping} formulates the dynamic transmission dropping problem and presents the method to determine the time duration for handling disturbance. Section~\ref{sec:dynamic} discusses the dynamic schedule generation in both reliable and lossy RTWNs.} Performance evaluation are summarized in Section~\ref{sec:simulation}. We conclude the paper and discuss future work in  Section~\ref{sec:conclusions}.

\section{Related Work}\label{sec:related}

Network resource management in RTWNs in the presence of unexpected disturbances has drawn a lot of attention in recent years. Traditional static packet scheduling approaches ({\em e.g.}, \cite{Han_RTAS11, Leng_RTSS14, Saifullah_RTSS10}), where decisions are made offline or only get updated infrequently can support deterministic real-time communication, but either cannot properly handle unexpected disturbances or must make rather pessimistic assumptions. Many centralized dynamic scheduling approaches for handling internal disturbances have been proposed ({\em e.g.}, \cite{Crenshaw_TECS07, Shen_WN13, ferrari2012low}). Studies on addressing external disturbances are relatively few and mostly rely on centralized decision making. 
The approach in~\cite{Sha_RTSS13} stores a predetermined number of link layer schedules in the system and chooses the appropriate one when disturbances are detected. However, this approach is either incapable of handling arbitrary disturbances or needs to make some approximation. 
Both \cite{Chipara_ECRTS11} and \cite{zimmerling2017adaptive} support admission control in response to adding/removing tasks for handling disturbances in the network. They however do not consider scenarios when not all tasks can meet their deadlines.
The protocol in~\cite{Li2015Incorporating} proposes to allocate reserved slots for occasionally occurring emergencies (i.e., disturbances), and allow regular tasks to steal slots from the emergency schedule when no emergency exists. However, how to satisfy the deadlines of regular tasks in the presence of emergencies is not considered. 

In recent years, a number of algorithms have been designed for packet scheduling in Time Slotted Channel Hopping (TSCH) networks, in both centralized ({\em e.g.} \cite{palattella2013optimal,soua2012modesa,soua2013musika}) and distributed manner ({\em e.g.} \cite{tinka2010decentralized,morell2013label,soua2016wave}). Most of those approaches, however, assume static network topologies and fixed network traffic which limit their applications in dynamic networks. To overcome this drawback, \cite{duquennoy2015orchestra} proposes Orchestra, a distributed scheduling solution that schedules packet transmissions in TSCH networks to support real-time applications. 
However, Orchestra does not consider real-time constraint, {\em i.e.}, ignores the hard deadlines associated with tasks running in the network. It only provides best effort but no guarantee on the end-to-end latency of each task.

In \cite{hong2015online}, a centralized dynamic approach, named OLS, to handle disturbances in RTWNs is proposed. 
OLS is built on a dynamic programming based approach which can be rather time consuming even for relatively small RTWNs. Moreover, OLS may drop more periodic packets than necessary due to the limited payload size of the packet in RTWNs and thus further degrade the system performance. 
To overcome the drawbacks of OLS, \dpas{} in \cite{zhang2017distributed,zhang2018distributed} proposes to offload the computation of the dynamic schedules to individual nodes locally 
by leveraging their local computing capabilities, that is, letting each node construct its own schedule so as to achieve better performance than OLS in terms of fewer dropped packets and lower time overhead. However, as observed from the motivating example presented in \cite{fdpas}, centralized approaches, including {\dpas{}}, suffer from long disturbance response time especially in large RTWNs.

Most MAC layer designs for supporting packet prioritization are based on star topology. For example, the wireless arbitration (WirArb) method~\cite{zheng2016wirarb} is designed to use different frequencies to indicate different priorities. It only supports star topology where the gateway keeps sensing the arbitration signals and determines which user has a higher priority to access the channel. \cite{shao2014multi} studies a similar problem in the context of vehicular Ad Hoc networks. The proposed multi-priority MAC protocol has seven channels, among which one is the public control channel (CCH) for safety action messages and the others are service channels for non-safety applications. The protocol transmits packets of different priorities with optimal transmission probabilities in a dynamic manner. The PriorityMAC~\cite{prioritymac} proposes to add two very short sub-slots before each time slot to indicate the priority. Four priority levels are defined but only three levels of over-the-air preemption can be achieved. The last priority level is only used for buffer reordering. In PriorityMAC, a higher priority packet indicates the priority in the sub-slots to deter the transmissions of lower priority packets. PriorityMAC is also based on star topology so each device must be directly connected to the coordinator.

{\extend 
A rich set of methods have been designed for RTWNs to improve the reliability of wireless packet transmission over lossy links in most RTWN solutions ({\em e.g.}, WirelessHART~\cite{song2008wirelesshart}, ISA 100.11a~\cite{isa10011a}, and 6TiSCH~\cite{dujovne20146tisch}). \cite{Han_RTAS11} proposed a set of reliable graph routing algorithms in WirelessHART networks to explore path diversity to improve reliability. 
\cite{probabilistic, flexible} proposed algorithms to allocate a necessary number of retransmision time slots to guarantee a desired success ratio of packet delivery. 
However, all aforementioned studies focus on packet scheduling in static RTWN settings over lossy links, and cannot be easily extended to handle abruptly increased network traffic caused by unexpected disturbances.
}
\section{System Model}
\label{sec:model}

We adopt the system architecture of a typical RTWN, in which multiple sensors and actuators are wirelessly connected to a controller node directly or through relay nodes. 
(Note that the controller node is for initial network setup and performing control computations. \fpas{} does not need it for making any on-line decision and updating schedules.) {\hu We refer to non-controller nodes as device nodes.}
We assume that {\hu all device nodes} have routing capability and are equipped with a single omni-directional antenna to operate on a single channel in half-duplex mode. 
The network is modeled as a directed graph $G = (V, E)$, where the node set $V = \{\{V_0, V_1, \dots\}, V_c \}$ and $V_c$ represents the controller node.
{\extend
A direct link $e = (V_i, V_j)\in E$ represents a wireless link from node $V_i$ to $V_j$ with a Packet Delivery Ratio (PDR) $\lambda^L_e$, which represents the probabilistic transmission success rate on link $e$\footnote{Link PDR $\lambda^L_e$ is usually measured during the site survey and is stable during normal network operations. In case the value of $\lambda^L_e$ changes significantly, the new value is assumed to be broadcast to all the nodes in the network.}. $V_c$ connects to all the nodes via some routes and is responsible for executing relevant control algorithms. 
$V_c$ also contains a network manager which conducts network configuration and resource allocation. 
}

We use the concept of task to describe packet transmission from sensor nodes to actuator nodes. Specifically,
the system runs a fixed set of unicast tasks $\tset = \{\tau_0, \tau_1, \dots, \tau_n\}$. 
Each task $\tau_i \; (0 \leq i \leq n)$ follows a designated single routing path with $H_i$ hops and we use $\overrightarrow{L}_i =[ L_i[0],L_i[1],\dots, L_i[H_i-1]]$ to represent the routing path of $\tau_i$.
It periodically generates a packet which originates at a sensor node, passes through the controller node (not necessary for \fpas{} but to carry out control computations) and delivers a control message to an actuator.
Fig.~\ref{fig:structure} depicts an example RTWN with three tasks running on 7 nodes and task parameters are given in Table~\ref{tab:example}.

\begin{figure}[tb]
  \centering
  \resizebox{1\columnwidth}{!}{\input{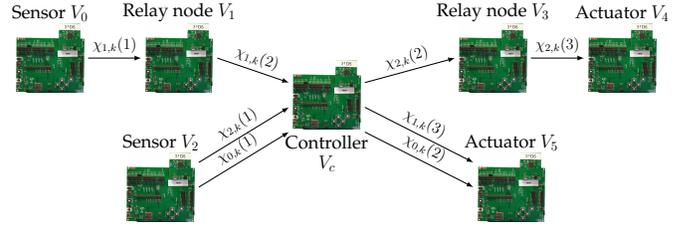}}
  \caption{\small An example RTWN with three unicast tasks.} \label{fig:structure}
  \vspace{-0.1in}
\end{figure}

\begin{table}[tb]
\renewcommand{\arraystretch}{1.3}
\centering
\caption{Task parameters for the example RTWN.}\label{tab:example}
\vspace{-1ex}
\begin{tabular}{|c|c|c|}
\hline
  Task & Routing Path & $P_i$ ($ = D_i$) \\
\hline

$\tau_0$ & $V_2 \rightarrow V_c \rightarrow V_5$ & 9 \\
$\tau_1$ & $V_0 \!\rightarrow\! V_1\! \rightarrow \! V_c \! \rightarrow\! V_5$ & 9 \\
$\tau_2$ & $V_2 \!\rightarrow\! V_c\! \rightarrow \! V_3 \! \rightarrow\! V_4$ & 10\\
\hline
\end{tabular}
\end{table}

\begin{figure}[tb]
  \centering
  {\includegraphics[width=\columnwidth]{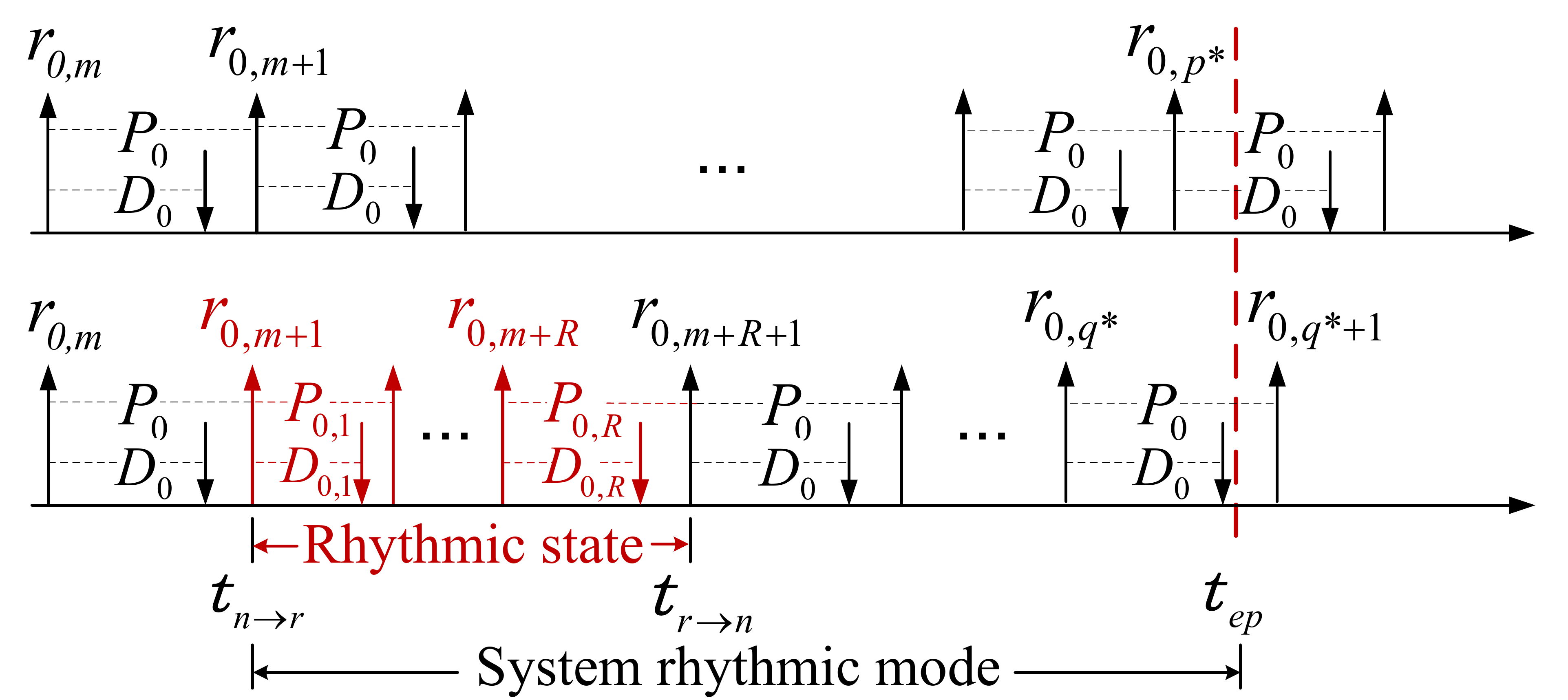}}
  \caption{
  \small Timing parameters of the rhythmic task $\tau_0$ in the system rhythmic mode. Top and bottom subfigures denote the nominal and actual release times and deadlines of $\tau_0$ respectively.} \label{fig:rhythmic}
  \vspace{-0.05in}
\end{figure}

When external disturbances ({\em e.g.}, sudden change in temperature or pressure) occur, many IIoT applications would require more frequent sampling and control actions, which in turn increase network resource demands. 
To capture such abrupt increase in network resource demands, we adopt the rhythmic task model~\cite{Kim_ICCPS12} which has been shown to be effective for handling disturbances in event-triggered control systems~\cite{hong2015online}. (Note that our \fpas{} framework is not limited to the rhythmic task model and is applicable to any task model that provides workload changing patterns for handling disturbances.) In the rhythmic task model,
each unicast task $\tau_i$ has two states: {\em nominal state} and {\em rhythmic state}. In the nominal state, $\tau_i$ follows nominal period $P_i$ and nominal relative deadline $D_i (\leq P_i$), which are all constants.
When a disturbance occurs, $\tau_i$ enters the rhythmic state in which its period and relative deadline are first reduced in order to respond to the disturbance, and then gradually return to their nominal values by following some monotonically non-decreasing pattern. We use vectors $\overrightarrow{P_i}=[P_{i,x},x=1,\dots,R]^{T}$ and $\overrightarrow{D_i}=[D_{i,x},x=1,\dots,R]^{T}$ to represent the periods and relative deadlines of $\tau_i$ when it is in the rhythmic state. 
As soon as $\tau_i$ enters the rhythmic state, its period and relative deadline adopt sequentially the values specified by  $\overrightarrow{P_i}$ and $\overrightarrow{D_i}$, respectively. 
$\tau_i$ returns to the nominal state when it starts using $P_i$ and $D_i$ again. 

Here we assume that at most one task can be in the rhythmic state at any time during the network operation.
To simplify the notation, we refer to any task currently in the \rhystate{} as {\em rhythmic task}
and denote it as $\tau_0$ while task \taui{} is a {\em periodic task} which is currently not in the \rhystate{}. 
As shown in Fig.~\ref{fig:rhythmic}, when $\tau_0$ enters the \rhystate{}, we also say that the system switches to the {\em \rhymode{}}. The system returns to the {\em \normode{}} when the disturbance has been completely handled, typically some time after $\tau_0$ returns to the \norstate{}.
Since disturbances may cause catastrophe to the system, the rhythmic task has a hard deadline when the system is in the rhythmic mode while periodic tasks can tolerate occasional deadline misses.

Each task $\tau_{i}$ consists of an infinite sequence of instances. The $k$-th instance of $\tau_{i}$, referred to as packet $\pkt_{i,k}$, is associated with release time $r_{i,k}$, deadline $d_{i,k}$ and finish time $f_{i,k}$. Without loss of generality, we assume that $\tau_0$ enters the \rhystate{} at $r_{0,m+1}$ (denoted as $t_{n \rightarrow r}$) and returns to the \norstate{} at $r_{0,m+R+1}$ (denoted as $t_{r \rightarrow n}$). Thus,  $\tau_0$ stays in its rhythmic state during $[t_{n \rightarrow r}, t_{r \rightarrow n})$, and $ t_{r \rightarrow n} = t_{n \rightarrow r} + \sum_{x=1}^R{P_{0,x}}$. 
Any packet of $\tau_0$ released in the system \rhymode{} is referred to as a {\em rhythmic packet} while the packets of task \taui{} are {\em periodic packets}.
The delivery of packet $\chi_{i,k}$ at the $h$-th hop is referred to as a transmission denoted as $\chi_{i,k}(h)\ (1 \leq h \leq H_i)$. 

{\extend
Traditionally, RTWNs employ Link-based Scheduling (LBS) to allocate time slots for individual tasks where each slot is allocated to a link by specifying the sender and receiver~\cite{de2014ieee}. If packets from different tasks share a common link and are both buffered at the same sender, their transmission order is decided by a node-specified policy (e.g., FIFO). This approach introduces uncertainty in packet scheduling and may violate the {\han end-to-end (e2e) timing constraints} on packet delivery. To tackle this problem, \emph{Transmission-based Scheduling (TBS)} and \emph{Packet-based Scheduling (PBS)} are proposed in~\cite{zhang2017distributed} and~\cite{flexible}, respectively, to construct deterministic schedules. Each of the two scheduling models has its own advantages and disadvantages and is preferred in different usage scenarios as discussed in~\cite{flexible}. Hence, we consider both  models in our \fpas{} framework. 

In the TBS model, each time slot is allocated to the transmission of a specific packet $\chi_{i,k}$ at a particular hop $h$ 
or kept idle. 
Once the network schedule is constructed, packet transmission in each time slot is unique and fixed.
In the PBS model, each time slot is allocated to a specific packet $\chi_{i,k}$ 
or kept idle. 
Within each time slot assigned to $\chi_{i,k}$, every node along $\chi_{i,k}$'s routing path decides the action to take (e.g., transmit, receive or idle), depending on whether the node has received $\chi_{i,k}$ or not. 
For example, consider a task $\tau_0$ with two slots being assigned in each period. In the TBS model, the first and second slots are dedicated for $\tau_0$'s first and second hops, respectively. In the PBS model, the two slots are allocated to each packet of $\tau_0$ and the second slot can be used to transmit $\tau_0$'s first hop if the transmission fails in the first slot.

Since each link $e$ in the network may suffer packet losses, i.e., $\lambda^L_e <1$, 
packet transmissions may fail, which can significantly affect the timely delivery of real-time packets. To handle such cases, a retransmission mechanism is commonly employed in RTWNs~\cite{song2008wirelesshart,dujovne20146tisch}. Specifically, if a sender node does not receive any ACK from the receiver node within the current slot, it automatically retransmits the packet in the next possible time slot.

To quantify the reliability requirement of the e2e packet delivery for each task, a \emph{required} e2e PDR for $\tau_i$, denoted as $\lambda^R_i$, is introduced. Based on $\lambda^R_i$, the transmission of any packet of $\tau_i$ is reliable if and only if the achieved e2e PDR of $\tau_i$ is larger than or equal to $\lambda^R_i$, i.e., $\lambda_{i,k} \geq \lambda^R_i$. 
To simplify presentation, we assume that all tasks in the network share a common required e2e PDR value, denoted as $\lambda^R$. 
However, our proposed approach can be easily extended to support different $\lambda^R$'s for different tasks.
Table~\ref{tab:notation_table} summarizes the frequently used symbols in this paper.

Based on the above system model, the problem that we aim to solve in this paper is presented as follows.

\noindent
\prob: Assume that for a given RTWN, a static schedule is provided which can guarantee both the e2e timing and reliability requirements of all tasks when there are no disturbances. That is, required number of slots are assigned for each packet (either in the TBS model or PBS model) in the system nominal mode.
Upon detection of a disturbance at $r_{0,m}$ 
(a release time of $\tau_0$'s packet\footnote{We assume that disturbances can be detected only at the time when the sensor samples the environment data, {\em i.e.}, the release time of a certain packet.}), determine the dynamic schedule in the system rhythmic mode such that (i) the system can start handling rhythmic packets no later than $r_{0,m+1}=r_{0,m}+P_0$, (ii) timing and reliability requirements of all the rhythmic packets are satisfied, 
and (iii) the system can safely return to the nominal mode after which all packets can be reliably delivered by their nominal deadlines.
{\tz The objective is to minimize the total reliability degradation on all packets from periodic tasks in the system rhythmic mode.}

Constraint (i) ensures that disturbances can be handled in the earliest possible time ({\em i.e.}, before the nominal arrival time of the next packet). If Constraint (i) were violated, the corresponding control system could become unstable or suffer from severe performance degradation. The meaning of Constraints (ii) and (iii) are self explanatory. 

{\tz 
It has been shown through a motivational example in \cite{fdpas} that centralized packet scheduling approaches ({\em e.g.} OLS and \dpas{}) have two main drawbacks when solving the above problem. First, they rely on a single point ({\em e.g.} the controller) in the network to make on-line decisions for handling the disturbance. This is a significant roadblock in scaling up the packet scheduling framework to be deployed in large-scale RTWNs.
Secondly, centralized approaches suffer from a considerably long response time to the disturbances especially for large RTWNs. This is because centralized approaches require to first send the disturbance information to the controller. After that, a broadcast packet is needed to disseminate the generated dynamic schedule to all nodes in the network to handle the disturbance. In this work, we propose a new approach to address these drawbacks.}
}

\begin{table*}[tb]
 \caption{Summary of important notations {\sh and definitions}}
 \vspace{-0.1in}
 \label{tab:notation_table}
 \centering
{\footnotesize
\begin{tabular}{|c|c|c|c|}
\hline
Notation                                      & Definition                                                & Notation                        & Definition                                                                \\ \hline
$V_{j}$ ($j = 0,1,\ldots$), $V_c$             & Device nodes and controller node                         & \tnr,                           & Slot when $\tau_{0}$ leaves its nominal state                               \\ \cline{1-2}
$\tau_i (0 \leq i \leq n)$                    & Unicast tasks                                            & \trn                            & and its rhythmic state, respectively                      \\ \hline
$H_i$                                         & Number of hops of $\tau_{i}$                             & $t_{sp}$, $t_{ep}$,             & Start point, end point,  end point candidate                  \\ \cline{1-2}
$P_{i}$ ($D_{i}$)                              & Nominal period (deadline) of $\tau_i$                    & $t_{ep}^c$, $t_{ep}^u$          & and end point upper bound                \\ \hline
{\rule{0pt}{1.35em}$\overrightarrow{P_i}$} ($\overrightarrow{D_i}$) & Rhythmic period (deadline) vector of $\tau_i$ & $S$, $\dynamic$             & Static schedule and dynamic schedule                                       \\ \hline
$\pkt_{i,k}$                                & The $k$-th released packet of task $\tau_i$               & \dnodes{}                      & Set of nodes receiving the disturbance information                                               \\ \hline
$\pkt_{i,k}(h)$                               & The {\em h}-th transmission of packet $\pkt_{i,k}$      & $\Psi(t)$, $\Phi(t)$              & Set of active rhythmic and periodic packets                            \\ \hline
$\lambda^R$              & Required e2e packet delivery ratio (for all tasks)       &  $\dropset[t_{sp},t)$               & Set of dropped periodic packets                                       \\ \cline{1-2}
$\lambda_{i,j}$, $\overrightarrow{R}_{i,j}$   & E2e PDR value and retry vector of $\chi_{i,j}$      & $\dropset^*[t_{sp},t)$                  & and transmissions within $[t_{sp},t)$                                  \\ \hline
$R_{i,j}[h]$           & Number of trials for $h$-th hop assigned by $\overrightarrow{R}_{i,j}$          & $\delta_{i,j}$                     & PDR degradation of $\chi_{i,j}$          \\ \hline
\end{tabular}
}
\vspace{-0.1in}
\end{table*}

\section{Overall Framework of \fpas{}}
\label{sec:framework}
\eat{
In this section, we first give a motivating example to show the deficiency of centralized packet scheduling approaches for handling rhythmic tasks. We then present an overview of the fully distributed packet scheduling framework, \fpas{}.

\subsection{Drawbacks of Centralized Approaches}\label{ssec:motivation}


In order to properly handle unexpected external disturbances, centralized dynamic scheduling approaches have been proposed, which can adapt to changes in on-line network resource demand. For example, OLS~\cite{hong2015online} can generate an on-line dynamic schedule based on a dynamic programming approach to handling disturbances modeled as rhythmic events. To further improve the performance in terms of {\extend reducing the degradation on timing requirements of periodic packets},\footnote{Both OLS and \dpas{} assume all links are reliable in the network and only consider the timing requirements of all tasks.} \dpas{}~\cite{zhang2017distributed} leverages the network-wide synchronization in RTWNs and the computing capability of individual device nodes to generate consistent schedules locally. By effectively reducing the amount of schedule related information to be broadcast by the gateway, \dpas{} significantly improves the scalability of the dynamic schedule construction and dissemination processes.  

Centralized approaches, however, incur long latency for handling disturbances, especially in large  networks. Consider an RTWN (Fig.~\ref{fig:structure}) with 3 tasks ($\tau_0, \tau_1$ and $\tau_2$) running on 7 nodes ($V_0, \ldots, V_5$ and $V_c$) with $V_0$ and $V_2$ being sensors, $V_4$ and $V_5$ being actuators, $V_1$ and $V_3$ being relay nodes, and $V_c$ being the controller node and functioning as the gateway in centralized approaches. 
Note that since centralized approaches rely on the controller node to disseminate the dynamic schedule, broadcast task $\tau_3$ is needed.
The tasks' routing paths, periods and relative deadlines 
are given in \pref{tab:example}. 

\begin{figure}[tb]
  \centering
  \resizebox{1\columnwidth}{!}{\input{Figures/example.tikz}}
  \caption{\small An example RTWN with three unicast tasks and one broadcast task running on 7 nodes.} \label{fig:structure}
\end{figure}

\eat{
\begin{table}[b]
\renewcommand{\arraystretch}{1.3}
\centering
\caption{\small Task parameters for the motivational example}\label{tab:example}
\begin{tabular}{|m{0.56cm}<{\centering}|m{2.5cm}<{\centering}|m{1.2cm}<{\centering}|m{.56cm}<{\centering}|m{.8cm}<{\centering}|m{0.8cm}<{\centering}|}
\hline
  Task & Routing Path & $P_i$ ($ = D_i$) & $\overrightarrow{P_i}$ & \mbox{$\overrightarrow{D_i}$} \\
\hline

$\tau_0$ & $V_2 \rightarrow V_g \rightarrow V_5$ & 9 & $[4, 6]^T$ & $[3, 5]^T$ \\
$\tau_1$ & $V_0 \!\rightarrow\! V_1\! \rightarrow \! V_g \! \rightarrow\! V_5$ & 9 & N/A & N/A \\
$\tau_2$ & $V_2 \!\rightarrow\! V_g\! \rightarrow \! V_3 \! \rightarrow\! V_4$ & 10 & N/A & N/A \\
$\tau_3$ & $V_g \rightarrow *$ & 18 & N/A & N/A \\
\hline
\end{tabular}
\end{table}
}

\begin{table}[tb]
\renewcommand{\arraystretch}{1.3}
\centering
\caption{Task parameters for the motivational example}\label{tab:example}
\vspace{-1ex}
\begin{tabular}{|m{0.56cm}<{\centering}|m{2.5cm}<{\centering}|m{1.2cm}<{\centering}|m{.56cm}<{\centering}|}
\hline
  Task & Routing Path & $P_i$ ($ = D_i$) \\
\hline

$\tau_0$ & $V_2 \rightarrow V_c \rightarrow V_5$ & 9 \\
$\tau_1$ & $V_0 \!\rightarrow\! V_1\! \rightarrow \! V_c \! \rightarrow\! V_5$ & 9 \\
$\tau_2$ & $V_2 \!\rightarrow\! V_c\! \rightarrow \! V_3 \! \rightarrow\! V_4$ & 10\\
$\tau_3$ & $V_c \rightarrow *$ & 18\\
\hline
\end{tabular}
\end{table}

Assume all tasks are synchronized and first released at time slot $0$, and each node employs an EDF scheduler to construct its local schedule (see Fig.~\ref{fig:example}). Suppose at time slot $9$, an external disturbance is detected and sensor $V_2$ sends a rhythmic event request via $\tau_0$ to the controller node. $V_c$ then determines the time slot $\tnr$ when $\tau_0$ is going to enter its rhythmic state. In order to achieve fast response to the disturbance, $\tnr$ should be set to be as early as possible, but later than the time slot when all nodes in the network receive the dynamic schedule. 
In this example, $V_c$ has to wait till time slot $26$ to broadcast the constructed dynamic schedule. Only after the broadcast packet reaches all nodes at $30$, $\tau_0$ can enter its rhythmic state at the nearest release time slot $36$. Therefore, for this example, though the disturbance is detected by the sensor at time slot $9$, the system cannot enter the rhythmic mode until slot $36$, which is three nominal periods later (instead of one nominal period as required in \prob).

From the above example, one can readily see that the centralized approaches suffer from a considerably long response time to the disturbances especially for large RTWNs. Moreover, centralized approaches rely on a single point (the gateway) in the network to make on-line packet scheduling decisions. These are the two main roadblocks in scaling up the packet scheduling framework to support handling disturbances in large-scale RTWNs. 

\eat{
when a disturbance is detected, the sensor sends a rhythmic event request via the packet of $\tau_0$ (referred to as $\pkt_{0,m}$). $t'$ denotes the time slot that the gateway receives the disturbance from $\pkt_{0,m}$. 
$V_g$ generates the dynamic schedule for handling $\tau_0$ before the upcoming broadcast packet $\pkt_{n+1,k}$ after $t'$, then, propagates to the network via $\pkt_{n+1,k}$. Suppose $t''$ is the time slot that $\pkt_{n+1,k}$ reaches all nodes in the network. $\tau_0$ enters its rhythmic state at $\tnr$, {\em i.e.}, its first release time after $t''$. That is, $\tau_0$ raises the rhythmic event request at $r_{0,m}$ while enters the rhythmic state till $\tnr$ to wait at least one broadcast packet being transmitted to the network. \footnote{In OLS, if a feasible dynamic schedule cannot be generated by the dynamic programming based approach before the upcoming broadcast packet, it has to wait till the next broadcast packet.} Therefore, the response time to the rhythmic event request of both OLS and \dpas{} can be extremely long especially in a large RTWN since a broadcast packet with longer routing path is necessary.

Besides, centralized or hybrid approaches both rely on a single point in the network to make packet scheduling decisions. This is the main difficulty in scaling up to large RTWNs. 
All the drawbacks above will cause performance degradation to the system which motivates us to design a better framework in terms of quick response for dealing with disturbances and scalability of RTWNs.

\begin{figure}[tb]
  \centering
  {\includegraphics[width=\linewidth]{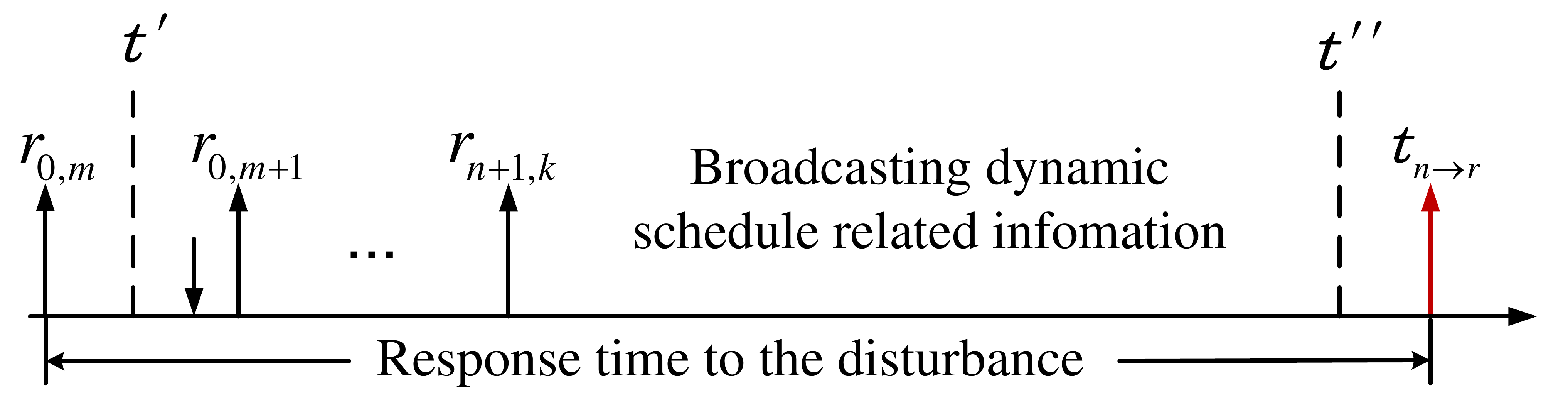}}
  \caption{
  \small Overall frameworks of OLS and \dpas{}.} \label{fig:motivation} 
\end{figure}
}
}

In order to achieve fast response to disturbances in RTWNs, in this work we propose a fully distributed packet scheduling framework, referred to as \fpas{}. The key idea of \fpas{} is to make dynamic, local schedule adaptation at each node along the path of the rhythmic task while avoiding transmission collisions from other nodes that still follow their static schedules in the system rhythmic mode.

\eat{
\begin{figure}[tb]
  \centering
  {\includegraphics[width=\linewidth]{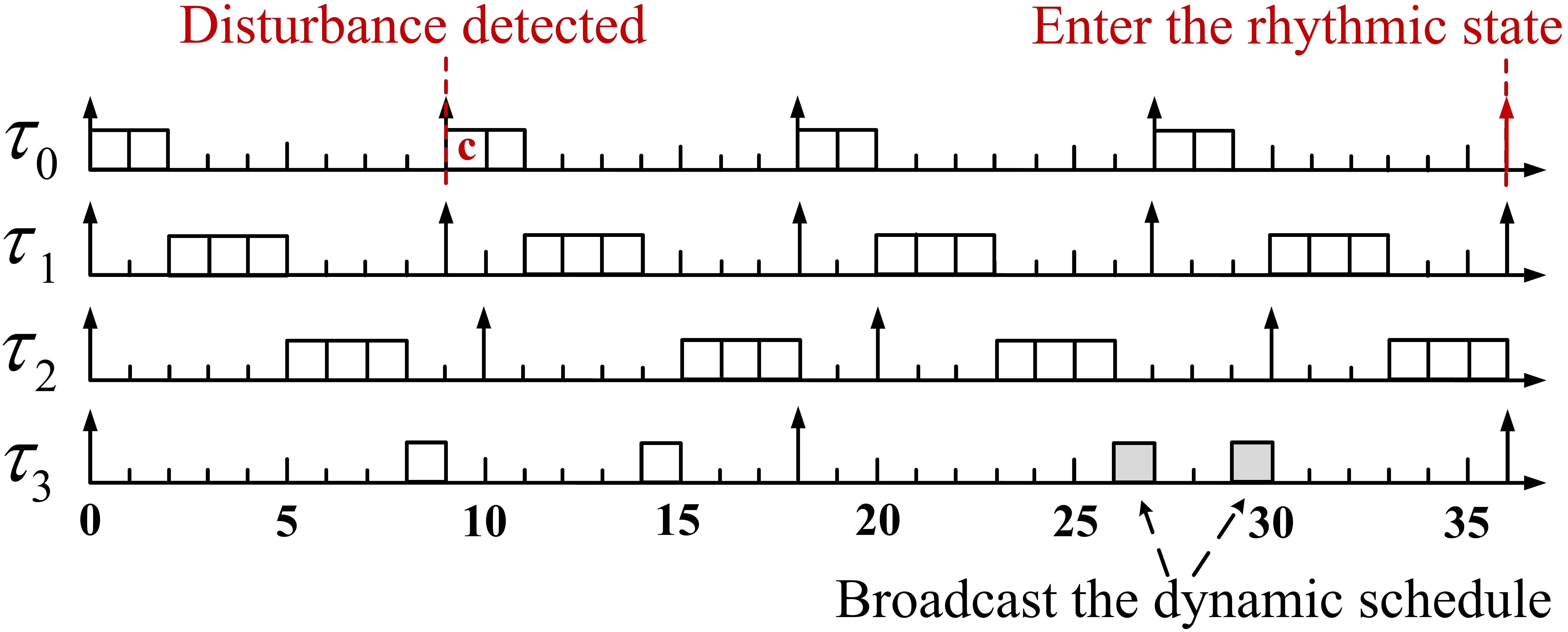}}
  \caption{\small Local EDF schedules of the tasks in the motivating example. The block with symbol $c$ denotes the transmission of the rhythmic event request. The shaded blocks denote the two transmissions of the broadcast task to propagate the dynamic schedule generated at the controller node to the whole network.} \label{fig:example}
\end{figure}
}

Fig.~\ref{fig:overview} gives an overview of the execution model of \fpas{}. After network initialization, each node generates locally a static schedule, $S$, using the local schedule generation mechanism in \dpas{} and follows $S$ to transmit packets.  When a disturbance is detected by rhythmic task $\tau_0$ at $t'=r_{0,m}$, a notification is propagated to all the nodes responsible for handling the disturbance. 
Let these nodes be $V_j \in$ \dnodes{}. Upon receiving the notification, each node in \dnodes{} determines the time duration of the network being in the rhythmic mode and generates a dynamic schedule \dynamic{} for handling the disturbance. Starting from $r_{0,m+1}$, one nominal period of $\tau_0$ after detecting the disturbance, the nodes in \dnodes{} follow \dynamic{} while all other nodes keep using static schedule $S$ to transmit periodic packets.
Thus, by not relying on a broadcast packet to disseminate the dynamic schedule generated by a centralized point in the network, \fpas{} is able to significantly reduce the response time of reacting to disturbances. For ease of discussion, in the rest of the paper, we refer to {\em disturbance response time} (DRT) as the time duration from $t'$ to the start time of system rhythmic mode and {\em disturbance handling latency} (DHL) as the time duration of system rhythmic mode (see Fig.~\ref{fig:overview}).

To ensure that \fpas{} works properly, several challenges need to be tackled. First, when a disturbance occurs, only the sensor node that has detected it knows which task will enter the rhythmic state, while the rest of the nodes in \dnodes{} that are to handle the disturbance have no knowledge about this. Second, if the nodes in \dnodes{} follow the dynamic schedule while other nodes follow the static schedule $S$, transmission collisions would occur which may cause rhythmic packets to {\extend violate their timing and reliability requirements ({\em e.g.} missing deadlines)}. Third, to properly handle disturbances, efficient methods are needed by the nodes in \dnodes{} to determine a dynamic schedule in which {\extend the reliability degradation on periodic packets is minimized}. We discuss in detail how \fpas{} tackles these challenges in the following sections.

\eat{
\begin{algorithm}[tb]
\caption{Main function of \fpas{}}
 \label{alg:overview}
\begin{algorithmic}[1]
\small
\WHILE {true}
\STATE {Every node generates and follows the static schedule locally.}
\IF {a disturbance is detected by the sensor $N_j$}
    \STATE{$N_j$ sends the disturbance to the node set \dnodes{}.}
    \STATE{Each node in \dnodes{} generates the dynamic schedule to follow.}
    \STATE{The rhythmic task enters its rhythmic state.}
\ENDIF
\ENDWHILE
\end{algorithmic}
\end{algorithm}
}

\eat{
To achieve this, there are several challenges to be tackled. 
\begin{itemize}
    \item When an unexpected disturbance occurs, how to let the necessary local nodes to know the disturbance.
    \item Since some nodes know the disturbance while others not, how to avoid transmission collisions occurred among these nodes.
    \item How to drop the minimum number of periodic packets to guarantee the deadlines of rhythmic packets.
\end{itemize}

In the following sections, we answer the three questions one by one.
}

\section{Propagating Disturbance Information}
\label{sec:propagating}
In centralized approaches, all nodes in the RTWN must know the disturbance information since a dynamic schedule must be deployed at each node. However, such a network-wide propagation mechanism does not scale and often violates constraint (i) in \prob{} as shown by the motivating example. To overcome this drawback, we propose to disseminate the disturbance information to only a subset of all nodes, denoted as \dnodes{}, to minimize the DRT. This scheme requires the following three questions be answered: (1) which nodes in the network belong to \dnodes{}, (2) how to propagate the disturbance information to nodes in \dnodes{}, and (3) does each node in \dnodes{} have sufficient time to generate the dynamic schedule before the system enters the rhythmic mode? Below we present our answers to these questions.

\begin{figure}[tb]
  \centering
  {\includegraphics[width=\linewidth]{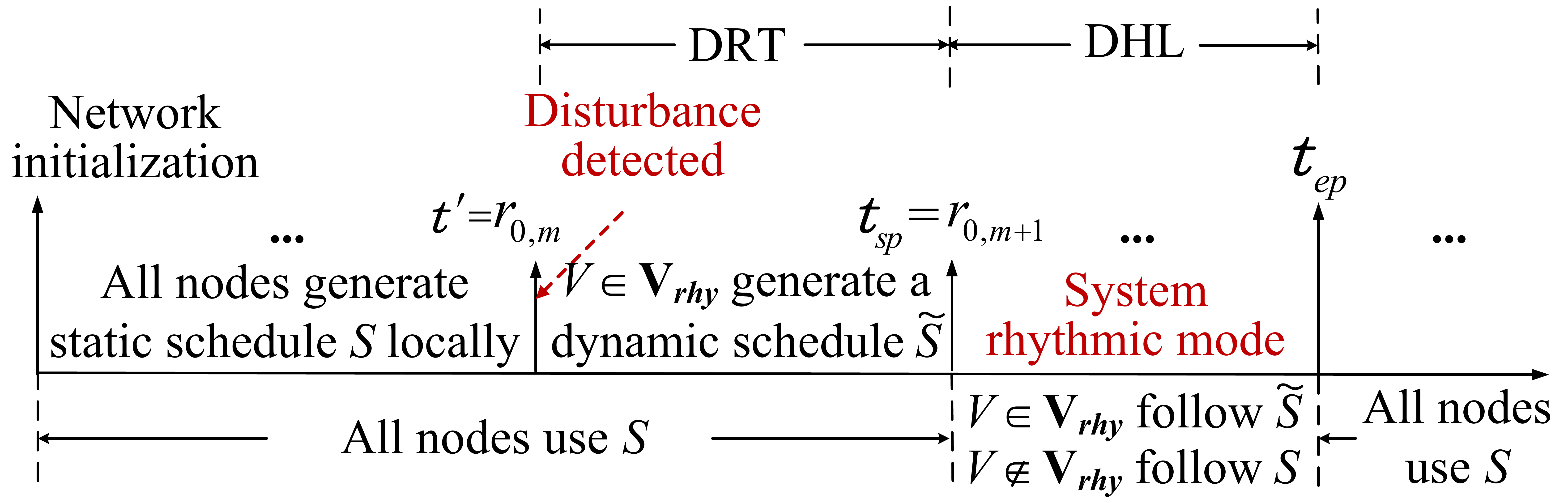}}
  \caption{\small Overview of the execution model of \fpas{}.} \label{fig:overview}
\end{figure}

Consider questions (1) and (2) above. Recall that when a disturbance occurs, the rhythmic task $\tau_0$ will enter its rhythmic state following reduced periods and deadlines as specified in $\overrightarrow{P_0}$ and $\overrightarrow{D_0}$. An updated schedule is needed to accommodate the increased workload of $\tau_0$. To ensure that each (re)transmission $\pkt_{0,k}(h)$ can be successful, both the sender and the receiver of $\pkt_{0,k}(h)$ must follow the same schedule. Thus, all nodes along the routing path of $\tau_0$ must know the disturbance information to generate a consistent dynamic schedule, and should be included in \dnodes{}. For example, \dnodes{} $= \{V_2, V_c, V_5\}$ for the example in Fig.~\ref{fig:structure} when $\tau_0$ enters the rhythmic state. When a disturbance is detected at $r_{0,m}$, its information can be piggybacked onto $\pkt_{0,m}$ and transmitted to all nodes in \dnodes{}. Propagating disturbance information in this manner guarantees that all nodes in \dnodes{} receive the disturbance information within one nominal period of $\tau_0$, {\em i.e.}, $P_0$, since the static schedule ensures that each task {\han is assigned with the required} number of transmission and retransmission slots along its routing path within $P_0$ in order to {\extend meet the e2e timing and reliability requirements.}

Now consider question (3). As required in Constraint (i) of {\prob}, the system should start handling the rhythmic packets from $r_{0,m+1}$ after the disturbance is detected at $r_{0,m}$. This requires that (i) 
the disturbance information be successfully propagated to the relevant nodes before
$\tau_0$ enters its rhythmic state at $r_{0,m+1}$, and (ii) each node in \dnodes{} completes the construction of the dynamic schedule before it starts receiving/transmitting the first rhythmic packet. The propagation scheme discussed above ensures that condition (i) is met. Regarding condition (ii), our prior work showed that one idle slot (10ms) is sufficient for a typical device node in RTWNs ({\em e.g.}, TI CC2538 SoC) to complete its local schedule computation~\cite{zhang2017distributed}. The theorem below establishes that such an idle slot indeed exists within the time frame specified in condition (ii). 

\eat{
Centralized approaches uses a broadcast packet to propagate update information to all nodes when disturbance occurs. This is also the most common way used in RTWNs. However, such a method suffers from long DRT as shown in the motivating example in Section \ref{ssec:motivation}. Since \fpas{} only disseminates disturbance information to nodes on the routing path of the rhythmic task, it is not necessary to use a broadcast packet that reaches all nodes in the network. We propose to piggyback disturbance information to a packets of the rhythmic task which passes through all nodes in \dnodes{}. Thus, when a disturbance is detected at $r_{0,m}$, the sensor sends a rhythmic event declaration via $\pkt_{0,m}$ to nodes on the routing path of $\tau_0$ and $\tau_0$ can enter its rhythmic state from the next release time $r_{0,m+1}$. Instead of waiting for a broadcast packet reaching all nodes, the system can start handling the rhythmic event in one nominal period which significantly improve the DRT.}

\begin{theorem}\label{thm:idle}
 If an RTWN system is schedulable under a given static schedule, any node $V_j$ ($V_j \neq V_c$) in \dnodes{} has at least one idle slot (neither receiving nor sending any transmission) between time $t_1$ ($t_1 \geq r_{0,m}$) when it receives the disturbance information and 
 time $t_2$ ($t_2 \geq r_{0,m+1}$) when it is involved in the transmission of the first rhythmic packet after $\tau_0$ enters its rhythmic state at $r_{0,m+1}$.
\end{theorem}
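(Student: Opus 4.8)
The plan is to exploit that $V_j$ lies on $\tau_0$'s routing path together with feasibility of the static schedule, and to locate the idle slot in the window that straddles $r_{0,m+1}$. First I would pin the endpoints of $[t_1,t_2)$ relative to $r_{0,m+1}$. Since $V_j\in\dnodes{}$, it is involved with $\pkt_{0,m}$ (it is on this packet that the disturbance information is piggybacked, so $V_j$ learns it at $t_1$) and later with the first rhythmic packet $\pkt_{0,m+1}$. Because the static schedule is schedulable, $\pkt_{0,m}$ finishes its end-to-end delivery by $d_{0,m}=r_{0,m}+D_0\le r_{0,m}+P_0=r_{0,m+1}$, so every slot in which $V_j$ touches $\pkt_{0,m}$ — in particular $t_1$ and the last such slot $\hat t$ — lies strictly before $r_{0,m+1}$. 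On the other side $\pkt_{0,m+1}$ is released no earlier than $r_{0,m+1}$, and if $V_j$ is not $\tau_0$'s source it is reached only after the upstream hops, so $t_2\ge r_{0,m+1}>\hat t$. Hence $W:=[\hat t,t_2)\subseteq[t_1,t_2)$ is a nonempty interval straddling $r_{0,m+1}$ whose only $\tau_0$-transmission at $V_j$ is the one in slot $\hat t$.

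The crux is to show $V_j$ is idle somewhere in $W$. I would argue by contradiction: if $V_j$ sends or receives in every slot of $W$, then every slot of $W$ except $\hat t$ carries a transmission of a periodic task routed through $V_j$, i.e., the periodic tasks force $|W|-1$ (re)transmissions onto $V_j$ within $W$. I would then compare this with what the feasible static schedule must fit on $V_j$ over a window $W'$ containing $W$ together with the slots the static schedule reserves for the next nominal instance $\pkt_{0,m+1}$ (which lie in $[r_{0,m+1},r_{0,m+1}+D_0)$): feasibility bounds the total number of slots $V_j$ spends on $\tau_0$ plus on forced periodic transmissions within $W'$ by $|W'|$. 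Since $\tau_0$ owns at least one (re)transmission slot on $V_j$ per nominal period in the static schedule, and the forced periodic demand on $V_j$ over an interval only grows as the interval grows, combining the two statements produces a slot count strictly larger than $|W'|$, the desired contradiction. By the prior observation that one idle slot suffices for $V_j$'s local schedule computation, this completes the argument.

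I expect the second step to be the main obstacle: converting ``$W$ is saturated by periodic transmissions'' into a genuine violation of static-schedule feasibility requires a demand-bound style accounting that must handle the multi-hop precedence constraints among periodic transmissions, the reserved retransmission slots, and the possibility that $t_2$ lies more than one nominal period past $\hat t$. A cleaner route that I would try first is to restrict the search to $[\hat t,r_{0,m+1})$ (for the source node, to the whole nominal period $[r_{0,m},r_{0,m+1})$, where $t_1=r_{0,m}$): over a single nominal period the feasible static schedule must accommodate $\pkt_{0,m}$ in full together with all periodic demand on $V_j$, and the schedulability hypothesis — with $D_i\le P_i$ for every task and $\tau_0$'s own per-period demand at $V_j$ counted — keeps that period from being completely packed on $V_j$, giving the idle slot directly inside $[t_1,t_2)$. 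The hypothesis $V_j\ne V_c$ is essential precisely here, since the controller carries every task and may be saturated, so no such slot can be guaranteed for it.
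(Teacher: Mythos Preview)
The paper's proof is much shorter and proceeds by a different mechanism. It simply invokes a lemma from~\cite{zhang2017distributed} (restated as Lemma~\ref{lem:idle}): for any non-controller node $V_j$ and any task $\tau_i$ routed through $V_j$, there is at least one idle slot at $V_j$ among any three consecutive transmissions of $\tau_i$ passing through $V_j$. Since every task has at least two hops, between the transmission $\chi_{0,m}(h)$ at $t_1$ and the transmission $\chi_{0,m+1}(h)$ at $t_2$ there is at least one further $\tau_0$-transmission through $V_j$, giving three consecutive such transmissions; the lemma then hands you the idle slot directly. No demand-bound accounting is needed.

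Your approach is essentially an attempt to re-derive the content of that lemma from scratch, and the obstacle you yourself flag is genuine and not resolved by your sketch. ``Schedulable under the static schedule'' does not by itself bound the occupancy of an arbitrary non-controller node over an arbitrary window: a node through which many tasks route can be busy in every slot of a long interval while the global schedule remains feasible. Your ``cleaner route'' assertion that schedulability ``keeps that period from being completely packed on $V_j$'' is precisely the statement requiring proof, and nothing in the outline supplies it; comparing forced periodic demand on $V_j$ against $|W'|$ does not yield a contradiction without an additional structural fact about how much of $\tau_0$'s reservation actually lands on $V_j$. The structural ingredient that makes Lemma~\ref{lem:idle} work is that a non-controller node is incident to at most two hops of any single task, so three consecutive $\tau_i$-transmissions through it necessarily straddle an instance boundary, and the per-node demand over that span then leaves slack. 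If you want a self-contained argument, prove that three-transmission statement directly rather than pursuing a generic saturation-versus-feasibility contradiction.
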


\begin{IEEEproof}
We first recall the following lemma from~\cite{zhang2017distributed}.

\begin{lemma}\label{lem:idle}
If an RTWN system is schedulable under a given static schedule, {\em i.e.} each packet completes all its transmissions before the deadline, for any node $V_j \neq V_c$ and task $\tau_i$ passing through $V_j$, there exists at least one idle slot at $V_j$ among any three consecutive transmissions of $\tau_i$ passing $V_j$.
\end{lemma}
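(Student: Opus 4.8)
The plan is to prove the statement directly, by analyzing the footprint that a single task $\tau_i$ leaves on a device node and locating the idle slot at a release boundary. First I would fix a device node $V_j\neq V_c$ on the routing path of $\tau_i$ and use that $V_j$ is half-duplex: in every slot it can at most send or receive, not both. If $V_j$ is the source (resp.\ sink) of $\tau_i$ it participates in exactly one transmission per packet (the first hop out, resp.\ the last hop in); if $V_j$ is an internal relay that receives $\tau_i$ on hop $p$ and forwards it on hop $p+1$, then for each packet $\chi_{i,k}$ it is busy in exactly two slots, a receive $\chi_{i,k}(p)$ followed by a send $\chi_{i,k}(p+1)$, in that order. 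Hence the time-ordered list of $V_j$'s $\tau_i$-transmissions follows the pattern $\ldots,\mathrm{RX}_k,\mathrm{TX}_k,\mathrm{RX}_{k+1},\mathrm{TX}_{k+1},\ldots$ (a single entry per packet in the source/sink case), so any three consecutive entries necessarily contain a transition from one packet $\chi_{i,k}$ to the next $\chi_{i,k+1}$, i.e.\ they straddle the release boundary $r_{i,k+1}$.

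The core of the argument is to show that $V_j$ has slack around this boundary. Let $f_k$ be the slot of $V_j$'s last $\tau_i$-transmission for $\chi_{i,k}$ (its forward on hop $p+1$) and $g_{k+1}$ the slot of its first $\tau_i$-transmission for $\chi_{i,k+1}$ (its receive on hop $p$). Since the schedule is feasible, $\chi_{i,k}$ must complete hop $H_i$ by $d_{i,k}$; reserving room for the $H_i-p-1$ downstream hops gives $f_k\le d_{i,k}-H_i+p$. Dually, $\chi_{i,k+1}$ cannot reach $V_j$ before traversing its first $p-1$ upstream links after release, so $g_{k+1}\ge r_{i,k+1}+p-1$. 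Subtracting and using $d_{i,k}=r_{i,k}+D_i$ and $r_{i,k+1}=r_{i,k}+P_i$ yields
\[
g_{k+1}-f_k \;\ge\; P_i-D_i+H_i-1 .
\]
Thus, provided $P_i-D_i+H_i\ge 3$ (which holds for every relay with at least three hops or with $D_i<P_i$, and a fortiori in the source/sink case, where consecutive transmissions lie in different periods), there is at least one whole slot strictly between $f_k$ and $g_{k+1}$. In that slot $\chi_{i,k}$ has already departed $V_j$ while $\chi_{i,k+1}$ has not yet arrived, so $V_j$ performs no transmission of $\tau_i$, and this slot lies inside the span of the three consecutive transmissions under consideration.

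The main obstacle is upgrading ``does nothing for $\tau_i$'' to ``idle'' in the strong sense of the statement, neither sending nor receiving anything at all, since a priori $V_j$ could use that slot for another task. I would close this gap by invoking schedulability once more: the same deadline-versus-next-release slack, applied simultaneously to every task routed through $V_j$, bounds the total demand on $V_j$ over the boundary window and forces a genuinely empty slot there; equivalently, the static schedule built for the reliability target reserves enough transmission/retransmission slots that at least one remains unused around $r_{i,k+1}$. I would also treat separately the knife-edge case $H_i=2$ with $P_i=D_i$, where a two-hop relay can be saturated by $\tau_i$ alone; this degenerate configuration is ruled out by the feasibility margin assumed for the static schedule. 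I expect this last step, turning local per-task slack into a globally idle slot, to be the crux, with the clean period-boundary structure of the first two paragraphs doing the conceptual work.
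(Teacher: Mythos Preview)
The paper does not actually prove Lemma~\ref{lem:idle}: it is quoted verbatim from the \dpas{} paper~\cite{zhang2017distributed} and used as a black box inside the proof of Theorem~\ref{thm:idle}. There is therefore no in-paper argument to compare your proposal against.

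On the merits of your attempt itself: the first two paragraphs are clean and essentially correct. Under the reliable-link assumption of~\cite{zhang2017distributed} (one slot per hop, no retransmissions), a non-controller node sees at most two $\tau_i$-slots per packet, so any three consecutive $\tau_i$-slots at $V_j$ straddle a release boundary, and your bound $g_{k+1}-f_k \ge P_i - D_i + H_i - 1$ is the right inequality. Your handling of the knife-edge $H_i=2$, $P_i=D_i$ case is unnecessarily ad~hoc, though: you invoke an unstated ``feasibility margin,'' but none is needed. In the system model every routing path goes through $V_c$, so on a two-hop path the only intermediate node is $V_c$ itself; since the lemma restricts to $V_j\neq V_c$, the relay case with $H_i=2$ simply cannot arise, and the source/sink subcase already gives the stronger gap $g_{k+1}-f_k\ge P_i-D_i+H_i$.

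The genuine gap is exactly the one you flag: your argument produces a slot in which $V_j$ does nothing \emph{for $\tau_i$}, whereas the lemma asserts a slot in which $V_j$ does nothing at all (this is how Theorem~\ref{thm:idle} uses it, to give the node time to compute). Your proposed closure---``apply the same slack simultaneously to every task through $V_j$''---is not an argument yet; different tasks have different release boundaries, and nothing forces their per-task free slots to overlap inside the specific window $[f_k,g_{k+1}]$ determined by $\tau_i$. A demand-bound argument over that window would have to show that the total number of transmissions any task can place there is strictly less than the window length, and that requires more than re-using your per-task inequality. Since the proof lives in~\cite{zhang2017distributed}, you would need to consult that paper to see how this step is actually discharged.
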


Since in our system model, sensors and actuators are connected via the controller node, every task follows a routing path with at least two hops corresponding to two transmissions {\extend (assigned with multiple transmission and retransmission slots)}. Suppose $\pkt_{0,m}(h)$ occurs at $t_1$ and is the transmission from which $V_j$ receives the disturbance information\footnote{If $V_j$ is the sensor, it detects the disturbance at $r_{0,m}$.}. There exists at least one transmission between $\pkt_{0,m}(h)$ and $\pkt_{0,m+1}(h)$
(the first transmission that $V_j$ is involved in the dynamic schedule, occurring at $t_2$). 
Then, according to Lemma~\ref{lem:idle}, $V_j$ has at least one idle slot between $\pkt_{0,m}(h)$ and $\pkt_{0,m+1}(h)$ ({\em i.e.}, between $t_1$ and $t_2$). Thus, the theorem holds.
\end{IEEEproof}

Based on Theorem~\ref{thm:idle} and the disturbance propagation time bound, the proposed partial disturbance propagation scheme guarantees that any disturbance can be promptly responded within one nominal period of the rhythmic task and Constraint (i) in {\prob} can be satisfied.

\eat{Upon receiving the disturbance information, each node in \dnodes{} generates a dynamic schedule for handling rhythmic packets. Since schedule generation takes time, we need to guarantee that each device node $V_j$ \footnote{Since the gateway can handle complex computations even in a busy slot sending/receiving a transmission, here we only consider device nodes.} (any node in the network excluding the gateway) has enough time to complete the computation before $V_j$ transmitting its first rhythmic transmission in the dynamic schedule. As measured in \cite{zhang2017distributed}, one idle slot is sufficient for device nodes nowadays to complete local computation. 
Therefore, we must prove that at least one idle slot exists between $V_j$ receiving the disturbance information and sending/receiving the first rhythmic transmission in the dynamic schedule. To achieve this, we introduce the following lemma in \cite{zhang2017distributed}.}


\eat{\begin{proof}
See proof in \cite{zhang2017distributed}.
\end{proof}}



\eat{
When an unexpected disturbance occurs, to let the necessary nodes know it, the questions need to be answered include:
\begin{itemize}
    \item which are the necessary nodes that must know the disturbance,
    \item without broadcast packets, how to let these nodes to know the disturbance,
    \item after receiving the disturbance, whether each node has enough time to generate the dynamic schedule before starting using it.
\end{itemize}
}

\section{Avoiding Transmission Collisions}
\label{sec:collisions}

According to the disturbance propagation mechanism presented in Section~\ref{sec:propagating}, only the nodes on the path of the rhythmic task are included in \dnodes{}. Nodes in \dnodes{} construct their local schedules individually and employ them in the system rhythmic mode. All other nodes in the network follow the static schedule. With this execution model, unless the disturbance information is propagated to the entire RTWN, inconsistencies between the dynamic and static schedules in the system rhythmic mode may easily arise, which would result in transmission collisions. To ensure that the disturbances are handled appropriately, in the \fpas{} framework, the transmissions of rhythmic packets need to be always successful even in the presence of collision with other periodic packets.

In conventional RTWNs such as WirelessHART~\cite{song2008wirelesshart} and 6TiSCH~\cite{dujovne20146tisch}, TDMA-based data link layer are widely adopted to provide synchronized and collision-free channel access. In addition, most of those protocols employ the Clear Channel Assessment (CCA) operation at the beginning of each transmission for collision avoidance. CCA, however, cannot prioritize packet transmissions. When multiple transmissions happen in the same time slot sharing the same destination, it cannot guarantee the more important packets ({\em e.g.}, rhythmic packets) are granted the access to the channel. 

\eat{In IEEE 802.15.4e standards, a timeslot is allocated for a single packet transmission from the sender to the receiver. Within a timeslot, the sender transmits the packet, while the receiver acknowledges the reception of the packet if it is valid. A transmission is considered successful only if the sender receives the correct acknowledgement. If multiple transmissions happen in one timeslot within the reception range, and the strongest signal can not exceed rated Signal-to-Interference-plus-Noise Ratio (SINR), all of them will fail due to the  interference. Thus, in a normal network setup, a proper schedule is maintained so that no such interference will happen.

When a dynamic schedule for handling the disturbance is generated by the necessary nodes (the nodes involved in rhythmic task), the challenge is how to avoid transmission collisions between that of the dynamic schedule running on the necessary nodes and the original schedule running on other nodes.}

To tackle this challenge, we propose an enhancement to the IEEE 802.15.4e standard~\cite{de2014ieee}, called Multi-Priority MAC (\pmac{}), to support prioritization of packet transmissions in RTWNs. Several attempts have been made in the literature towards supporting this feature. 
For example, the PriorityMAC was proposed in~\cite{prioritymac} to prioritize critical traffic in RTWNs. It introduces the concept of subslots, in which the transmitter does a very short transmission to indicate the priority of the packet to be transmitted in the following time slot. By adding two subslots {\hu before} each time slot, PriorityMAC is able to create three priority levels. Different from PriorityMAC, the design of the {\pmac{}} aims to be lightweight and scalable. In {\pmac{}}, the transmitter does not explicitly conduct a short transmission to indicate the priority. Instead it implicitly indicates the priority of the transmission by adjusting the Start-Of-Frame (SOF) time offset. Compared with PriorityMAC, {\pmac{}} is more energy efficient (by avoiding transmissions in the subslots), and able to support more priority levels.

Fig.~\ref{fig:timeslot} gives a comparison of the slot timing of 802.15.4e (top) and {\pmac{} (bottom)}. In a 802.15.4e time slot, the sender transmits a packet and the receiver responds with an acknowledgement (ACK) if the packet is successfully received\footnote{No acknowledgement is provided for broadcast and multicast packets.}. The packet transmission starts at \emph{TxOffset} after the start of the time slot, while the ACK starts at \emph{TxAckDelay} after the completion of the packet transmission. A long Guard Time (LGT) and a short Guard Time (SGT) are used by the receiver and sender respectively to tolerate clock drift and radio/CPU operation delays. With this standard design of 802.15.4e, if multiple senders transmit packets in the same time slot, they are not aware of the other transmissions, and thus will cause interference.  The slot timing of {\pmac{}} is presented at the bottom of Fig.~\ref{fig:timeslot}. In {\pmac{}}, instead of being set as a constant, \emph{TxOffset} is varied to implicitly indicate the priority of the packet (shown as red dashed lines). A packet with a higher priority is associated with a shorter \emph{TxOffset} to start the transmission earlier. In addition, a CCA operation will be performed before each transmission to ensure that there is no higher priority packet transmission present in the channel. This enhancement ensures that only the highest priority packet (with the shortest \emph{TxOffset}) is transmitted, and all lower priority transmissions yield to it. \eat{Note that the level of priorities that can be supported by {\pmac{}} is not limited to 4, but can be adjusted by adding or removing \emph{TxOffset}s.}

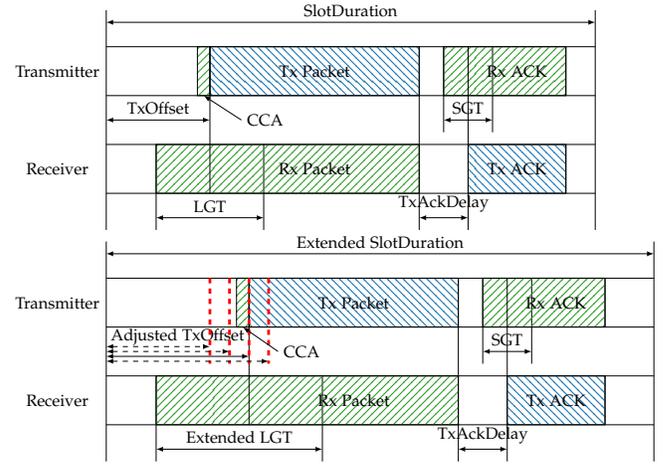
\begin{figure}[t]
\centering
\resizebox{1\columnwidth}{!}{\begin{tikzpicture}
\definecolor{TXColor}{RGB}{31,120,180}%
\definecolor{RXColor}{RGB}{51,160,44}%
\tikzset{TXBlock/.style={pattern=north west lines, pattern color=TXColor}}
\tikzset{RXBlock/.style={pattern=north east lines, pattern color=RXColor}}%
\node at (11.5,0) {};
\def\SlotDuration{10}%
\def\TxOffset{2.12}%
\pgfmathsetmacro\TxCCAOffset{\TxOffset-0.256}%
\pgfmathsetmacro\TxCCAText{(\TxCCAOffset +\TxOffset)/2}%
\def\TxAckDelay{1}%
\def\LongGT{1.1}%
\def\ShortGT{0.5}%
\pgfmathsetmacro\RxOffset{\TxOffset-\LongGT}%
\pgfmathsetmacro\RxStop{\TxOffset+\LongGT}%
\def\TxEnd{6.4}%
\pgfmathsetmacro\TxAckOffset{\TxEnd+\TxAckDelay}%
\pgfmathsetmacro\RxAckOffset{\TxAckOffset-\ShortGT}%
\pgfmathsetmacro\RxAckEnd{\TxAckOffset+\ShortGT}%
\pgfmathsetmacro\TxAckEnd{\TxAckOffset+2}%
\pgfmathsetmacro\TxText{(\TxOffset + \TxEnd) / 2}%
\pgfmathsetmacro\TxAckText{(\TxAckOffset + \TxAckEnd) / 2}%
\node at (-1,1.5) {Transmitter};
\node at (-1,-0.5) {Receiver};
\draw  (0,2.75) -- (0,-1.75);
\draw  (\SlotDuration,2.75) -- (\SlotDuration,-1.75);
\draw  (0,2) -- (\SlotDuration,2);
\draw  (\SlotDuration,1) -- (0,1);
\draw  (0,0) -- (\SlotDuration,0);
\draw  (\SlotDuration,-1) -- (0,-1);
\draw (\TxEnd,2) -- (\TxEnd,-1.75);
\draw [TXBlock] (\TxOffset,2) rectangle (\TxEnd,1);
\draw [RXBlock] (\TxCCAOffset,2) rectangle (\TxOffset,1);
\node at (\TxText,1.5) {Tx Packet};
\draw (\RxOffset,0) -- (\RxOffset,-1.75);
\draw (\RxStop,0) -- (\RxStop,-1.75);
\draw [RXBlock] (\RxOffset,0) rectangle (\TxEnd,-1);
\node at (\TxText,-0.5) {Rx Packet};
\draw (\TxOffset,2) -- (\TxOffset,-1);
\draw (\RxAckOffset,2) -- (\RxAckOffset,0.25);
\draw (\TxAckEnd,2) -- (\TxAckEnd,1);
\draw (\RxAckEnd,2) -- (\RxAckEnd,0.25);
\draw [RXBlock] (\RxAckOffset,2) rectangle (\TxAckEnd,1);
\node at (\TxAckText,1.5) {Rx ACK};
\draw (\TxAckOffset,2) -- (\TxAckOffset,-1.75);
\draw (\TxAckEnd,0) -- (\TxAckEnd,-1);
\draw [TXBlock] (\TxAckOffset,0) rectangle (\TxAckEnd,-1);
\node at (\TxAckText,-0.5) {Tx ACK};
\draw [arrows={latex-latex}] (0,2.5) -- node [above,midway] {SlotDuration} (\SlotDuration,2.5);
\draw [arrows={latex-latex}] (0,0.5) -- node [above,midway] {TxOffset} (\TxOffset,0.5);
\draw [arrows={latex-latex}] (\TxEnd,-1.5) -- node [above,midway] {TxAckDelay} (\TxAckOffset,-1.5);
\draw [arrows={latex-latex}] (\RxAckOffset,0.5) -- node [above,midway] {SGT} (\RxAckEnd,0.5);
\draw [arrows={latex-latex}] (\RxOffset,-1.5) -- node [above,midway] {LGT} (\RxStop,-1.5);
\node (cca) at (3.25,0.5) {CCA};
\draw [arrows={latex-}] (\TxCCAText,1) -- (cca.west);
\end{tikzpicture}}
\resizebox{1\columnwidth}{!}{\begin{tikzpicture}
\definecolor{TXColor}{RGB}{31,120,180}%
\definecolor{RXColor}{RGB}{51,160,44}%
\tikzset{TXBlock/.style={pattern=north west lines, pattern color=TXColor}}
\tikzset{RXBlock/.style={pattern=north east lines, pattern color=RXColor}}%
\node at (11.5,0) {};
\def\PriorityTick{0.4}%
\pgfmathsetmacro\SlotDuration{10 + \PriorityTick * 3}%
\pgfmathsetmacro\TxOffsetZero{2.12 + \PriorityTick * 0}%
\pgfmathsetmacro\TxOffsetOne{2.12 + \PriorityTick * 1}%
\pgfmathsetmacro\TxOffsetTwo{2.12 + \PriorityTick * 2}%
\pgfmathsetmacro\TxOffsetThree{2.12 + \PriorityTick * 3}%
\pgfmathsetmacro\TxOffset{\TxOffsetTwo}%
\pgfmathsetmacro\TxCCAOffset{\TxOffset-0.256}%
\pgfmathsetmacro\TxCCAText{(\TxCCAOffset +\TxOffset)/2}%
\def\TxAckDelay{1}%
\def\LongGT{1.1}%
\def\ShortGT{0.5}%
\pgfmathsetmacro\RxOffset{\TxOffsetZero-\LongGT}%
\pgfmathsetmacro\RxStop{\TxOffsetThree+\LongGT}%
\pgfmathsetmacro\TxEnd{\TxOffset+4.28}%
\pgfmathsetmacro\TxAckOffset{\TxEnd+\TxAckDelay}%
\pgfmathsetmacro\RxAckOffset{\TxAckOffset-\ShortGT}%
\pgfmathsetmacro\RxAckEnd{\TxAckOffset+\ShortGT}%
\pgfmathsetmacro\TxAckEnd{\TxAckOffset+2}%
\pgfmathsetmacro\TxText{(\TxOffset + \TxEnd) / 2}%
\pgfmathsetmacro\TxAckText{(\TxAckOffset + \TxAckEnd) / 2}%
\node at (-1,1.5) {Transmitter};
\node at (-1,-0.5) {Receiver};
\draw  (0,2.75) -- (0,-1.75);
\draw  (\SlotDuration,2.75) -- (\SlotDuration,-1.75);
\draw  (0,2) -- (\SlotDuration,2);
\draw  (\SlotDuration,1) -- (0,1);
\draw  (0,0) -- (\SlotDuration,0);
\draw  (\SlotDuration,-1) -- (0,-1);
\draw (\TxEnd,2) -- (\TxEnd,-1.75);
\draw [TXBlock] (\TxOffset,2) rectangle (\TxEnd,1);
\draw [RXBlock] (\TxCCAOffset,2) rectangle (\TxOffset,1);
\node at (\TxText,1.5) {Tx Packet};
\draw (\RxOffset,0) -- (\RxOffset,-1.75);
\draw (\RxStop,0) node (v1) {} -- (\RxStop,-1.75);
\draw [RXBlock] (\RxOffset,0) rectangle (\TxEnd,-1);
\node at (\TxText,-0.5) {Rx Packet};
\draw (\TxOffset,2) -- (\TxOffset,-1);
\draw (\RxAckOffset,2) -- (\RxAckOffset,0.25);
\draw (\TxAckEnd,2) -- (\TxAckEnd,1);
\draw (\RxAckEnd,2) -- (\RxAckEnd,0.25);
\draw [RXBlock] (\RxAckOffset,2) rectangle (\TxAckEnd,1);
\node at (\TxAckText,1.5) {Rx ACK};
\draw (\TxAckOffset,2) -- (\TxAckOffset,-1.75);
\draw (\TxAckEnd,0) -- (\TxAckEnd,-1);
\draw [TXBlock] (\TxAckOffset,0) rectangle (\TxAckEnd,-1);
\node at (\TxAckText,-0.5) {Tx ACK};
\draw [arrows={latex-latex}] (0,2.5) -- node [above,midway] {Extended SlotDuration} (\SlotDuration,2.5);
\draw [arrows={latex-latex}] (0,0.4) -- node [above=3pt,midway] {Adjusted TxOffset} (\TxOffset,0.4);
\draw [arrows={latex-latex}] (\TxEnd,-1.5) -- node [above,midway] {TxAckDelay} (\TxAckOffset,-1.5);
\draw [arrows={latex-latex}] (\RxAckOffset,0.5) -- node [above,midway] {SGT} (\RxAckEnd,0.5);
\draw [arrows={latex-latex}] (\RxOffset,-1.5) -- node [above,midway] {Extended LGT} (\RxStop,-1.5);
\draw [dashed,ultra thick,color=red] (\TxOffsetZero,2) -- (\TxOffsetZero,0.25);
\draw [dashed,ultra thick,color=red] (\TxOffsetOne,2) -- (\TxOffsetOne,0.25);
\draw [dashed,ultra thick,color=red] (\TxOffsetTwo,2) -- (\TxOffsetTwo,0.25);
\draw [dashed,ultra thick,color=red] (\TxOffsetThree,2) -- (\TxOffsetThree,0.25);
\draw [dashed,arrows={latex-latex}] (0,0.6) -- (\TxOffsetZero,0.6);
\draw [dashed,arrows={latex-latex}] (0,0.5) -- (\TxOffsetOne,0.5);
\draw [dashed,arrows={latex-latex}] (0,0.3) -- (\TxOffsetThree,0.3);
\node (cca) at (4,0.5) {CCA};
\draw [arrows={latex-}] (\TxCCAText,1) -- (cca.west);
 \end{tikzpicture}}
\caption{\small Slot timing of 802.15.4e (top) and MP-MAC (bottom)}
\label{fig:timeslot}
\end{figure}

Similar to the guard times, the \emph{TxOffset} values for different priorities need to be set sufficiently apart so that different senders and receivers have consensus on the priorities. In {\pmac{}}, we define \emph{PriorityTick} as the difference between two consecutive \emph{TxOffset}s. To support $k$ different priorities in {\pmac{}}, the length of the time slot, compared to the standard design, needs be extended by $(k-1) \times PriorityTick$. A longer \emph{PriorityTick} can ensure  successful packet prioritization, but either leads to longer SlotDuration and reduced network throughput, or smaller number of supported priorities if the size of the time slot is fixed. Since \emph{PriorityTick} is a hardware-dependent parameter, we will elaborate the selection of {\em PriorityTick} in our testbed experiments and demonstrate the effectiveness of {\pmac{}} in Section~\ref{sec:testbed}.

\eat{
\Note{Han: this section can be moved to the testbed section. It talks about the details.}
Moreover, radio synchronization is performed upon packet reception. The clock is adjusted by comparing the measured Start-Of-Frame (SOF) timestamp with the one expected (\emph{TxOffset}). Since the \pmac{} now varies \emph{TxOffset} for different priorities, the comparison needs to be adjusted, too. We introduce the following modifications to ensure time synchronization in \pmac{}:
\begin{itemize}\itemsep0em
    \item Assign all beacon packets with fixed priority.
    \item Adjust the measured SOF timestamp by fixed beacon priority during initial time synchronization at a node, because only beacon packet is used at the moment.
    \item Synchronize SOF timestamp to the nearest \emph{TxOffset} during all other time synchronizations, because the priority of the coming packet is unknown at that time.
\end{itemize}

Here we assume that the clock drift is within $(-\frac{1}{2}PriorityTick, +\frac{1}{2}PriorityTick)$ range. As a result, the value of \emph{PriorityTick} needs to be larger than clock drift allowance, whose boundary can be calculated from clock source specification. When reducing \emph{PriorityTick} to extreme value, two phenomenons can be observed: because the clock drifts outside \emph{PriorityTick}, the child node fails to synchronize with parent (time master), and causes packet loss; two senders with different priorities have too close \emph{TxOffset}s that the later one is not able to detect earlier one, and cause interference. These observations are further discussed in Section \ref{sec:exp} and performance of different \emph{PriorityTick}s is measured.}
\section{System Rhythmic Mode}
\label{sec:dropping}
{\pmac{}} ensures that once the dynamic schedules are generated locally, the nodes in \dnodes{} can follow those schedules to handle the disturbance without transmission collisions with other nodes in the network. Since all the nodes in \dnodes{} receive the same disturbance information, the dynamic schedules generated locally at these nodes are all consistent. The construction of a dynamic schedule must guarantee that {\extend 1) all rhythmic packets meet their timing and reliability requirements, 2) 
the reliability degradation of periodic packets is minimized,
and 3) the system can reuse the static schedule after the rhythmic mode ends and all packets can be reliably delivered by their nominal deadlines.} 


\subsection{Problem Formulation}
\label{ssec:problem}

In \fpas{}, the network starts operation by following a static schedule which guarantees that all tasks meet their {\extend timing and reliability requirements} if no disturbance occurs. The static schedule is generated at each node locally using the local schedule generation technique proposed in \cite{zhang2017distributed}. {\extend 
To satisfy the reliability requirement, the retransmission mechanism introduced in \cite{flexible, probabilistic} is employed for each task to achieve the desired PDR value, {\em i,e,}, $\lambda_{i,k} \geq \lambda^R$. In the following, we assume the network {\han adopts the} TBS model where additional time slots are assigned to individual transmissions. (The case is similar for the PBS model where slots are assigned to individual packets.)} We denote the static schedule as $\static = \{(t, i, h)\}$, where $t$ is the slot ID, $i$ is the task ID
and $h$ is the hop index. For any given time slot $t$, we have $\static[t] = (i,h)$ if $t$ is assigned to the $h$-th transmission of $\tau_i$. Otherwise, $\static[t] = (-1, -1)$ to indicate an idle slot. 
{\extend Let $\overrightarrow{R}_{i,k} = [R_{i,k}[0], R_{i,k}[1], \dots, R_{i,k}[H_i-1]]$ be the {\em retry vector} of packet $\chi_{i,k}$ used in the static schedule in which $R_{i,j}[h]$ denotes the number of slots assigned to hop $h$ of $\chi_{i,k}$. We use $w_i^+$ to denote the number of slots assigned to $\tau_i$ ({\em i.e.}, $\overrightarrow{R}_{i,k}$) in the static schedule
which guarantees the e2e PDR value $\lambda_{i,k}$ to be larger than $\lambda^R$ in the system nominal mode. }

As shown in Fig. \ref{fig:rhythmic}, when a disturbance is detected at $r_{0,m}$, $\tau_0$ requires to enter its rhythmic state from the next release time $r_{0, m+1}$, {\em i.e.}, $\tnr = r_{0, m+1}$. Then, the system enters the rhythmic mode with an increased workload induced by $\tau_0$. A dynamic schedule $\dynamic$ is thus needed before the system switches back to the nominal mode and reuses static schedule $\static$. $\dynamic$ starts from $\tnr$ and ends at a carefully chosen end point $\tep$ of the system rhythmic mode. 
To achieve guaranteed fast disturbance handling, we further define $t_{ep}^u$ as a user specified parameter which bounds the maximum allowed DHL, and is often application dependent.
Though it is natural to use idle slots in  $\static[\tnr, \tep)$ to accommodate the increased rhythmic workload, they are not always sufficient to guarantee {\extend the timing and reliability requirements} of all rhythmic packets. In this case, some periodic transmissions have to be dropped. 
Since any node $V_j \notin$ \dnodes{} keeps following the static schedule $\static$ to transmit periodic packets, periodic transmissions cannot be adjusted in the dynamic schedule\footnote{Some periodic tasks may share common nodes with $\tau_0$ on their routing paths, which indicates that the periodic transmissions at these nodes can be adjusted in the dynamic schedule. Due to page limit, we leave this discussion to our future work and focus on the case that all periodic transmissions should not be adjusted in the dynamic schedule.}.
{\extend Therefore if any periodic transmission $\pkt_{i,k}(h)$ in $\static$ is replaced by a rhythmic transmission in $\dynamic$, 
the number of elements in $\overrightarrow{R}_{i,k}$ is reduced such that the reliability of packet $\pkt_{i,k}$ is degraded. 
If the remaining number of assigned slots (denoted as $w_{i,k}$) is less than $H_i$, the timing requirement of $\pkt_{i,k}$ is also violated since at least $H_i$ slots are needed to guaratee the delivery of $\pkt_{i,k}$.
To capture the reliability degradation for periodic packet $\pkt_{i,k}$, let $\delta_{i,k}$ represent the difference between the required PDR $\lambda^R$ and the updated PDR value $\lambda_{i,k}$ in the dynamic schedule, {\em i.e.}, $\delta_{i,k} = \max\{0, \lambda^R - \lambda_{i,k}\}$. Note that the timing degradation of each packet can also be captured by $\delta_{i,k}$ where $\delta_{i,k} = \lambda^R$ if $\chi_{i,k}$ is dropped.
Then, the question is which periodic transmissions should be replaced by rhythmic transmissions to generate dynamic schedule $\dynamic[\tnr, \tep)$ such that (i) all rhythmic packets meet their timing and reliability requirements and (ii) the total reliability degradation of periodic packets is minimized.}

Formally, to satisfy Constraints (ii), (iii) and (iv) in \prob{}, we aim to solve the following two subproblems. 

\vspace{0.05in}
\noindent {{\hu \pend}{} -- End Point Selection:} Given task set \tset, $\tnr$, $t_{ep}^u$ and  static schedule $\static$, this subproblem determines the end point $\tep$  that satisfies the following two constraints.

\begin{constraint}\label{con:ep1}
$f_{0,m+R} \leq \tep \leq t_{ep}^u$
\end{constraint}

\noindent Here, $f_{0,m+R}$ is the finish time of the last packet released in $\tau_0$'s rhythmic state.
$f_{0,m+R} \leq \tep$ ensures that the current rhythmic event can be completely handled before the system switches back to the nominal mode.

\begin{constraint}\label{con:ep2}
The system can switch back to the nominal mode and reuse the static schedule from $\tep$ {\hu and all packets after $\tep$ {\extend can be reliably delivered by their nominal deadlines}}.
\end{constraint}

\noindent 
{\pdrop{} -- Dynamic Schedule Generation:} this subproblem generates the dynamic schedule $\dynamic[\tnr, \tep)$ such that {\extend the total reliability degradation of periodic packets is minimized and the following two constraints are satisfied.}

\begin{constraint}\label{cons:rhy}
All rhythmic packets meet their {\extend timing and reliability requirements.}
\end{constraint}

\begin{constraint}\label{cons:peri}
In the dynamic schedule $\dynamic[\tnr, \tep)$, any periodic transmission slot $S[t] = (i,h) (1 \leq i \leq n)$ can only either be replaced by a rhythmic transmission slot $S[t]=(0,h)$ or kept unchanged.
\end{constraint}
\noindent
Below we first discuss how \fpas{} solves the first problem.

\subsection{End Point Selection}
\label{ssec:end_point}
Determining the right end point for the dynamic schedule is vital since it impacts not only the DHL but also the number of dropped periodic packets.
A concept similar to the end point is used by OLS and is referred to as switch point~\cite{hong2015online}. Since both OLS and \fpas{} require the system to reuse the static schedule after $\tep$, to select the end point in \fpas, we borrow some ideas in OLS including aligning the actual release time of $\tau_0$ to its nominal one and reducing the number of end point candidates by only considering the actual release times of $\tau_0$. 

\fpas and OLS have two key differences for end point selection. First, to satisfy Constraint~\ref{con:ep2}, we need to determine which packets must be completed before the system reuses the static schedule at end point $\tep$. Since OLS must obey a user-specified bound on the number of adjusted transmissions in dynamic schedule \dynamic, a {\em transmission set\/} containing all transmissions to be scheduled in $\dynamic[\tnr, \tep)$ must be constructed. However, \fpas{} has no such requirement (due to its distributed nature), thus only needs to construct an \emph{active packet set} containing all packets to be scheduled. Second, according to Constraint \ref{cons:peri}, transmissions of periodic packets must not be adjusted and can only be replaced by rhythmic transmissions in the dynamic schedule. Thus, for the active packet set, we only need to consider rhythmic packets to be scheduled by $\dynamic[\tnr, \tep)$. These differences require modifications to the end point selection process, which are detailed below.


\eat{
\begin{constraint}\label{con:ep1}
$f_{0,m+R} \leq \tep \leq t_{ep}^u$
\end{constraint}


\begin{constraint}\label{con:ep2}
The system can switch back to the nominal mode and reuse the static schedule from $\tep$ with all packets meeting their nominal deadlines.
\end{constraint}
}


Let $\Psi(\tep)$ denote the active packet set containing all rhythmic packets to be scheduled within $[\tnr, \tep)$ {\extend and $\Phi(\tep)$ denote the periodic packet set in which each periodic packet has at least one transmission slot in the static schedule $\static[\tnr, \tep)$.}
Naturally, any rhythmic packet with both release time and deadline in $[\tnr, \tep)$ must be included in $\Psi(\tep)$. 
The question is how to treat the rhythmic packet released before $\tep$ with a deadline after $\tep$.
As shown in Fig. \ref{fig:rhythmic}, let $\pkt_{0, q^*}$ be such a packet.
To ensure the system can reuse the static schedule from $\tep$, the actual release time of $\tau_0$ must be aligned to its nominal release time after \tep. Same as OLS,
we shorten the time interval between $r_{0, q^*}$ and $r_{0, q^* + 1}$ by shifting $r_{0, q^* + 1}$ backward to the closest nominal release time of $\tau_0$, denoted as $r_{0, p^*}$. The more challenging part is adjusting the deadline and {\extend execution time of $\pkt_{0, q^*}$ since the assigned number of transmission slots} may vary depending on which hop occurs after \tep. We construct $\pkt_{0, q^*}$ by adjusting its execution time and deadline according to the position of $\tep$ by considering the following two cases.


\noindent {\bf Case 1:} If $\tep < r_{0, p^*}$, $d_{0, q^*}$ is adjusted to $\tep$. 
{\extend Suppose the first transmission slot assigned to $\tau_0$ after $\tep$ is at $t_{k_0}$ in the static schedule.
If $t_{k_0} \geq r_{0, p^*}$, it indicates that $\static[t_{k_0}]$ is the first assigned transmission slot for the first hop of $\pkt_{0, p^*}$, {\em i.e.}, $\static[t_{k_0}]=(0,1)$. Then the execution time of $\pkt_{0, q^*}$ is set to $H_0$. 
If $t_{k_0} < r_{0, p^*}$ and suppose $\static[t_{k_0}]$ is the $k_0$-th transmission slot assigned for $\chi_{0,p^*-1}$, the execution time is set to $k_0 - 1$ correspondingly.}

\noindent {\bf Case 2:} If $\tep \geq r_{0, p^*}$, {\extend suppose the first assigned transmission slot for the first hop of $\pkt_{0,p^*}$ is at $t_1$ in the static schedule,} {\em i.e.}, $\static[t_1] = (0,1) (r_{0, p^*} \leq t_1 < d_{0, p^*})$. $d_{0, q^*}$ is adjusted to $\min(\tep, t_1)$ to guarantee that the deadline of $\pkt_{0, q^*}$ is smaller than or equal to the first transmission of $\pkt_{0, q^*+1}$. Also the execution time of $\pkt_{0, q^*}$ is set to be equal to $H_0$.

Given $t_{ep}^u$, any time slot within $[f_{0,m+R}, t_{ep}^u]$ can be selected as end point $\tep$. However, to avoid checking every time instant which is time consuming, we only need to consider the actual release times of $\tau_0$ within $[f_{0,m+R}, t_{ep}^u]$ as end point candidates, denoted as $t_{ep}^c$\footnote{Such a space reduction scheme is safe and can be proved in a similar way as Lemma 2 in~\cite{hong2015online} which is thus omitted due to page limit.}. That is,
\begin{equation}\label{eq:endpoint}
\{t_{ep}^c\} = \{r_{0,k}, \forall r_{0,k} \in [f_{0,m+R}, t_{ep}^u]\}
\end{equation}

\eat{
\begin{lemma}\label{lem:tep}
Suppose $r_{0, q^*} < t^* < r_{0, q^*+1}$. It holds that $|\dropset[\tnr, t^*)| \geq |\dropset[\tnr, r_{0, q^*+1})|$ when selecting $t^*$ and $r_{0, q^*+1}$ to be the end points respectively.
\end{lemma}

\begin{proof}[Proof sketch]
When setting $t^*$ and $r_{0, q^*+1}$ as the end points, the only difference between the active packet sets $\Psi(t^*)$ and $\Psi(r_{0, q^*+1})$ is the parameters of rhythmic packet $\pkt_{0, q^*}$. However, if we let $\tep = r_{0, q^*+1}$, we can always set the parameters of $\pkt_{0, q^*}$ be the same as that of $\pkt_{0, q^*}$ in $\Psi(t^*)$. Then, a dynamic schedule $\dynamic[\tnr, r_{0, q^*+1})$ can be generated by concatenating the dynamic schedule $\dynamic[\tnr, t^*)$ and the static schedule $\static[t^*, r_{0, q^*+1})$.
That is, the solution space for $\tep = t^*$ is a subset of that of selecting $r_{0, q^*+1}$ as the end point. Thus, it always holds that $|\dropset[\tnr, t^*)| \geq |\dropset[\tnr, r_{0, q^*+1})|$.
\end{proof}

Lemma \ref{lem:tep} indicates that selecting any time slot within $(r_{0, q^*}, r_{0, q^*+1})$ as the end point is not better than using $r_{0, q^*+1}$ as $\tep$ in terms of reducing the number of dropped packets. Therefore, we only consider each actual release time of $\tau_0$ within $[f_{0,m+R}, t_{ep}^u]$ as an end point candidate, denoted as $t_{ep}^c$. That is,

\begin{equation}\label{eq:endpoint}
\{t_{ep}^c\} = \{r_{0,k}, \forall r_{0,k} \in [f_{0,m+R}, t_{ep}^u]\}
\end{equation}
}

Then the dynamic schedule generation subproblem can be refined as follow.

\vspace{0.05in}
\noindent \pdrop{}: {\extend Given the end point candidate $t_{ep}^c$, active packet set $\Psi(t_{ep}^c)$, periodic packet set $\Phi(t_{ep}^c)$ and static schedule $\static[\tnr, t_{ep}^c)$, determine the dynamic schedule $\dynamic[\tnr, t_{ep}^c)$ in which the total reliability degradation of periodic packets is minimized, {\em i.e.},}
\begin{equation}\label{eq:min}
 \forall \chi_{i,k} \in \Phi(t_{ep}^c), \min \sum \delta_{i,k}.
\end{equation}
and Constraint \ref{cons:rhy} and Constraint \ref{cons:peri} are satisfied.

\section{Dynamic Schedule Generation}
\label{sec:dynamic}

{\extend In this section, we discuss how \fpas{} determines the dynamic schedule to solve \pdrop{}. For the sake of clarity, we first assume that all links in the network are reliable, i.e. $\forall e, \lambda^L_e = 100\%$. We then generalize the network model to {\han consider} lossy wireless links and extend \fpas{} to satisfy both the timing and reliability requirements of all tasks in Section~\ref{sec:reliability}.}

\subsection{Reliable Network Setting}\label{ssec:drop}

{\extend For RTWNs in which all links are reliable, $H_i$ time slots are {\han required} for each packet $\pkt_{i,k}$ to guarantee its {\han e2e delivery}.
If any of the $H_i$ transmission slots in the static schedule is replaced by a rhythmic transmission in the dynamic schedule, $\pkt_{i,k}$ cannot be delivered and {\han has to be dropped}. Thus, the objective in Eq. (\ref{eq:min}) is {\han reduced to minimize} the number of dropped periodic packets.
We use $\dropset[\tnr, t_{ep})$ to denote the dropped periodic packet set and in the following
we illustrate that determining $\dropset[\tnr, t_{ep})$ 
is a non-trivial problem by the following Lemma.}
\begin{lemma}\label{lem:np}
Given end point $\tep$, an active packet set $\Psi(\tep)$ containing all rhythmic packets of $\tau_0$ to be scheduled and a static schedule $\static[\tnr, \tep)$, determining the dropped packet set $\dropset[\tnr, t_{ep})$ with the minimum number of dropped packets and satisfying both Constraint \ref{cons:rhy} and Constraint \ref{cons:peri} is NP-hard. 
\end{lemma}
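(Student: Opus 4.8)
The plan is to prove NP-hardness by a reduction from a known NP-hard problem, most naturally \textsc{3-Partition} or \textsc{Bin-Packing}, since those are the standard tools for showing hardness of scheduling/packing problems where items of various sizes must be fit into slots (here the ``items'' are periodic packets consuming $H_i$ consecutive-in-spirit slots, and the ``containers'' are the idle slots and soon-to-be-freed slots in $\static[\tnr,\tep)$ that must accommodate the rhythmic workload). First I would observe that minimizing the number of dropped periodic packets is equivalent to its complement: maximizing the number of periodic packets that can be \emph{retained}, subject to the constraint that the rhythmic packets of $\tau_0$ (with their tight periods $\overrightarrow{P_0}$ and deadlines $\overrightarrow{D_0}$) all get their $H_0$ slots each within their respective windows, and subject to Constraint~\ref{cons:peri} (a periodic slot can only be donated to a rhythmic transmission, never rearranged). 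So the combinatorial core is: which periodic packets' slot-collections do we sacrifice so that the freed slots, together with the pre-existing idle slots, suffice to feasibly schedule all rhythmic packets in their deadline windows?

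The key steps, in order: (1) state the decision version of \pdrop{} (``is there a dropped set of size at most $K$?''); (2) pick the source problem — I would go with \textsc{3-Partition} (strongly NP-hard, which also gives hardness even when numbers are polynomially bounded, matching the fact that slot counts in a schedule are polynomial in the horizon); (3) construct an instance: encode each of the $3m$ integers $a_j$ of the \textsc{3-Partition} instance as a periodic task/packet whose slot footprint has ``size'' proportional to $a_j$, lay out the static schedule $\static$ so that between consecutive rhythmic deadlines there is exactly a gap of the target sum $B$ worth of reclaimable periodic slots, and set the rhythmic task parameters so that each rhythmic packet needs exactly $B$ extra slots beyond the idle ones, forcing the scheduler to reclaim a group of periodic packets summing to exactly $B$ in each gap; (4) set the budget $K$ so that a dropped set of that size exists iff the integers partition into triples each summing to $B$; (5) argue both directions of the equivalence and that the construction is polynomial. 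I would also need a short lemma that, given the layout, any feasible dynamic schedule must donate whole packets' worth of slots aligned to the gaps (no partial donation helps, because a partially-stripped periodic packet with fewer than $H_i$ slots is already effectively dropped by the problem statement, so there is no benefit to half-measures) — this is what makes the ``packing'' structure rigid.

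The main obstacle I anticipate is the \emph{scheduling/alignment} subtlety rather than the packing arithmetic: unlike clean bin-packing, slots here live on a timeline, rhythmic packets have both release times and deadlines (windows, not just counts), retransmission slots for a single packet may be scattered, and a donated slot is only useful to a rhythmic transmission if it falls inside that transmission's feasible interval and respects the per-hop ordering $\chi_{0,k}(1)$ before $\chi_{0,k}(2)$, etc. So I must design $\static$ and $\overrightarrow{P_0},\overrightarrow{D_0}$ carefully enough that the only feasible way to place the rhythmic transmissions is the intended ``one triple per gap'' configuration, with no clever cross-gap borrowing or slack exploitation. A secondary, smaller obstacle is confirming membership in NP (exhibit a dynamic schedule as a polynomial-size certificate and check Constraints~\ref{cons:rhy} and~\ref{cons:peri} in polynomial time) so that one could upgrade to NP-completeness, though the lemma only asks for NP-hardness. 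If the windowed-scheduling bookkeeping gets unwieldy, a fallback is to reduce from the more flexible \textsc{Subset-Sum} or \textsc{Partition} to get (weak) NP-hardness with a single gap, which is simpler to lay out on the timeline, at the cost of not getting strong NP-hardness; I would attempt \textsc{3-Partition} first and retreat only if the alignment constraints prove too fiddly to pin down cleanly.
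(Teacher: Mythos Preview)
Your plan has a genuine structural mismatch with the problem, and the paper's proof takes a different (and more natural) route.

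The paper reduces from \textsc{Set Cover}, not from \textsc{3-Partition}. In the paper's construction, after exhausting idle and original $\tau_0$ slots, there remain $n$ rhythmic packets $x_1,\dots,x_n$ each needing one more slot; the $m$ periodic packets $C_1,\dots,C_m$ are laid out so that $C_j$ has a transmission inside the window $[r_{x_i},d_{x_i})$ exactly when $x_i\in C_j$ in the given set-cover instance. Dropping a periodic packet then ``covers'' precisely the rhythmic packets whose windows contain one of its transmissions, so a minimum dropped set is a minimum set cover.

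The reason your \textsc{3-Partition} plan will not go through is that the problem has no assignment flexibility of the kind \textsc{3-Partition} needs. In \textsc{3-Partition} the unknown is \emph{which bin} each item goes into. Here, every periodic transmission slot sits at a fixed time in $\static[\tnr,\tep)$, and that time alone determines which rhythmic window(s) can consume it. Once you, as the reduction designer, place the $a_j$ slots of periodic packet $j$ on the timeline, you have already fixed which ``gap'' they serve; the scheduler never gets to choose. So either (i) you place packet $j$'s slots entirely in one gap, in which case the item-to-bin assignment is hard-wired and there is nothing left to decide, or (ii) you spread packet $j$'s slots across several gaps, in which case dropping $j$ frees slots in several windows simultaneously and you are back to a covering structure, not a partitioning one. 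Your fallback to \textsc{Subset-Sum} with a single gap fails for the same reason reduced to one window: minimizing $|S|$ subject to $\sum_{j\in S} a_j \ge B$ is solved greedily by picking the largest $a_j$ first, so that special case is polynomial. The alignment concern you flag as the ``main obstacle'' is real, but it is fatal to the approach rather than merely fiddly: the fixed-timeline constraint is precisely what gives the problem its set-cover character and removes the bin-assignment degree of freedom that any \textsc{3-Partition} or \textsc{Bin-Packing} reduction would require.
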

\eat{
Below, we show that this problem is NP-hard and thus the dynamic schedule generation problem \pdrop{} is also NP-hard. 

\begin{lemma}\label{lem:np}
Given end point $\tep$, an active packet set $\Psi(\tep)$ containing all rhythmic packets of $\tau_0$ to be scheduled and a static schedule $\static[\tnr, \tep)$, {\sh the packet dropping problem to determine} the dropped packet set $\dropset[\tnr, t_{ep})$ with the minimum number of dropped packets and satisfying both Constraint \ref{cons:rhy} and Constraint \ref{cons:peri} is {\hu NP-hard}. 
\end{lemma}
}

\begin{IEEEproof}
We prove the lemma by reducing the set cover problem~\cite{karp1972reducibility} to a special case of the packet dropping problem.

The set cover problem is defined as follows: Given a set of $n$ elements $X = \{x_1, x_2, \dots, x_n\}$ and a collection $C = \{C_1, C_2, \dots, C_m\}$ of $m$ nonempty subsets of $X$ where $\cup_{i=1}^m C_i = X$. The set cover problem is to identify a sub-collection $C_s \subseteq C$ whose union equals $X$ such that $|C_s|$ is minimized.

Given a set cover problem, we can construct a special case of the packet dropping problem in polynomial time as follows:

(1) Suppose that after utilizing the original transmission slots of $\tau_0$ and the idle slots in $\static[\tnr, \tep)$ to accommodate rhythmic transmissions in $\Psi(\tep)$, there still remain $n$ packets of $\tau_0$, denoted as $\{x_1, x_2, \dots, x_n\}$, to be scheduled. Each packet $x_i$ only needs one slot to transmit.

(2) In the static schedule $\static[\tnr, \tep)$, there are $m$ periodic packets, denoted as $\{C_1, C_2, \dots, C_m\}$. For each packet $C_j$, if there exists a transmission of $C_j$ falls into the time window of rhythmic packet $x_i$ ({\em i.e.}, $[r_{x_i}, d_{x_i})$), we have $x_i \in C_j$. 

Thus, one can determine the minimum number of dropped packet set $\dropset[\tnr, t_{ep})$ that can accommodate all the rhythmic packets if and only if the smallest sub-collection $C_s$ whose union equals $X$ can be identified. The Lemma is proved.
\end{IEEEproof}

After the dropped packet set is determined, the dynamic schedule can be obtained in linear time by assigning the transmissions of the rhythmic packets to the static schedule $\static[\tnr, \tep)$ using both idle slots and transmission slots of the dropped packets. Thus Lemma \ref{lem:np} readily leads to Theorem~\ref{thm:np} and the proof is omitted.
\begin{theorem}\label{thm:np}
{\extend Generating a dynamic schedule with the minimum number of dropped packets in reliable RTWNs is NP-hard.}
\end{theorem}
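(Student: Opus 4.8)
The plan is to obtain \pref{thm:np} as an immediate corollary of \pref{lem:np}, by showing that under the reliable-link assumption the dynamic-schedule-generation problem \pdrop{} and the minimum packet-dropping problem are polynomially inter-reducible; since the latter is NP-hard by \pref{lem:np}, so is the former, and hence so is the task of generating a dynamic schedule with the minimum number of dropped packets.

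First I would make the easy direction explicit: any feasible dynamic schedule $\dynamic[\tnr,\tep)$ induces a dropped-packet set $\dropset[\tnr,t_{ep})$ consisting of exactly those periodic packets $\chi_{i,k}$ at least one of whose $H_i$ static transmission slots has been reassigned to a rhythmic transmission. Under reliable links a periodic packet is delivered if and only if it keeps all $H_i$ of its slots, so the objective $\sum \delta_{i,k}$ of Eq.~(\ref{eq:min}) equals $\lambda^R \cdot |\dropset[\tnr,t_{ep})|$; thus minimising reliability degradation coincides with minimising $|\dropset[\tnr,t_{ep})|$, and the drop set is read off from a schedule in time linear in $|\static[\tnr,\tep)|$. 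Consequently an optimal solver for \pdrop{} over the relevant end-point candidate immediately yields an optimal solver for the minimum packet-dropping problem, so a polynomial-time algorithm for \pdrop{} would imply P${}={}$NP by \pref{lem:np}.

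The substantive direction, which the excerpt only alludes to, is the converse construction: given a minimum-cardinality dropped-packet set satisfying Constraint~\ref{cons:rhy} and Constraint~\ref{cons:peri}, one builds a feasible $\dynamic[\tnr,\tep)$ in linear time. The construction pools every slot that is idle in $\static[\tnr,\tep)$ together with every slot freed by a dropped packet, and then packs the $H_0$ transmissions of each rhythmic packet in $\Psi(\tep)$ into this pool in increasing-deadline (EDF) order, honouring hop precedence along $\tau_0$'s routing path and the half-duplex constraint at each node of \dnodes{}. The key lemma to prove here is that this EDF packing always succeeds whenever the dropped set is feasible: intuitively, feasibility of $\dropset[\tnr,t_{ep})$ means that inside every rhythmic packet's window $[r,d)$ enough pooled slots appear in an order compatible with the hop sequence, and a standard exchange argument shows any feasible assignment can be transformed into the EDF one without loss. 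I expect this correctness claim for the linear-time schedule construction to be the main obstacle; once it is established, the equivalence of the two decision problems is complete.

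Finally, since \pdrop{} is invoked as a sub-step for a single end-point candidate $t_{ep}^c$ while the full problem ranges over all candidates in $\{t_{ep}^c\}$ (Eq.~(\ref{eq:endpoint})), NP-hardness of \pdrop{} propagates upward, giving Theorem~\ref{thm:np}; the write-up therefore mostly needs to spell out the linear-time construction above and verify its correctness, after which the reduction chain set cover $\le_p$ packet dropping $\le_p$ \pdrop{} $\le_p$ dynamic-schedule generation closes the argument.
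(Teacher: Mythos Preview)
Your proposal is correct and follows the same line as the paper, which simply observes that once the dropped-packet set is fixed the dynamic schedule is obtained in linear time and then declares that \pref{lem:np} ``readily leads'' to \pref{thm:np} with the proof omitted. Your write-up is considerably more detailed than what the paper actually provides.

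One remark on emphasis: you label the dropped-set $\to$ schedule construction as the ``substantive direction'' and the ``main obstacle,'' but for the NP-hardness claim of \pref{thm:np} it is not needed at all. Your first paragraph already closes the argument: an optimal dynamic schedule yields, in linear time, a minimum-cardinality dropped set, so a polynomial-time algorithm for \pdrop{} would solve the packet-dropping problem of \pref{lem:np} in polynomial time. The converse linear-time construction (and the EDF-packing correctness you plan to prove) establishes that the two problems are polynomially \emph{equivalent}, which is useful context and is what the paper alludes to in the sentence preceding the theorem, but it is not required to conclude NP-hardness. So the obstacle you anticipate is real work only if you want full equivalence; for \pref{thm:np} itself you can stop after the easy direction.
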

\eat{
\begin{proof}
According to Lemma \ref{lem:np}, {\sh the packet dropping problem, a }sub-problem of \pdrop{}, is already NP-hard. Thus \pdrop{} is also NP-hard.
\end{proof}
}

\eat{Below we focus on solving the {\sh packet dropping problem by first giving an ILP solution and then a heuristic solution.} 

key sub-problem of \pdrop{} that how to determine the dropped packet set with the minimum number of dropped packets.
}


\eat{
\subsubsection{ILP Solution}
\label{ssec:ilp}
In this section, we present the ILP formulation {\sh to solve the packet dropping problem.} 
}




Below we focus on solving the {\sh packet dropping problem}. 
{\extend 
An ILP based formulation can be derived by associating each periodic packet with a binary variable indicating whether the packet should be dropped or not. The objective is to minimize the number of dropped packets subject to the constraint that the total number of transmission slots freed from the dropped packets should be sufficient to meet the demand of all the rhythmic transmissions in $[\tnr, \tep)$. 

We introduce the following notation:
\begin{itemize}
\item $\slotset_j = [\slotvalue_j^1, \slotvalue_j^2, \dots, \slotvalue_j^n] \; (1 \leq j \leq m)$ denotes the transmission vector of periodic packet $\pkt_j$ where each $\slotvalue_j^i$ is the number of transmissions from $\pkt_j$ in the static schedule that can be replaced by transmissions of rhythmic packet $\pkt_{0,i}$ in the dynamic schedule. Specifically, transmission $\pkt_{j}(h)$ of $\pkt_j$ {\hu can be replaced by $\pkt_{0,i}$} if $\static[t] = (j,h)$ and $r_{0,i} \leq t < d_{0,i}$.

\item $l_j$ denotes the dropping decision of periodic packet $\pkt_j$. $l_j = 1$ if $\pkt_j$ is dropped. Otherwise, $l_j = 0$. 

\item $A = [a_1, a_2, \dots, a_n]$ denotes the available slot vector where each $a_i$ represents the total number of idle slots and rhythmic transmission slots in the static schedule that can be used by rhythmic packet $\pkt_{0,i}$.
\end{itemize}

To drop the minimum number of periodic packets to guarantee the timing requirements of all rhythmic packets, we have the following objective function in the ILP formulation:

\begin{equation}\label{eq:objective}
    \min 
    \sum_{\pkt_j \in \Phi} l_j
\end{equation}

Since rhythmic transmissions are at the highest priority, the deadline of each rhythmic packet $\pkt_{0,i}$ can be guaranteed only if at least $H_{0,i}$ time slots are reserved for $\pkt_{0,i}$ in the dynamic schedule.
Also, both idle slots and rhythmic transmission slots in the static schedule can be used to satisfy $\pkt_{0,i}$'s transmission demand. Therefore, objective function (\ref{eq:objective}) is subject to the following constraint.

\begin{equation}\label{eq:constraint}
    \forall 
    \pkt_{0,i} \in \Psi, \hspace*{2em}
    \sum_{\pkt_j \in \Phi} \slotvalue_j^i \cdot l_j \geq H_{0,i} - a_i
\end{equation}
}

Given that the packet dropping algorithm is to be deployed on resource-constrained device nodes and the sizes of both the rhythmic packet set ($|\Psi|$) and periodic packet set ($|\Phi|$) can become large as the network grows, we propose a greedy heuristic to {\sh solve the packet dropping problem} which is time- and space-efficient to be deployed in practical RTWNs. 
The key idea of the \greedy{} is to drop the periodic packet which contributes the maximum number of slots to all rhythmic packets. 

\eat{
\noindent $\slotset_j = [\slotvalue_j^1, \slotvalue_j^2, \dots, \slotvalue_j^n] (1 \leq j \leq m)$ denotes the transmission vector of periodic packet $\pkt_j$ where each $\slotvalue_j^i$ is the number of transmissions from $\pkt_j$ in the static schedule that can be replaced by transmissions of rhythmic packet $\pkt_{0,i}$ in the dynamic schedule. Specifically, transmission $\pkt_{j}(h)$ of $\pkt_j$ {\hu can be replaced by $\pkt_{0,i}$} if $\static[t] = (j,h)$ and $r_{0,i} \leq t < d_{0,i}$.

\noindent $A = [a_1, a_2, \dots, a_n]$ denotes the available slot vector where each $a_i$ represents the total number of idle slots and rhythmic transmission slots in the static schedule that can be used by rhythmic packet $\pkt_{0,i}$.
}

\begin{algorithm}[tb]
\caption{Greedy Heuristic for Dropping Packets}
 \label{alg:greedy}
{\textbf{Input:} $\Psi(\tep)$, $\static[\tnr, \tep)$}\\
{\textbf{Output:} $\dropset[\tnr, \tep)$}
\begin{algorithmic}[1]
\small
\STATE{ $\Phi \gets$ periodic packet set $\{\pkt_j | 1 \leq j \leq m\}$;} \label{line:periodic}
\STATE{ $\slotset_j \gets$ transmission vector $[\slotvalue_j^1, \slotvalue_j^2, \dots, \slotvalue_j^n]$ for each periodic packet $\pkt_j$;} \label{line:vector}
\STATE{ Construct $\Psi(\tep)$'s demand vector $[v_1, v_2, \dots, v_n]$ considering the idle slots and rhythmic transmission slots in $\static[\tnr, \tep)$;} \label{line:demand}
\IF {each $v_i$ equals $0$} \label{line:idles}
    \RETURN {$\emptyset$;}
\ENDIF \label{line:idlee}
\WHILE {true}
    \STATE{Add the periodic packet $\pkt_{max}$ with the maximum $\sum_{i=1}^n \slotvalue_j^i$ in $\Phi$ to $\dropset[\tnr, \tep)$;} \label{line:max}
    \STATE{$\Phi \gets \Phi \setminus \{\pkt_{max}\}$;} \label{line:remove}
    \FOR {$i \in \{1,\dots,n\}$} \label{line:demands}
        \STATE{$v_i \gets \max(0, v_i - \slotvalue_{max}^i)$;}
    \ENDFOR\label{line:demande}
    \IF{each $v_i$ equals $0$}\label{line:returns}
        \RETURN{$\dropset[\tnr, \tep)$;}
    \ENDIF\label{line:returne}
    \STATE{Update $\slotset_j$ for each $\pkt_j \in \Phi$;}\label{line:update}
\ENDWHILE
\end{algorithmic}
\end{algorithm}

\eat{
\begin{algorithm}[tb]
\caption{Dynamic Schedule Generation}
 \label{alg:dynamic}
 \small
{\textbf{Input:} $t_{ep}^u$, $\static[\tnr, t_{ep}^u)$}\\
{\textbf{Output:} $\dynamic[\tnr, \tep)$}
\begin{algorithmic}[1]
\STATE {Construct the end point candidate set $\{t_{ep}^c\}$ according to (\ref{eq:endpoint});}
\FOR {($\forall t_{ep}^c \in \{t_{ep}^c\}$)}
    \STATE {Construct $\Psi(t_{ep}^c)$; // active packet set}
    \STATE Generate a dropped packet set $\dropset[\tnr,t_{ep}^c)$ based on $\Psi(t_{ep}^c)$ using the \greedy{};
        \STATE $\xset \gets \xset \bigcup \left\{\dropset[\tnr,\tep^c)\right\}$;
\ENDFOR
\STATE{Select the $\dropset[\tnr,\tep^*)$ with the minimum number of dropped packet in $\xset$;}
\STATE{Generate dynamic schedule $\dynamic[\tnr, \tep^*)$ based on $\dropset[\tnr,\tep^*)$ and $\static[\tnr, \tep^*)$;}
\RETURN {$\dynamic[\tnr, \tep^*)$;}
\end{algorithmic}
 \end{algorithm}
}

Alg. \ref{alg:greedy} describes how the \greedy{} drops periodic packets. Given the static schedule $\static[\tnr, \tep)$, a periodic packet set $\Phi = \{\pkt_j | 1 \leq j \leq m\}$ in which each $\pkt_j$ maintains a transmission vector $\slotset_j = [\slotvalue_j^1, \slotvalue_j^2, \dots, \slotvalue_j^n]$ is constructed (Lines \ref{line:periodic}$-$\ref{line:vector}). Considering the idle slots and rhythmic transmission slots in $\static[\tnr, \tep)$, the \greedy{} constructs a demand vector $[v_1, v_2, \dots, v_n]$ for all rhythmic packets in $\Psi(\tep)$ where $v_i$ captures the number of {\hu additional} slots required by $\pkt_{0,i}$ ($v_i = H_{0,i}-a_i$) (Line \ref{line:demand}). 
If all elements in the demand vector equal $0$, which means that the idle slots and rhythmic transmission slots in the static schedule are sufficient to accommodate all rhythmic packets in $\Psi(\tep)$, no packet needs to be dropped and an empty set is returned (Lines \ref{line:idles}$-$\ref{line:idlee}). Otherwise, the heuristic drops packets in a greedy fashion as follows. In each iteration, periodic packet $\pkt_{max}$ with the maximum $\sum_{i=1}^n \slotvalue_j^i$ in $\Phi$ is added into the dropped packet set and removed from $\Phi$ (Line \ref{line:max}$-$\ref{line:remove}). Then the algorithm updates $\Psi(\tep)$'s demand vector by subtracting $\slotvalue_{max}^i$ for each $v_i$ (Lines \ref{line:demands}$-$\ref{line:demande}). If all rhythmic packets are schedulable, {\em i.e.}, each $v_i$ equals $0$, after dropping $\pkt_{max}$, the dropped packet set $\dropset[\tnr, \tep)$ is returned (Lines \ref{line:returns}$-$\ref{line:returne}). Otherwise, the transmission vector of each periodic packet $\pkt_j$ is updated according to the status of rhythmic packets (Line \ref{line:update}). Specifically, if rhythmic packet $\pkt_{0,i}$ is already schedulable, {\em i.e.}, $v_i=0$, $\slotvalue_j^i$ is set to $0$. If $0 < v_i < \slotvalue_j^i$ which means dropping $\pkt_j$ is redundant to schedule $\pkt_{0,i}$, we have $\slotvalue_j^i = v_i$. This process repeats until all rhythmic packets are schedulable and a dropped packet set is returned.

{\extend The time complexity of the packet dropping heuristic, Algorithm~\ref{alg:greedy}, is $O(n\cdot m)$ where $n$ and $m$ are the number of rhythmic and periodic packets in the dynamic schedule, respectively.}


\subsection{Unreliable Network Setting}\label{sec:reliability}


{\extend
{\han In the discussion above}, we have assumed that all links in the RTWN are reliable, i.e., $\forall e, \lambda^L_e = 100\%$. {\han With this assumption,} both timing and reliability requirements of each task can be directly satisfied when $H_i$ transmission slots are allocated for each packet and no retransmission slot is needed. Although this assumption simplifies the algorithm design and analysis, it is not realistic in real-life settings considering the lossy nature of wireless links.
Thus, {\han in this subsection} we consider unreliable links and extend \fpas{} to handle disturbance considering both timing and reliability requirements for each task.

For RTWNs containing unreliable links, a retransmission mechanism is required and each packet may be assigned multiple retransmission slots in the static schedule according to the link quality on the routing path.
After the system enters the rhythmic mode, Alg. \ref{alg:greedy} can still be applied if we do not differentiate transmission and retransmission slots allocated for each packet. That is, if any assigned slot of a periodic packet is determined to be occupied by a rhythmic transmission in the dynamic schedule, all its associated transmissions and retransmissions along the routing path will be dropped as well. However, this causes the system performance, in terms of both timing and reliability, to drop significantly {\tz since some of the dropped periodic transmissions may be kept to deliver this periodic packet. }
Then, the challenge is to determine the dropped periodic transmission set, denoted as $\dropset^*[\tnr, t_{ep})$, which leads to the minimum reliability degradation on periodic tasks ({\em i.e.}, solving the problem defined in Eq. (\ref{eq:min})).

\eat{
{\color{red} Discuss the scenarios and notations used in the static mode (e.g. retry vector and PDR functions) and explain why we don't need to consider \fpas{} in the static case.}

\noindent \textbf{Problem 1.2*}: Given the active packet set $\Gamma$, the static schedule \static{}, the PDR function \pdrfunc{} and the retry vector function \retryfunc{} of each task $\tau_i$, determine the dynamic schedule $\dynamic[\tnr, t_{ep})$ in which the total reliability degradation is minimized, i.e.,
\begin{equation}
    \forall \chi_{i,j} \in \Gamma, \min \sum \delta_{i,j}.
\end{equation}
and Constraint \ref{cons:peri} and the following constraint are satisfied.
\begin{constraint}
All rhythmic packets are reliably delivered by their deadlines.
\end{constraint}
}

Apparently, the packet dropping problem in Sec. \ref{ssec:drop}, where dropping any transmission leads the same reliability degradation $\lambda^R$, is a special case of the transmission dropping problem considering unreliable link. Thus, according to Lemma~\ref{lem:np}, the following theorem holds and the proof is {\han omitted}.

\begin{theorem}\label{thm:np_reliable}
{\extend Generating a dynamic schedule with the minimum reliability degradation, {\em i.e.} solving \pdrop{}, is NP-hard.}
\end{theorem}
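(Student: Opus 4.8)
The plan is to prove NP-hardness by \emph{restriction}: exhibit the reliable-network packet dropping problem of Section~\ref{ssec:drop}---already shown NP-hard in Lemma~\ref{lem:np} and Theorem~\ref{thm:np}---as a polynomially recognizable special case of \pdrop{} over lossy links. First I would take an arbitrary instance of the reliable packet dropping problem, consisting of an end point $\tep$, an active rhythmic packet set $\Psi(\tep)$, a periodic packet set $\Phi(\tep)$, and a static schedule $\static[\tnr,\tep)$, and reinterpret it as an instance of the general lossy problem in which every link $e$ on every routing path has PDR $\lambda^L_e = 100\%$. In that instance the retry vector of each packet is the all-ones vector $\overrightarrow{R}_{i,k}=[1,1,\dots,1]$, so exactly $w_i^+ = H_i$ slots are assigned to every packet $\pkt_{i,k}$ in $\static$, the required e2e PDR $\lambda^R$ is met with equality, and the end-point / active-packet-set construction of Section~\ref{ssec:end_point} yields the very same $\Psi(\tep)$ and $\Phi(\tep)$.

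Next I would argue that on this restricted family the objective $\sum_{\chi_{i,k}\in\Phi(\tep)}\delta_{i,k}$ collapses to the packet-counting objective of Eq.~(\ref{eq:min}) in the reliable setting. The key observation is that with all-ones retry vectors, replacing \emph{any} single periodic transmission slot $\static[t]=(i,h)$ by a rhythmic slot leaves only $w_{i,k}=H_i-1<H_i$ slots for $\pkt_{i,k}$, so its timing requirement is violated; by the convention $\delta_{i,k}=\lambda^R$ when $\chi_{i,k}$ is dropped this contributes exactly $\lambda^R$ to the objective, and retaining the other $H_i-1$ transmissions yields no benefit. Hence each periodic packet contributes either $0$ (untouched) or $\lambda^R$ (some transmission re-used), so minimizing $\sum\delta_{i,k}$ is equivalent to minimizing $\lambda^R\cdot|\dropset[\tnr,\tep)|$, i.e.\ to the minimum-cardinality dropped-packet problem; since $\lambda^R$ is a fixed positive constant, an optimal solution of the \pdrop{} instance gives, in polynomial time, an optimal solution of the original reliable instance and vice versa. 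Constraints~\ref{cons:rhy} and~\ref{cons:peri} transfer verbatim because rhythmic packets likewise need exactly $H_0$ slots each when links are reliable. Chaining this with the polynomial-time set-cover reduction behind Lemma~\ref{lem:np} links an arbitrary set-cover instance to a \pdrop{} instance, which establishes NP-hardness.

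The main obstacle I anticipate is making the ``special case'' claim airtight rather than merely plausible---specifically, verifying that the degradation bookkeeping in the lossy formulation really does assign cost exactly $\lambda^R$ to a touched periodic packet (and never a smaller value obtained by keeping a strict subset of that packet's transmissions) precisely when all links are reliable, and that no auxiliary retransmission slots sneak into $\static$ that could blur the equivalence. Once that correspondence is pinned down, NP-hardness of \pdrop{}---and therefore of the full dynamic schedule generation task---follows immediately from Lemma~\ref{lem:np}, so no separate gadget construction is needed.
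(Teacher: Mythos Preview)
Your proposal is correct and follows essentially the same approach as the paper: the paper also argues that the reliable-network packet dropping problem of Section~\ref{ssec:drop} is a special case of the lossy transmission dropping problem (since dropping any transmission there yields the same degradation $\lambda^R$), and then invokes Lemma~\ref{lem:np} to conclude NP-hardness. Your write-up is considerably more detailed than the paper's (which simply states the special-case relationship and omits the proof), but the underlying restriction argument is identical.
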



Next we focus on solving the transmission dropping problem 
and propose another heuristic. 
Note that, a packet may still be delivered even if some retransmissions are replaced by rhythmic transmissions. 
Thus, instead of dropping the packet contributing the maximum number of slots in Alg. \ref{alg:greedy}, the key idea of the heuristic is to drop the periodic transmission which results in the minimum reliability degradation at each iteration. 
In the following we first describe the calculation of the reliability degradation for each transmission.

Given the PDRs of all the links along the routing path of $\tau_i$ and the retry vector $\overrightarrow{R}_{i,k}$, the reliability value of $\chi_{i,k}$, $\lambda_{i,k}$, can be derived as:
\begin{equation}
\label{equ:pdr}
    \lambda_{i,k} = \prod_{h=0}^{H_i-1}1-(1-\lambda^L_{L_i[h]})^{R_{i,k}[h]}.
\end{equation}
If a retransmission of $\chi_{i,k}$ at $h$-th hop is dropped, the updated reliability value can be readily computed using Eq. (\ref{equ:pdr}) by updating $R_{i,k}[h]$ in the retry vector. The reliability degradation, then, is the difference between the two PDR values.

Alg. \ref{alg:greedy_r} describes the generation of the dropped transmission set using the heuristic. In the initialization phase, the periodic packet set and the rhythmic demand vector are constructed (Lines \ref{line:periodic_r} - \ref{line:demand_r}), and in Lines \ref{line:idles_r} - \ref{line:idlee_r} we check whether any periodic transmission needs to be dropped in the dynamic schedule. If so, we drop periodic transmissions in a greedy manner. At each iteration, we select the periodic transmission $\pkt_{j}(min)$ with the minimum reliability degradation according to the discussion above (Lines \ref{line:tran_min}). If any time slot of $\pkt_{j}(min)$ falls into the time window of any rhythmic packet needing extra slot to transmit, it is added into the dropped transmission set and the rhythmic demand vector is updated correspondingly (Lines \ref{line:min_use}-\ref{line:vi}). Otherwise, $\pkt_{j}(min)$ is kept and cannot be selected in the future. If all rhythmic packets are schedulable, the dropped transmission set $\dropset^*[\tnr, \tep)$ is returned.

\begin{algorithm}[tb]
\caption{Transmission Dropping Heuristic}
 \label{alg:greedy_r}
{\textbf{Input:} $\Psi(\tep)$, $\static[\tnr, \tep)$}\\
{\textbf{Output:} $\dropset^*[\tnr, \tep)$}
\begin{algorithmic}[1]
\small
\STATE{ $\Phi \gets$ periodic packet set $\{\pkt_j | 1 \leq j \leq m\}$;} \label{line:periodic_r}
\STATE{ Construct $\Psi(\tep)$'s demand vector $[v_1, v_2, \dots, v_n]$ considering the idle slots and rhythmic transmission slots in $\static[\tnr, \tep)$;} \label{line:demand_r}
\IF {each $v_i$ equals $0$} \label{line:idles_r}
    \RETURN {$\emptyset$;}
\ENDIF \label{line:idlee_r}
\WHILE {true}
    \STATE{Select the periodic transmission $\pkt_{j}(min)$ with the minimum reliability degradation from all $\pkt_j$ in $\Phi$;}\label{line:tran_min}
    \IF{$\pkt_{j}(min)$ can be utilized by any rhythmic packet $\pkt_{o,i}$ with $v_i > 0$} \label{line:min_use}
        \STATE{Drop $\pkt_{j}(min)$ and update $\pkt_{j}$ in $\Phi$;}
        \STATE{$\dropset^*[\tnr, \tep) \gets \dropset^*[\tnr, \tep) \bigcup \{\pkt_{j}(min)\}$;}
        \STATE{$v_i \gets v_i - 1$;}\label{line:vi}
    \ELSE
        \STATE{$\pkt_{j}(min)$ cannot be selected;}
        \STATE{Continue;}
    \ENDIF
    \IF{each $v_i$ equals $0$}\label{line:returns_r}
        \RETURN{$\dropset^*[\tnr, \tep)$;}
    \ENDIF\label{line:returne_r}
\ENDWHILE
\end{algorithmic}
\end{algorithm}

The time complexity of the dropped transmission determination is $O(n\cdot m \cdot w^+)$ where $n$ and $m$ are the numbers of rhythmic and periodic packets in the dynamic schedule, respectively. $w^+$ is the number of slots assigned to each periodic packet in the static schedule.

Finally, with the dropped packet (transmission) set being determined, each node in \dnodes{} can readily generate the dynamic schedule to solve \pdrop{} which is summarized in Alg.~\ref{alg:dynamic}. 
According to our testbed experiments in Sec. \ref{sec:testbed}, all nodes have runtime less than 1ms (within one time slot of 10ms) to complete the dynamic schedule generation.

Note that the proposed \fpas{} framework can be readily modified to handle disturbances in networks that adopt the PBS model. The only difference appears at the selection of the periodic transmission with the minimum reliability degradation (Line \ref{line:tran_min} in Alg. \ref{alg:greedy_r}). Since time slots are allocated to each individual packet instead of transmission in the PBS model, we select the periodic {\em packet} with the minimum reliability degradation if one of the assigned slots is replaced by a rhythmic transmission. For computing the reliability value of each packet in the PBS model, readers can refer to \cite{flexible}. 
}

\begin{algorithm}[tb]
\caption{Dynamic Schedule Generation}
 \label{alg:dynamic}
 \small
{\textbf{Input:} $t_{ep}^u$, $\static[\tnr, t_{ep}^u)$}\\
{\textbf{Output:} $\dynamic[\tnr, \tep)$}
\begin{algorithmic}[1]
\STATE {Construct the end point candidate set $\{t_{ep}^c\}$ according to (\ref{eq:endpoint});}
\FOR {($\forall t_{ep}^c \in \{t_{ep}^c\}$)}
    \STATE {Construct $\Psi(t_{ep}^c)$; // active packet set}
    \IF{The network is reliable}
        \STATE {\extend Generate dropped packet set $\dropset[\tnr,t_{ep}^c)$ using Alg.~\ref{alg:greedy};}
    \ELSE
        \STATE {\extend Generate dropped transmission set $\dropset^*[\tnr,t_{ep}^c)$ using Alg.~\ref{alg:greedy_r};}
    \ENDIF
        \STATE {\extend $\xset \gets \xset \bigcup \left\{\dropset[\tnr,\tep^c) (\dropset^*[\tnr,\tep^c)\right\}$;}
\ENDFOR
\STATE {\extend Select the $\dropset[\tnr,\tep^*)$ ($\dropset^*[\tnr,\tep^*)$) with the minimum number of dropped packets (minimum reliability degradation) in $\xset$;}
\STATE {\extend Generate dynamic schedule $\dynamic[\tnr, \tep^*)$ based on $\dropset[\tnr,\tep^*)$ ($\dropset^*[\tnr,\tep^*)$) and static schedule $\static[\tnr, \tep^*)$;}
\RETURN {$\dynamic[\tnr, \tep^*)$;}
\end{algorithmic}
 \end{algorithm}
\section{Performance Evaluation} \label{sec:simulation}


In this section, we present key performance results from both testbed experiments and simulation studies to evaluate the performance of the \fpas{} framework in RTWNs. {\tz The testbed implementation is to validate the correctness of the proposed \fpas{} framework and to obtain overhead in real applications. Extensive simulations are for performance evaluation since they allow us to easily vary taskset and network specifications to study the trend. Below we first introduce the experiments from our testbed.}

\subsection{Testbed Implementation and Evaluation}\label{sec:testbed}

\begin{figure}[t]
    \centering
    \includegraphics[width=0.95\columnwidth]{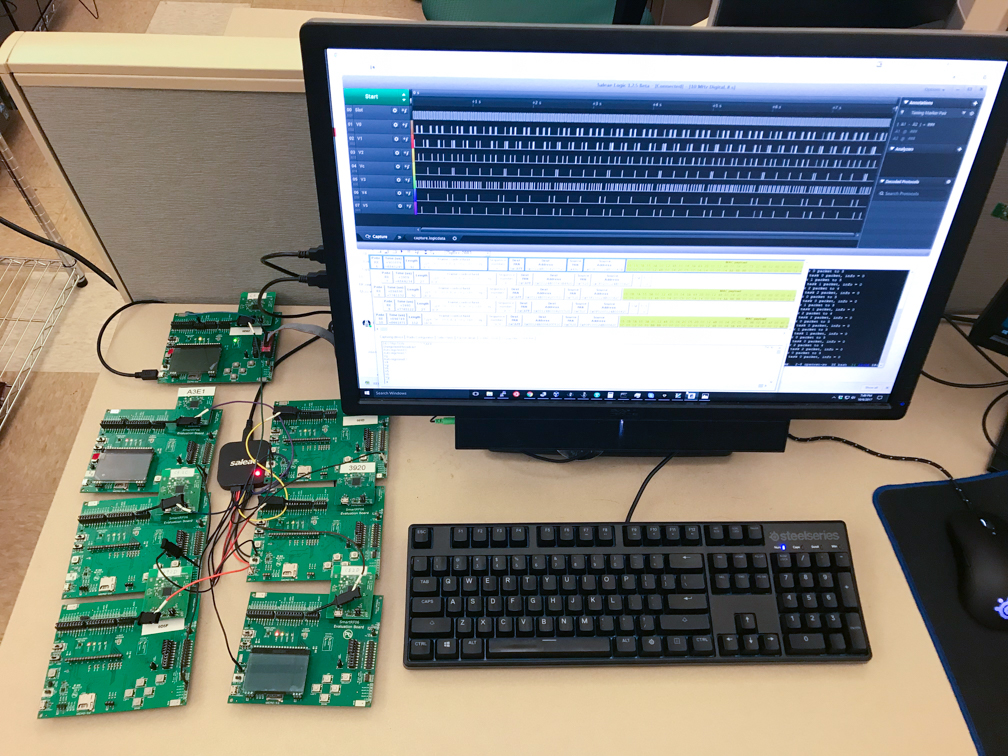}
    \caption{\small Overview of the testbed for {\fpas{}} functional validation}
    \label{fig:realtestbed}
\end{figure}

\begin{figure}[t]
    \centering
    \resizebox{0.95\columnwidth}{!}{\input{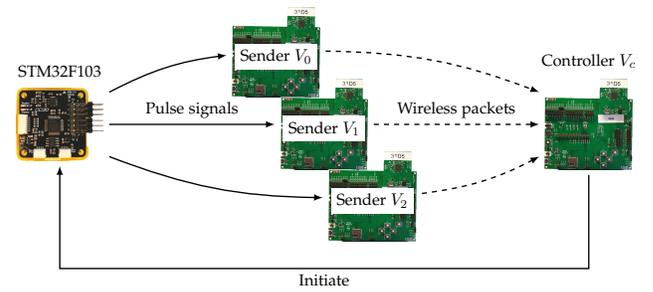}}
    \caption{Experiment setup for the measurement of latency}
    \label{fig:latencysetup}
\end{figure}
{\cam{
Our testbed is based on OpenWSN stack~\cite{watteyne2012openwsn}, an open source implementation of the 6TiSCH protocol~\cite{RFC7554}. OpenWSN enables IPv6 network over the TSCH mode of IEEE 802.15.4e MAC layer. A typical OpenWSN network consists of an OpenWSN Root and several OpenWSN devices, as well as an optional OpenLBR (Open Low-Power Border Router) to connect to IPv6 Internet. It serves as a perfect platform to experiment our proposed {\fpas{}} framework on both the data link and application layers of the stack. 

We implemented \fpas{} on our RTWN testbed to validate the correctness of the design and evaluate its effectiveness for ensuring prompt response to unexpected disturbances. The \pmac{} was implemented by enhancing the MAC layer of the OpenWSN stack and the dynamic schedule generation algorithm (using the same code as in the simulation) was implemented in the application layer. In the following, we first present the implementation of \pmac{} and its performance evaluation, and then validate the correctness of \fpas{} in a multi-task multi-hop RTWN.

As shown in Fig.\ref{fig:realtestbed}, our testbed consists of 7 wireless devices (TI CC2538 SoC + SmartRF evaluation board). One of them is configured as the root node (controller node) and the rest are device nodes to form a multi-hop RTWN. A CC2531 sniffer is used to capture the packet. A 8-Channel Logic Analyzer is used to record device activities by physical pins, in order to accurately measure the timing information among different devices. Fig.~\ref{fig:latencysetup} shows the experiment setup for the measurement of application layer performance.
}
}

\begin{table}[t]
\caption{\small Slot Timing Information of \pmac{}}
\label{tab:time_const}
\centering
\begin{tabular}{|c|c|c|c|}
\hline
Parameters & Value ($\mu$s) & Parameters & Value ($\mu$s) \\
\hline
SlotDuration & 10,000 & LongGT & 2,200 \\
TxOffset & 2,120 & ShortGT & 1,000 \\
TxAckDelay & 1,000 & PriorityTick & 30 to 400 \\
\hline 
Ext. SlotDuration & 10,800 & Extended LongGT & 3,000 \\
\hline
\end{tabular}
\end{table}

\begin{figure}[t]
    \centering
    \includegraphics[width=0.8\columnwidth]{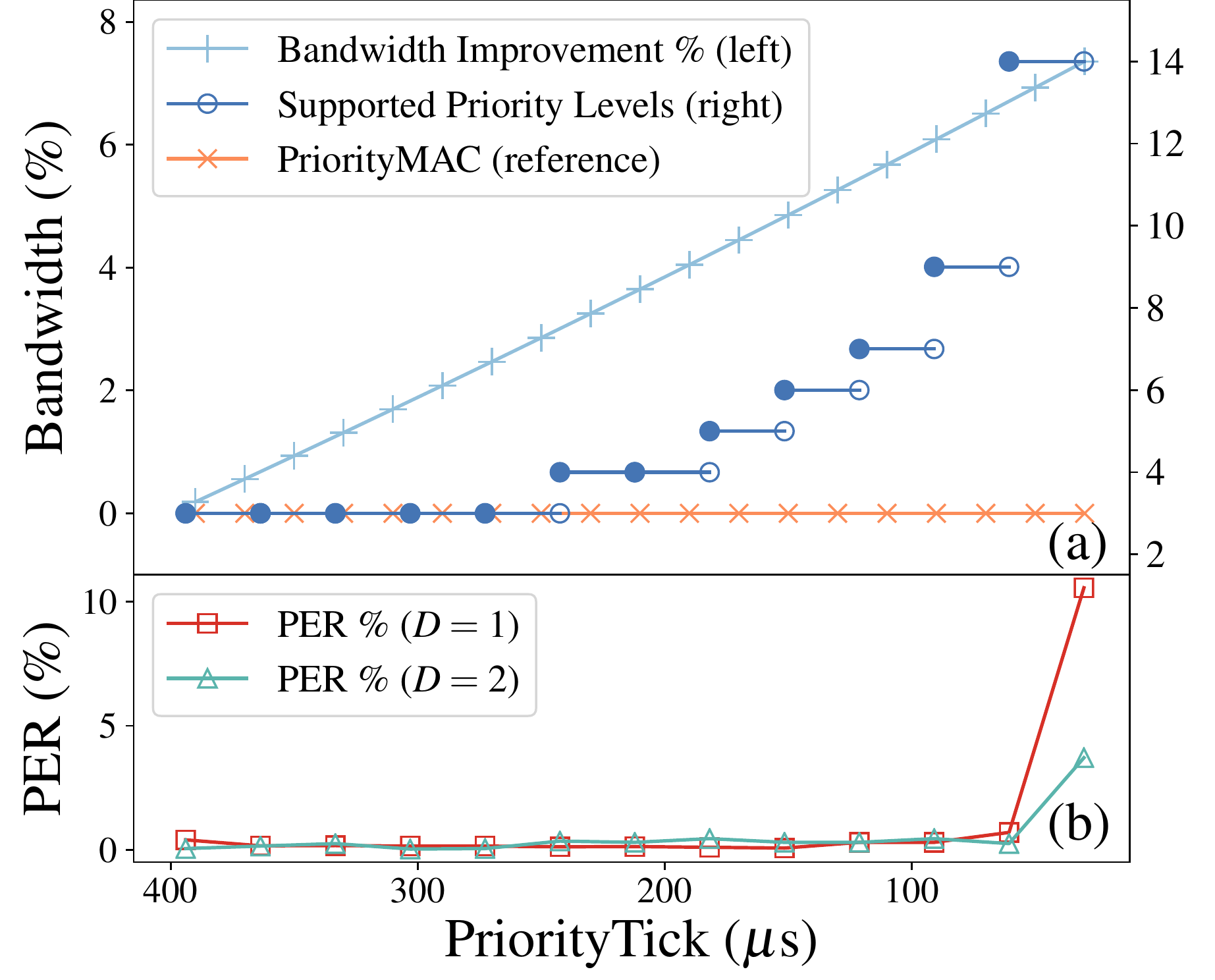}
    \caption{\small Priorities and PER vs. PriorityTick.}\label{fig:PriorityTick}
\end{figure}

\begin{figure}[t]
    \centering
    \includegraphics[width=0.8\columnwidth]{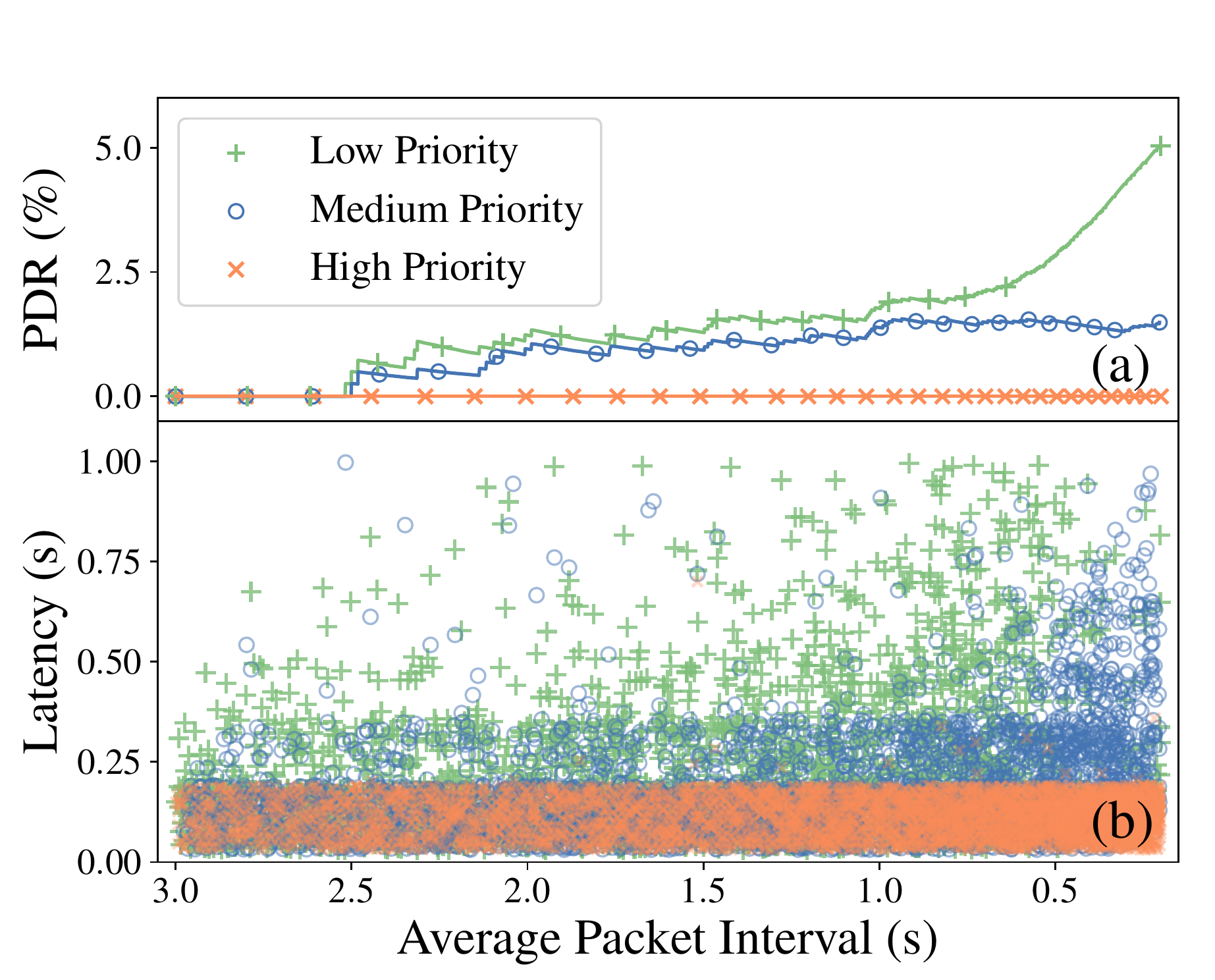}
    \caption{\small Measurement of latency and PDR.}\label{fig:latency}
\end{figure}

\subsubsection{Implementation and Evaluation of \pmac{}}
For fair comparison with PriorityMAC~\cite{prioritymac}, we used the 10ms slot timing of 802.15.4e in the \pmac{} implementation. Since PriorityMAC adds two subslots (0.4ms each) before each time slot, we also extended the SlotDuration and LongGT of {\pmac{}} by 0.8ms each. Table~\ref{tab:time_const} summarizes the slot timing of {\pmac{}}, and the \emph{Adjusted TxOffset} is computed as follows:
$$\text{\em Adjusted TxOffset} = \text{\em TxOffset} + \text{(Priority Level)} \times \text{\em PriorityTick};$$
\vspace{-0.2in}

With a given extended SlotDuration, the number of priority levels that {\pmac{}} can support, denoted as $N$, is a function of {\em PriorityTick}. In our {\pmac{}} implementation, $N$ is computed by $N = \left \lfloor{\frac{0.8ms}{\text{\em PriorityTick}}}\right \rfloor +1$. Fig.~\ref{fig:PriorityTick}(a) shows how $N$ changes when the {\em PriorityTick} varies from 30$\mu$s to 400$\mu$s with a step size of 30$\mu$s (the timer resolution in the OpenWSN stack). Compared to PriorityMAC which can only support 3 effective priority levels, {\pmac{}} can support up to 14 priority levels in theory by extending the time slot with the same amount (0.8ms). {\hu Fig.~\ref{fig:PriorityTick}(a) also illustrates the bandwidth improvement, defined as $\frac{10.8ms}{10ms + 2 \times PriorityTick} \times 100\%$, when \pmac{} only needs to maintain 3 priority levels. It can be seen that the bandwidth is improved by $7\%$ due to the reduction of the {\em PriorityTick} from $400\mu$s to $30\mu$s with 3 priority levels.}

\begin{figure*}[t]
    \centering
{\extend{}
\subfloat[Legends used in the figures\label{fig:final_validation:legend}]{
    \includegraphics[width=0.99\textwidth,page=4]{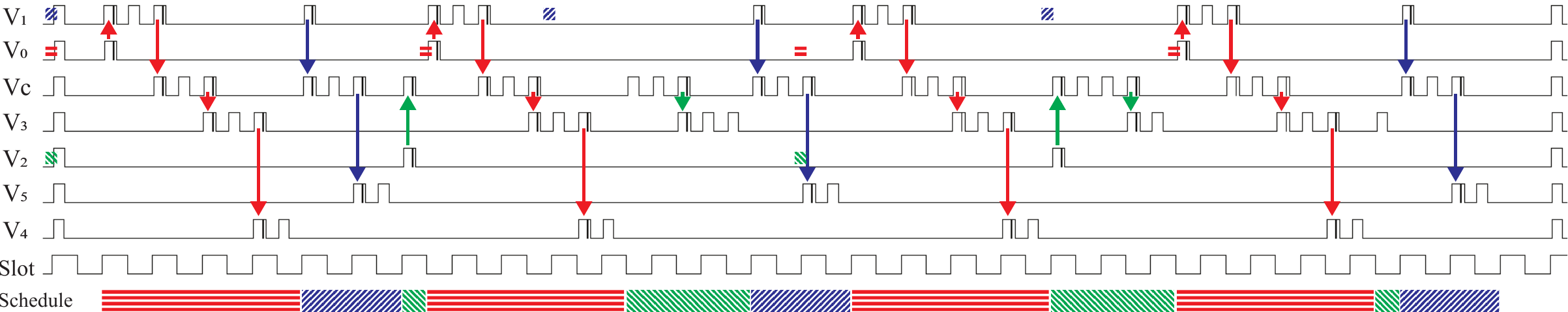}
}\\
\subfloat[Nominal mode {\han (time slot 1-60)}\label{fig:final_validation:a}]{
    \includegraphics[width=0.99\textwidth,page=1]{Figures/waveform.pdf}
}\\
\subfloat[Rhythmic mode using \fpas{} {\han (time slot 61-120)}\label{fig:final_validation:b}]{
    \includegraphics[width=0.99\textwidth,page=2]{Figures/waveform.pdf}
}\\
\subfloat[Rhythmic mode using r\fpas{} {\han (time slot 61-120)}\label{fig:final_validation:c}]{
    \includegraphics[width=0.99\textwidth,page=3]{Figures/waveform.pdf}
}
    \vspace{-0.05in}
    \caption{\small Slot information and radio activities in the test case captured by Logic Analyzer}
    \label{fig:final_validation}
    \vspace{-0.1in}
}
\end{figure*}

\vspace{0.025in}
\noindent {\bf Measurement of Packet Error Rate (PER):} Reducing the size of {\em PriorityTick} can support a larger number of priority levels in the RTWN system. Setting the {\em PriorityTick} too small, however, either causes nodes to lose synchronization, or make low priority senders unable to detect high priority packet transmissions and cause transmission collisions. It is thus important to identify safe {\em PriorityTick} values to make {\pmac{}} work appropriately. For this purpose, we set up a testing network with two senders talking directly to one receiver. We intentionally configure the senders to transmit in the same time slot and assign them with different priorities (using $D$ to denote the distance between the priority levels), and measure the number of correctly received packets on the receiver side. {\cam{}We define Packet Error Rate (PER) as the number of the failed transmissions divided by the number of total transmissions.} During the test, each sender generates 10,000 packets. Fig.~\ref{fig:PriorityTick}(b) shows the PER of the high priority packets by varying the size of {\em PriorityTick} from $400\mu$s to $30\mu$s. The PER of the low priority packets are always 100\%, and are thus omitted in the figure. It can be observed that {\pmac{}} works properly under most of the {\em PriorityTick} settings. Its PER only increases when the {\em PriorityTick} is reduced to $30\mu$s. This indicates that the {\pmac{}} implementation on our device node (TI CC2538 SoC) can safely support up to 9 priority levels when the {\em PriorityTick} is set to be no less than $60\mu$s. When the {\em PriorityTick} is set at $30\mu$s, it also can be observed from Fig.~\ref{fig:PriorityTick}(b) that the PER will drop (from around 10\% to 5\%) when the distance between the two priority levels increases (from $D=1$ to $D=2$).

\vspace{0.025in}
\noindent {\bf Measurement of Application Layer {\hu Performance}:} {\hu To see how \pmac{} behaves in terms of packet transmission latency and packet drop rate (PDR) for different priority levels,}
we set up a testing network with three senders and a controller node. The three senders are assigned with different priorities (high, medium and low). Their schedules are configured in a way that they transmit in the same time slot every slotframe (with a length of $165m$s). The retransmission mechanism is enabled on all the senders so that if collision happens, the failed transmission retries in the next slotframe until a maximum number of $5$ retries is reached, and the packet is then dropped. {\cam{}We define packet drop rate (PDR) as the number of dropped packets divided by the number of total packets.} We connect the controller node to a STM32F103 MCU through a UART port to control the packet generation on the senders. This STM32F103 MCU connects to the GPIO of each sender, and uses a pulse signal to trigger the sender to generate a packet. In the experiments, the controller node initiates and timestamps the packet generation. By comparing it to the timestamp of the packet reception, the application layer latency is obtained. After a successful packet reception, the controller node waits for a randomly selected time interval, and then triggers the next packet generation. {\hu To test latency and packet drop rate, w}e gradually reduce this time interval to increase the traffic volume. This will cause more transmission collisions in the network, which leads to more packet retransmissions and packet drop. 

Fig.~\ref{fig:latency}(a) and (b) show the PDR and application layer latency respectively for the three senders during the test. From the results, we observe that the packets from the high priority sender can always be transmitted in its first attempt while the medium and low priority senders have to yield upon collision by retransmission in future slotframes and suffer longer application layer latency. Similarly, when collision happens with the packets from the medium priority sender, the low priority sender has to yield again thus it is observed to have the longest latency. In Fig.~\ref{fig:latency}(a), we note that the high priority packets can always guarantee the delivery and thus its PDR is consistently 0. On the other hand, both the low priority sender and medium priority sender experience increasing packet losses when the volume of the network traffic grows, and the impact on the low priority sender is more severe. 

\subsubsection{Functional validation in a multi-task multi-hop RTWN}
We validate the correctness and effectiveness of \fpas{} by deploying it on a 7-node multi-hop network as shown in Fig.~\ref{fig:structure}. The system running in the network consists of three tasks, 
{\extend{} $\tau_0=\{\{V_0, V_1, V_c, V_3, V_4\}, 15, 8\}, \tau_1=\{\{V_2, V_c, V_3\}, 30, 6\}$ and $\tau_2=\{\{V_1, V_c, V_5\}, 20, 4\}$.} 
For each task, the first element denotes the routing path and the second one denotes its period (relative deadline). {\extend{} The third element represents the number of slots assigned to $\tau_i$, {\em i.e.} $w^+_i$, in the static schedule. We further assume that $\tau_0$ is the rhythmic task and $\overrightarrow{P_0} (\overrightarrow{D_0}) = [12, 12, 12, 12, 12]$}. The system starts running in the nominal mode at slot 1 and then switches to the rhythmic mode from slot 61. We use a Logic Analyzer to capture the radio activities from a pin of each device during time slot 1 - 120. 
In order to validate the effectiveness of \fpas{} in both reliable and lossy RTWNs, we deploy the heuristic presented in Alg.~\ref{alg:greedy} to determine the dropped packet set and use the heuristic in Alg.~\ref{alg:greedy_r} to determine the dropped transmission set, respectively. For the sake of clarity, we denote the latter as r\fpas{}.
\footnote{Both TBS and PBS models are tested. For simplicity, only the result from TBS is illustrated.}

The captured results on our testbed are illustrated in Fig.~\ref{fig:final_validation}. Specifically, {\extend{} Fig.~\ref{fig:final_validation:legend} summarizes the legends}. Fig.~\ref{fig:final_validation:a} shows the system nominal mode during time slot 1-60. Fig.~\ref{fig:final_validation:b} and Fig.~\ref{fig:final_validation:c} demonstrate the system rhythmic modes using \fpas{} and r\fpas{} during time slot 61-120, respectively.
{\extend{}
In Fig.~\ref{fig:final_validation:a}, \ref{fig:final_validation:b}, and \ref{fig:final_validation:c}, 7 waveforms represent the radio activities (transmitting, receiving, or listening) for all the 7 nodes, as labeled on the left side of the figures. Each falling or rising edge of the waveform in the \emph{Slot} row (lower part of the figures) marks the start of a new slot. In the bottom \emph{Schedule} row, slot assignments are indicated using different colors and patterns. 
Each colored small block indicates the release time of the corresponding task at a certain node. Each transmission is denoted by a colored arrow of which the starting and ending points represent the sending and receiving nodes, respectively. In the rhythmic mode, a colored circle denotes a dropped periodic transmission preempted by a rhythmic one. For example, in Fig.~\ref{fig:final_validation:a}, $\tau_1$ releases its first packet at slot $1$ and is transmitted from $V_2$ to $V_c$ at slot 15.

Fig.~\ref{fig:final_validation:a} illustrates radio activities of the system in the nominal mode (1-60 slots), after which the system switches to the rhythmic mode. 
Given by the static schedule, each packet $\pkt_{i,k}$ is allocated with extra slots for retransmission in the system nominal mode. But according to our testbed result shown in Fig.~\ref{fig:final_validation:a}, each transmission successes in its first assigned time slot without using any retransmission slot.
During the rhythmic mode (slot 61-120), task $\tau_0$ releases 5 packets as indicated in Fig.~\ref{fig:final_validation:b} and Fig.~\ref{fig:final_validation:c}. To accommodate the increased workload in the system rhythmic mode, \fpas{} determines to drop both two packets of $\tau_1$ released in the system rhythmic mode. Since the sender of $\tau_1$, {\em i.e.} $V_2$, does not receive the disturbance information, it still follows the static schedule to transmit $\tau_1$ at the assigned slots ({\em e.g.} 75). However, to ensure the transmission of the rhythmic packets, all these periodic transmissions are preempted under our designed \pmac{} mechanism (indicated by circles). 
By contrast, r\fpas{} chooses not to completely drop two packets of $\tau_1$ but to reduce the number of slots assigned to $\tau_1$'s packets both from 6 to 4. In this case, both packets of $\tau_1$ still have chances to be successfully transmitted to the destination as illustrated in Fig.~\ref{fig:final_validation:c}. This significantly increases the reliability of $\tau_1$ compared to that under the dropping decision made by \fpas{}.
These results above match those from the simulation of \fpas{} and r\fpas{} under the same experiment settings.}

\subsection{Simulation Studies}

\subsubsection{Simulation Setup}\label{ssec:setup}
In the simulation studies, we compare {\fpas{}} with both OLS and \dpas{} approaches that are able to handle unexpected external disturbances in RTWNs. 
The following two key performance metrics are used in the studies.

\noindent {\bf Success Ratio (SR)}: SR is defined as the fraction of feasible task sets over all the generated task sets. A task set is feasible only if 
a specified DRT can be achieved.

{\extend
\noindent {\bf Degradation Rate (DR)}: DR is defined as the ratio between the sum of reliability degradation from all periodic packets ({\em i.e.} $\sum \delta_{i,k}$) and the total number of generated periodic packets in the system rhythmic mode.
}

For fair comparison, we use randomly generated task sets. Each random task set is generated according to a target nominal utilization $U^*$ and by incrementally adding random periodic tasks to an initially empty set $\tset$.
The generation of each random task $\tau_i$ is controlled by the following parameters: (i) the number of hops $H_i$ drawn from the uniform distribution over $\{2, 3, \dots, 16\}$, (ii) nominal period $P_i$ drawn from the uniform distribution over $\{H_i, \dots, 500\}$, and (iii) nominal relative deadline $D_i$  equal to period $P_i$\footnote{ The unit of $P_i$ and $D_i$ values is one time slot and the range of the parameters are determined according to realistic RTWN applications.}. 
{\extend To guarantee reliable transmission, we use the TBS model to determine the number of slots assigned for each packet according to Eq. (\ref{equ:pdr}).}

After all tasks in $\tset$ are generated, we randomly select one of them as the rhythmic task $\tau_0$ and assume that the disturbance is detected at the $k$-th instance of $\tau_0$ where $k$ is randomly selected from $\{1, \dots, 20\}$. The period vector $\overrightarrow{P_0}$ ($\overrightarrow{D_0}=\overrightarrow{P_0}$) is generated by controlling the following parameters: (i) the number of elements in $\overrightarrow{P_0}$, $R$, and (ii) the initial rhythmic period ratio, $\gamma = P_{0,1} / P_0$. To better control the workload of the rhythmic task, we fix $\gamma$ to $0.2$ and tune $R$ which can be any integer in the set of $\{4,6,\dots,16\}$. Given $\gamma$ and $R$, the value of each rhythmic period $P_{0,k}$ can be computed by $P_{0,k} \, (1 \leq k \leq R) = \lfloor P_0 \times (\gamma + (k-1) \times \frac{1-\gamma}{R})\rfloor$.

Additional parameters needed are summarized as follows: 1) the maximum allowed DRT $\alpha$ which is some integer multiple of the nominal period of the rhythmic task $P_0$; 2) the end point scaling factor $\beta$ which determines the upper bound of the end point $t_{ep}^u$ where $t_{ep}^u = \trn + (\beta-1) \times P_0$. Naturally, a larger $\beta$ will lead to better performance in terms of {\extend a lower reliability degradation} but may cause longer DHL. To keep $\beta$ as small as possible without performance degradation, we set $\beta = 4$ which means the disturbance must be completely handled within 3 nominal periods after the rhythmic task returns to its nominal state. 
Other parameters used in OLS and \dpas{}, {\em e.g.} the payload size of a broadcast packet, are set to the same as that in \cite{zhang2017distributed} for fair comparison.

\eat{
\begin{figure*}[t]
\begin{minipage}[t]{0.33\textwidth}
    \centering
    \resizebox{0.9\columnwidth}{!}{\begin{tikzpicture}
\begin{axis}[
view={120}{40},
xlabel=$R$,
ylabel=$U^*$, 
zlabel=Average PDR DR (\%),
xtick distance=2,
ytick distance=0.1,
enlargelimits=false,
3d box=complete,
grid,
grid style={dashed,gray!40},
axis line style={gray!40},
legend pos=north east,
legend style={font=\small},
z buffer=sort
]
\addplot3[surf,opacity=0.35,blue,faceted color=black] coordinates {(0.5,0.5,0)};
\addplot3[surf,opacity=0.35,green,faceted color=black] coordinates {(0.5,0.5,0)};
\addplot3[surf,opacity=0.35,red,faceted color=black] coordinates {(0.5,0.5,0)};
\addplot3[surf,opacity=0.35,yellow,faceted color=black] coordinates {(0.5,0.5,0)};
\legend{OLS, FD-PaS, rFD-PaS, D2-PaS}
\addplot3[surf,opacity=0.4,blue,faceted color=black] table {Rrate_ols.dat};
\addplot3[surf,opacity=0.4,green,faceted color=black] table {Rrate_fpas.dat};
\addplot3[surf,opacity=0.4,red,faceted color=black] table {Rrate_efpas.dat};
\addplot3[surf,opacity=0.4,yellow,faceted color=black] table {Rrate_dpas.dat};
\end{axis}
\end{tikzpicture}}
    \caption{\small Comparison of the average PDR degradation rate}
    \label{fig:dr}
\end{minipage}
\begin{minipage}[t]{0.31\textwidth}
    \includegraphics[width=\textwidth]{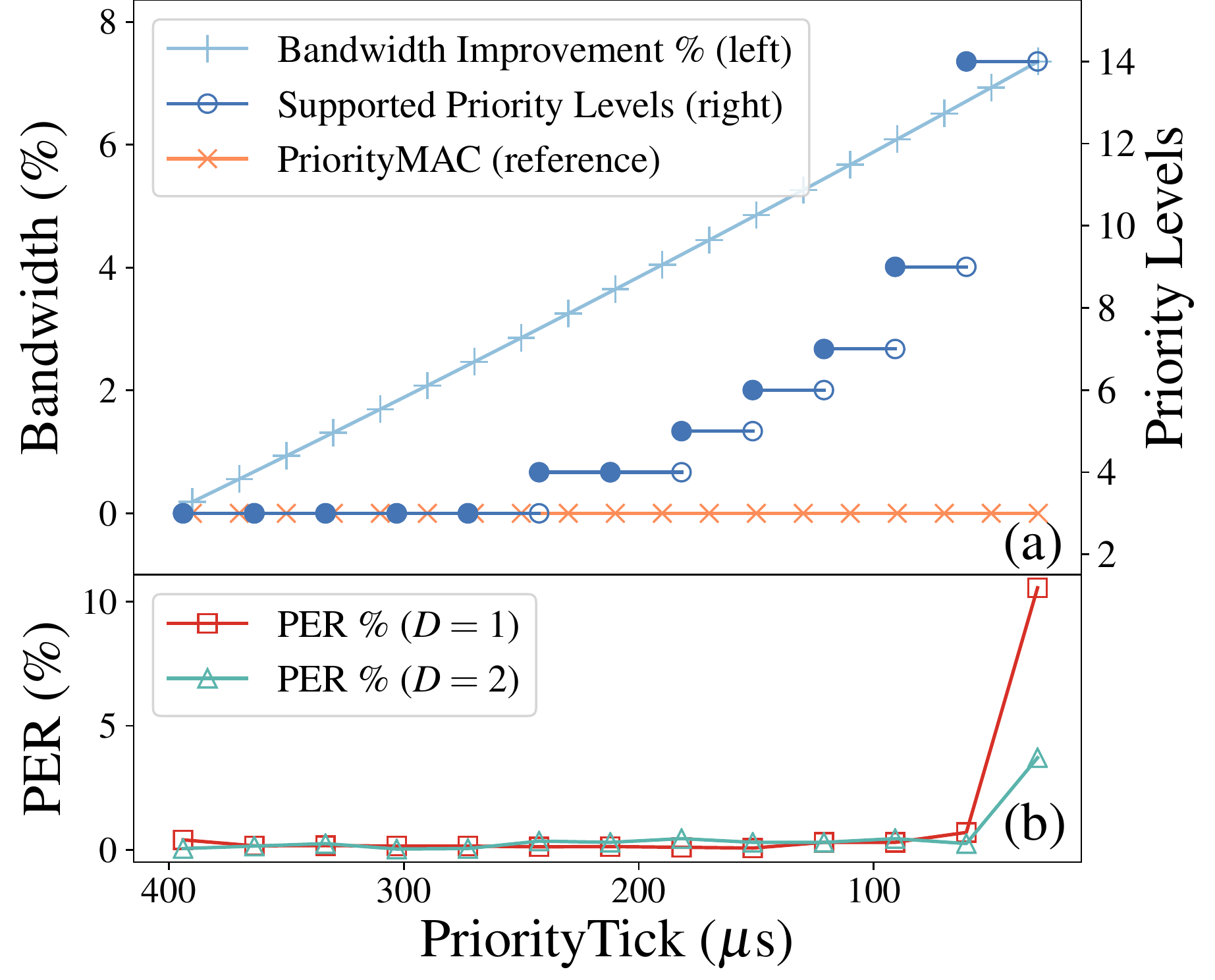}
    \caption{\small Priorities and PER vs. PriorityTick}
    \label{fig:PriorityTick}
\end{minipage}
\begin{minipage}[t]{0.35\textwidth}
    \includegraphics[width=\textwidth]{Figures/latency.pdf}
    \caption{\small Measurement of latency and PDR}
    \label{fig:latency}
\end{minipage}
\end{figure*}
}
\begin{figure}
    \centering
    \includegraphics[width=0.9\columnwidth]{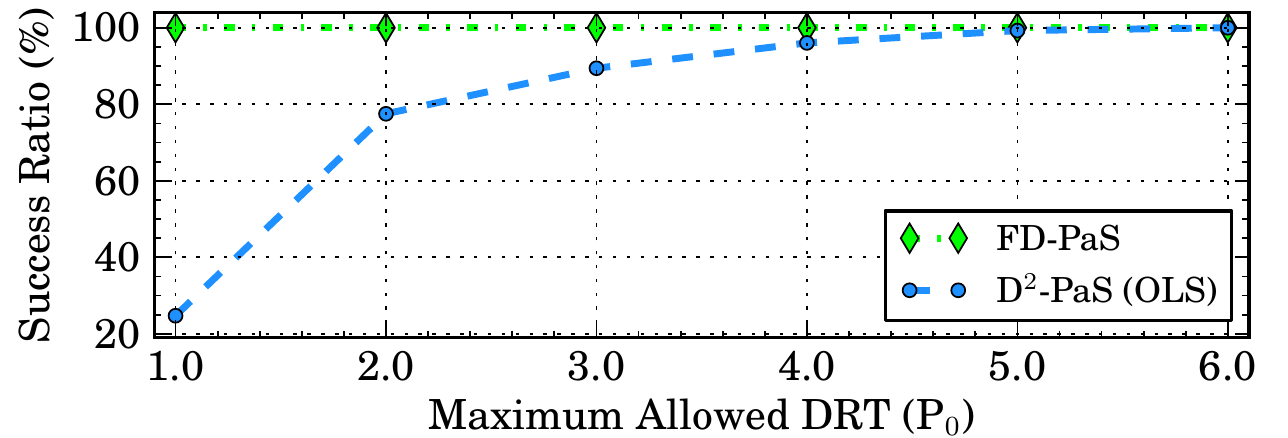}
    \caption{\small Comparison of SR with a nominal utilization $U^*=0.5$.
    }\label{fig:sr}
\end{figure}

\subsubsection{Simulation Results}
In the first set of experiments, we compare the SR of OLS, \dpas{} and \fpas{} for randomly generated task sets with a nominal utilization $U^*=0.5$ (see Fig.~\ref{fig:sr}) by varying the maximum allowed DRT, $\alpha$, from $P_0$ to 6$P_0$, with a step size of $P_0$. The results with other target nominal utilization show similar behavior and thus are omitted. In the experiments, each data point is based on 10,000 randomly generated task sets. As can be observed from Fig.~\ref{fig:sr}, \dpas{} and OLS have exactly the same SR because they both rely on a broadcast packet to propagate the disturbance information to the entire network. Under \dpas{} and OLS, the task sets are all feasible only when $\alpha=6P_0$, {\em i.e.}, the maximum allowed DRT is set to be $6$ nominal periods of the rhythmic task. However, in most practical settings, the RTWN is required to provide fast response to the disturbance within one nominal period, {\em i.e.}, $\alpha=P_0$. In this case, the SR of both \dpas{} and OLS drops to $25\%$. On the other hand, \fpas{} can always achieve $100\%$ SR since the RTWN can start handling disturbance from the beginning of the next nominal period as required by Constraint (i) of \prob{} in \fpas{}. 

In the second set of experiments, we compare the average DR of OLS, \dpas{} and \fpas{} by varying the nominal utilization $U^*$ of the randomly generated task sets and the number of rhythmic periods $R$ of the selected rhythmic task. 
{\extend As both OLS and \dpas{} do not consider unreliable links in packet scheduling, we first extend them to support reliable transmission. Specifically, all packets in OLS and \dpas{} are reliably transmitted using $w_i^+$ slots in the static schedule. In the dynamic schedule, transmission and retransmission slots assigned to each packet are not differentiated, {\em i.e.}, each packet can either be reliably scheduled or dropped.
For our proposed framework, both \fpas{} and r\fpas{} are simulated to determine the dropped packet and transmission sets in reliable and lossy RTWNs, respectively.}


\begin{figure}
    \centering
    \includegraphics[width=0.8\columnwidth]{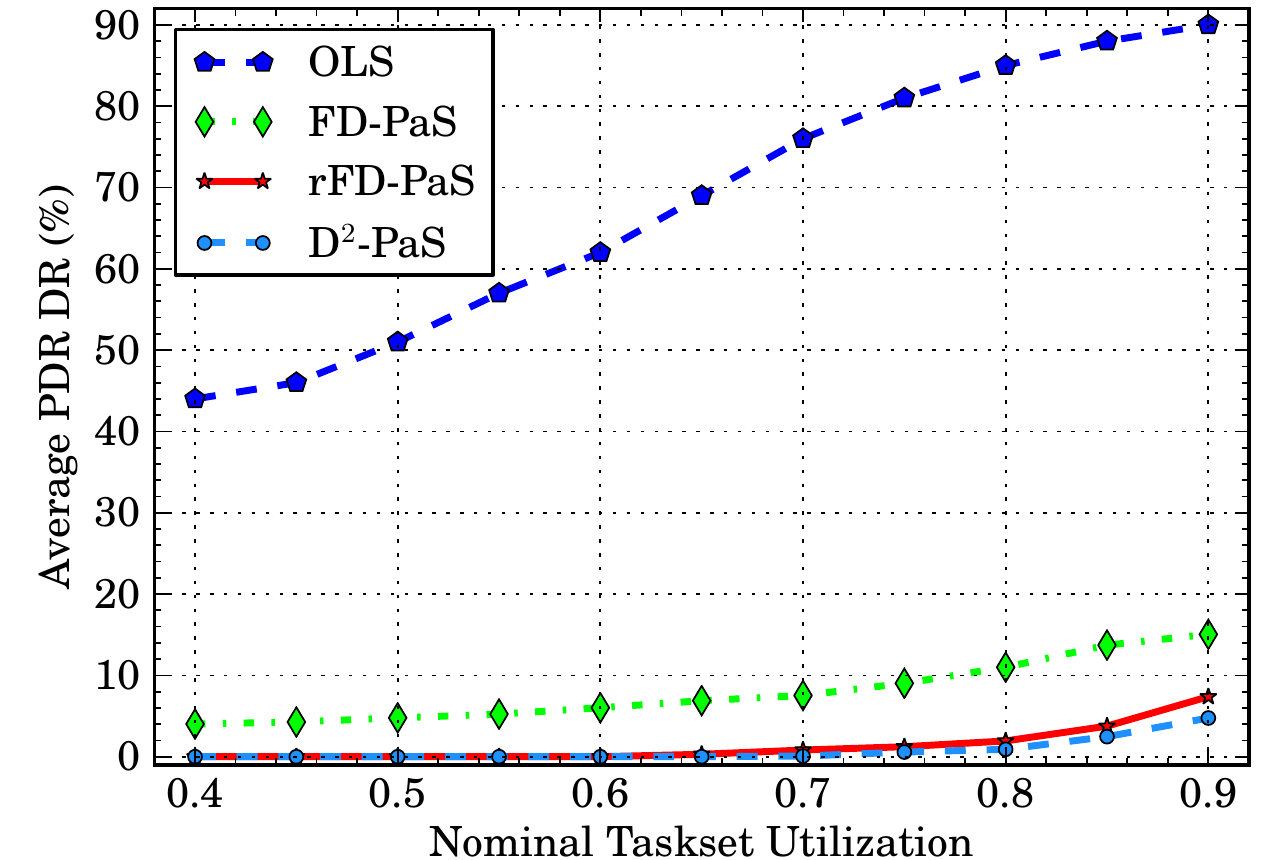}
    \caption{\small Comparison of DR with $R=10$.}\label{fig:dr_u}
    \vspace{-0.1in}
\end{figure}
\begin{figure}
    \centering
    \includegraphics[width=0.8\columnwidth]{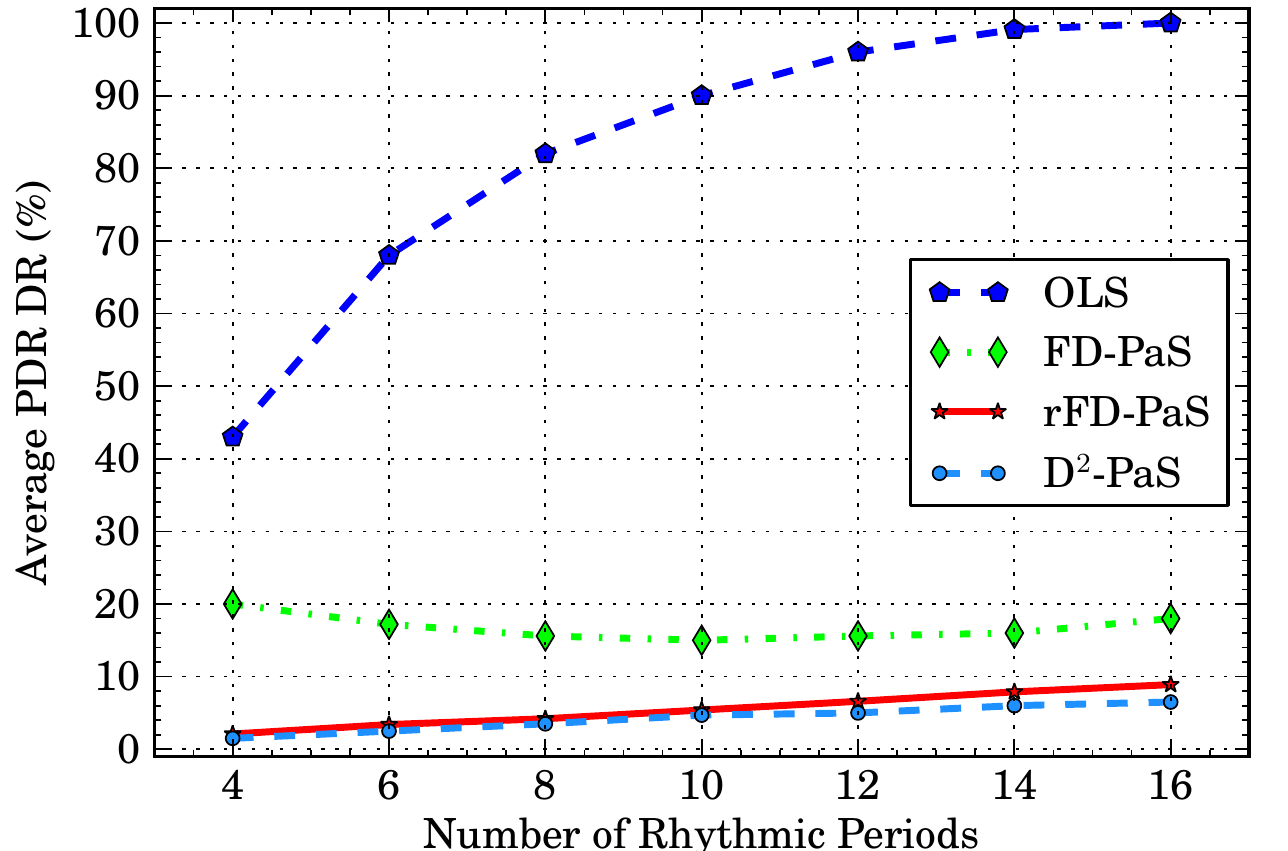}
    \caption{\small Comparison of DR with a nominal utilization $U^*=0.9$.}\label{fig:dr_r}
\end{figure}

{\extend Fig.~\ref{fig:dr_u} and Fig.~\ref{fig:dr_r} summarize the average DR as a function of $U^*$ and $R$, respectively, where each data point is the average value of $1,000$ trials.
From the figures, we can observe that both \fpas{} and r\fpas{} have significantly lower average DR over OLS ($53\%$ on average and $82\%$ in the best case). The much higher DR of OLS is due to OLS's large broadcast overhead resulted from the centralized approach.  Compared to \dpas{}, \fpas{} drops around $12\%$ more periodic packets on average since \fpas{} makes local packet dropping decisions thus tends to drop more packets. On the other hand, r\fpas{} has slightly higher average DR over \dpas{} ($1.4\%$ on average), which is contrary to our expectation since r\fpas{} has higher flexibility on adjusting the dynamic schedule. The main reason here is that the \fpas{} framework suffers from the possibility of infeasible end point selection since the system must reuse the static schedule {\tz after a well-selected end point}. Nonetheless, with the significant improvement in the average success ratio ($75\%$ when the maximum allowed DRT is set to be $P_0$), the degradation in DR is acceptable. 
}

{\han Another} observation from Fig.~\ref{fig:dr_r} is that 
the average DR of \fpas{} first drops when $R$ increases from $4$ to $10$, and then increases when $R$ keeps increasing from $10$ to $16$. 
One would expect that the average DR should monotonically increase when both $U^*$ and $R$ increase (as the case for both OLS and \dpas). Through extensive simulation studies (detailed results are omitted due to page limit), we observe that the average DR of \fpas{} is highly dependent on {\em slot utilization} 
(fraction of number of slots contributed by the dropped periodic packets, used by rhythmic packets).
When $R$ increases from a small value ({\em e.g.}, 4), the slot utilization also increases\footnote{For example, when $R$ is small, even if the rhythmic packet only needs one slot from a periodic packet, the whole periodic packet has to be dropped and all its other assigned slots are wasted.}. That is, though we need to drop more packets when the workload of the rhythmic task increases, the number of dropped packets grows more slowly than the number of packets in the system rhythmic mode. This explains why the DR of \fpas{} decreases when $R$ increases from $4$ to $10$. On the other hand, we found that when $R$ keeps increasing, the slot utilization starts decreasing, and the number of dropped packets grows faster than the total number of packets in the system rhythmic mode. This leads to the observation that the DR of \fpas{} starts to increase from $10$ to $16$.

\vspace{-0.05in}
\section{Conclusion and Future Work}
\label{sec:conclusions}


In this paper, we propose \fpas{}, a fully distributed packet scheduling framework, to handle unexpected disturbances in lossy RTWNs. Unlike centralized approaches where dynamic schedules are generated in the controller node and disseminated to the entire network, \fpas{} makes on-line decisions to handle disturbances locally without any centralized control. Such a fully distributed framework not only significantly improves the scalability but also provides guaranteed fast response to external disturbances. Our \fpas{} framework including both the multi-priority data link layer design and the dynamic schedule construction method is  implemented on our RTWN testbed. Extensive experiments have been conducted to validate its correctness and effectiveness. As future work, we will extend \fpas{} to support multi-channel settings and will explore how to handle concurrent disturbances in a fully distributed manner.

\eat{Through this paper, we make the following contributions. (i) We propose a partial disturbance propagation scheme to achieve guaranteed fast disturbance response time. (ii) To avoid transmission collisions occurred between the dynamic and static schedules, we propose a multi-priority wireless packet preemption mechanism in the data link layer to guarantee that higher-priority rhythmic packets can always get delivered. (iii) To determine a temporary dynamic schedule for handling the disturbance, we formulate a packet scheduling problem to drop minimum number of packets. By proving that the packet dropping problem is NP-hard, we propose both an optimal ILP solution and an efficient heuristic to be executed by individual nodes locally.}





\vspace{-0.1in}
\bibliographystyle{IEEEtran}
\bibliography{network}

\begin{thebibliography}{10}
\providecommand{\url}[1]{#1}
\csname url@samestyle\endcsname
\providecommand{\newblock}{\relax}
\providecommand{\bibinfo}[2]{#2}
\providecommand{\BIBentrySTDinterwordspacing}{\spaceskip=0pt\relax}
\providecommand{\BIBentryALTinterwordstretchfactor}{4}
\providecommand{\BIBentryALTinterwordspacing}{\spaceskip=\fontdimen2\font plus
\BIBentryALTinterwordstretchfactor\fontdimen3\font minus
  \fontdimen4\font\relax}
\providecommand{\BIBforeignlanguage}[2]{{%
\expandafter\ifx\csname l@#1\endcsname\relax
\typeout{** WARNING: IEEEtran.bst: No hyphenation pattern has been}%
\typeout{** loaded for the language `#1'. Using the pattern for}%
\typeout{** the default language instead.}%
\else
\language=\csname l@#1\endcsname
\fi
#2}}
\providecommand{\BIBdecl}{\relax}
\BIBdecl

\bibitem{Hei_INFOCOM13}
X.~Hei, X.~Du, S.~Lin, and I.~Lee, ``{PIPAC: Patient infusion pattern based
  access control scheme for wireless insulin pump system},'' in \emph{INFOCOM},
  2013.

\bibitem{Gatsis_ACC13}
K.~Gatsis, A.~Ribeiro, and G.~Pappas, ``Optimal power management in wireless
  control systems,'' in \emph{ACC}, 2013.

\bibitem{Karbhari_Elsevier09}
V.~M. Karbhari and F.~Ansari, ``Structural health monitoring of civil
  infrastructure systems,'' \emph{\em CRC Press}, 2009.

\bibitem{Crenshaw_TECS07}
T.~L. Crenshaw, S.~Hoke, A.~Tirumala, and M.~Caccamo, ``Robust implicit edf: A
  wireless mac protocol for collaborative real-time systems,'' \emph{ACM
  Transactions on Embedded Computing Systems}, 2007.

\bibitem{Shen_WN13}
W.~Shen, T.~Zhang, M.~Gidlund, and F.~Dobslaw, ``{SAS}-{TDMA}: a source aware
  scheduling algorithm for real-time communication in industrial wireless
  sensor networks,'' \emph{Wireless Networks}, 2013.

\bibitem{ferrari2012low}
F.~Ferrari, M.~Zimmerling, L.~Mottola, and L.~Thiele, ``Low-power wireless
  bus,'' in \emph{SenSys}, 2012.

\bibitem{fdpas}
T.~Zhang, T.~Gong, Z.~Yun, S.~Han, Q.~Deng, and X.~S. Hu, ``Fd-pas: A fully
  distributed packet scheduling framework for handling disturbances in
  real-time wireless networks,'' in \emph{RTAS}, 2018.

\bibitem{Han_RTAS11}
S.~Han, X.~Zhu, D.~Chen, A.~K. Mok, and M.~Nixon, ``Reliable and real-time
  communication in industrial wireless mesh networks,'' in \emph{RTAS}, 2011.

\bibitem{Leng_RTSS14}
Q.~Leng, Y.-H. Wei, S.~Han, A.~K. Mok, W.~Zhang, and M.~Tomizuka, ``Improving
  control performance by minimizing jitter in {RT}-{WiFi} networks,'' in
  \emph{RTSS}, 2014.

\bibitem{Saifullah_RTSS10}
A.~Saifulah, C.~Lu, Y.~Xu, and Y.~Chen, ``Real-time scheduling for
  {W}ireless{HART} networks,'' in \emph{RTSS}, 2010.

\bibitem{Sha_RTSS13}
M.~Sha, R.~Dor, G.~Hackmann, C.~Lu, T.-S. Kim, and T.~Park, ``Self-adapting mac
  layer for wireless sensor networks,'' in \emph{RTSS}, 2013.

\bibitem{Chipara_ECRTS11}
O.~Chipara, C.~Wu, C.~Lu, and W.~G. Griswold, ``Interference-aware real-time
  flow scheduling for wireless sensor networks,'' in \emph{ECRTS}, 2011.

\bibitem{zimmerling2017adaptive}
M.~Zimmerling, L.~Mottola, P.~Kumar, F.~Ferrari, and L.~Thiele, ``Adaptive
  real-time communication for wireless cyber-physical systems,'' \emph{ACM
  Transactions on Cyber-Physical Systems}, 2017.

\bibitem{Li2015Incorporating}
B.~Li, L.~Nie, C.~Wu, H.~Gonzalez, and C.~Lu, ``Incorporating emergency alarms
  in reliable wireless process control,'' in \emph{ICCPS}, 2015.

\bibitem{palattella2013optimal}
M.~R. Palattella, N.~Accettura, L.~A. Grieco, G.~Boggia, M.~Dohler, and
  T.~Engel, ``On optimal scheduling in duty-cycled industrial iot applications
  using ieee802. 15.4 e tsch,'' \emph{IEEE Sensors Journal}, 2013.

\bibitem{soua2012modesa}
R.~Soua, P.~Minet, and E.~Livolant, ``Modesa: an optimized multichannel slot
  assignment for raw data convergecast in wireless sensor networks,'' in
  \emph{IPCCC}, 2012.

\bibitem{soua2013musika}
R.~Soua, E.~Livolant, and P.~Minet, ``Musika: A multichannel multi-sink data
  gathering algorithm in wireless sensor networks,'' in \emph{IWCMC}, 2013.

\bibitem{tinka2010decentralized}
A.~Tinka, T.~Watteyne, and K.~Pister, ``A decentralized scheduling algorithm
  for time synchronized channel hopping,'' in \emph{ADHOCNETS}, 2010.

\bibitem{morell2013label}
A.~Morell, X.~Vilajosana, J.~L. Vicario, and T.~Watteyne, ``Label switching
  over ieee802. 15.4 e networks,'' \emph{Transactions on Emerging
  Telecommunications Technologies}, 2013.

\bibitem{soua2016wave}
R.~Soua, P.~Minet, and E.~Livolant, ``Wave: a distributed scheduling algorithm
  for convergecast in ieee 802.15. 4e tsch networks,'' \emph{Transactions on
  Emerging Telecommunications Technologies}, 2016.

\bibitem{duquennoy2015orchestra}
S.~Duquennoy, B.~Al~Nahas, O.~Landsiedel, and T.~Watteyne, ``Orchestra: Robust
  mesh networks through autonomously scheduled tsch,'' in \emph{SenSys}, 2015.

\bibitem{hong2015online}
S.~Hong, X.~S. Hu, T.~Gong, and S.~Han, ``On-line data link layer scheduling in
  wireless networked control systems,'' in \emph{ECRTS}, 2015.

\bibitem{zhang2017distributed}
T.~Zhang, T.~Gong, C.~Gu, H.~Ji, S.~Han, Q.~Deng, and X.~S. Hu, ``Distributed
  dynamic packet scheduling for handling disturbances in real-time wireless
  networks,'' in \emph{RTAS}, 2017.

\bibitem{zhang2018distributed}
T.~Zhang, T.~Gong, S.~Han, Q.~Deng, and X.~S. Hu, ``Distributed dynamic packet
  scheduling framework for handling disturbances in real-time wireless
  networks,'' \emph{IEEE Transactions on Mobile Computing}, 2018.

\bibitem{zheng2016wirarb}
T.~Zheng, M.~Gidlund, and J.~{\AA}kerberg, ``Wirarb: A new mac protocol for
  time critical industrial wireless sensor network applications,'' \emph{IEEE
  Sensors Journal}, 2016.

\bibitem{shao2014multi}
C.~Shao, S.~Leng, Y.~Zhang, and H.~Fu, ``A multi-priority supported medium
  access control in vehicular ad hoc networks,'' \emph{Computer
  Communications}, 2014.

\bibitem{prioritymac}
W.~Shen, T.~Zhang, F.~Barac, and M.~Gidlund, ``Prioritymac: A priority-enhanced
  mac protocol for critical traffic in industrial wireless sensor and actuator
  networks,'' \emph{IEEE Transactions on Industrial Informatics}, 2014.

\bibitem{song2008wirelesshart}
J.~Song, S.~Han, A.~Mok, D.~Chen, M.~Lucas, M.~Nixon, and W.~Pratt,
  ``Wirelesshart: Applying wireless technology in real-time industrial process
  control,'' in \emph{RTAS}, 2008.

\bibitem{isa10011a}
{ISA Standard}, ``Wireless systems for industrial automation: process control
  and related applications,'' \emph{ISA-100.11 a-2009}, 2009.

\bibitem{dujovne20146tisch}
D.~Dujovne, T.~Watteyne, X.~Vilajosana, and P.~Thubert, ``6tisch: deterministic
  ip-enabled industrial internet (of things),'' \emph{IEEE Communications
  Magazine}, 2014.

\bibitem{probabilistic}
Y.~Chen, H.~Zhang, N.~Fisher, L.~Y. Wang, and G.~Yin, ``Probabilistic
  per-packet real-time guarantees for wireless networked sensing and control,''
  \emph{IEEE Transactions on Industrial Informatics}, 2018.

\bibitem{flexible}
R.~Brummet, D.~Gunatilaka, D.~Vyas, O.~Chipara, and C.~Lu, ``A flexible
  retransmission policy for industrial wireless sensor actuator networks,'' in
  \emph{ICII}, 2018.

\bibitem{Kim_ICCPS12}
J.~Kim, K.~Lakshmanan, and R.~Rajkumar, ``Rhythmic tasks: A new task model with
  continually varying periods for cyber-physical systems,'' in \emph{ICCPS},
  2012.

\bibitem{de2014ieee}
D.~De~Guglielmo, G.~Anastasi, and A.~Seghetti, ``From ieee 802.15. 4 to ieee
  802.15. 4e: A step towards the internet of things,'' in \emph{Advances onto
  the Internet of Things}, 2014.

\bibitem{karp1972reducibility}
R.~M. Karp, ``Reducibility among combinatorial problems,'' in \emph{Complexity
  of computer computations}, 1972.

\bibitem{watteyne2012openwsn}
T.~Watteyne, X.~Vilajosana, B.~Kerkez, F.~Chraim, K.~Weekly, Q.~Wang,
  S.~Glaser, and K.~Pister, ``Open{WSN}: a standards-based low-power wireless
  development environment,'' \emph{Transactions on Emerging Telecommunications
  Technologies}, 2012.

\bibitem{RFC7554}
T.~Watteyne, M.~Palattella, and L.~Grieco, ``Using {IEEE} 802.15.4e
  time-slotted channel hopping ({TSCH}) in the internet of things ({I}o{T}):
  Problem statement,'' RFC 7554, May 2015.

\end{thebibliography}

\begin{IEEEbiography}[{\includegraphics[width=1in,height=1.25in,clip,keepaspectratio]{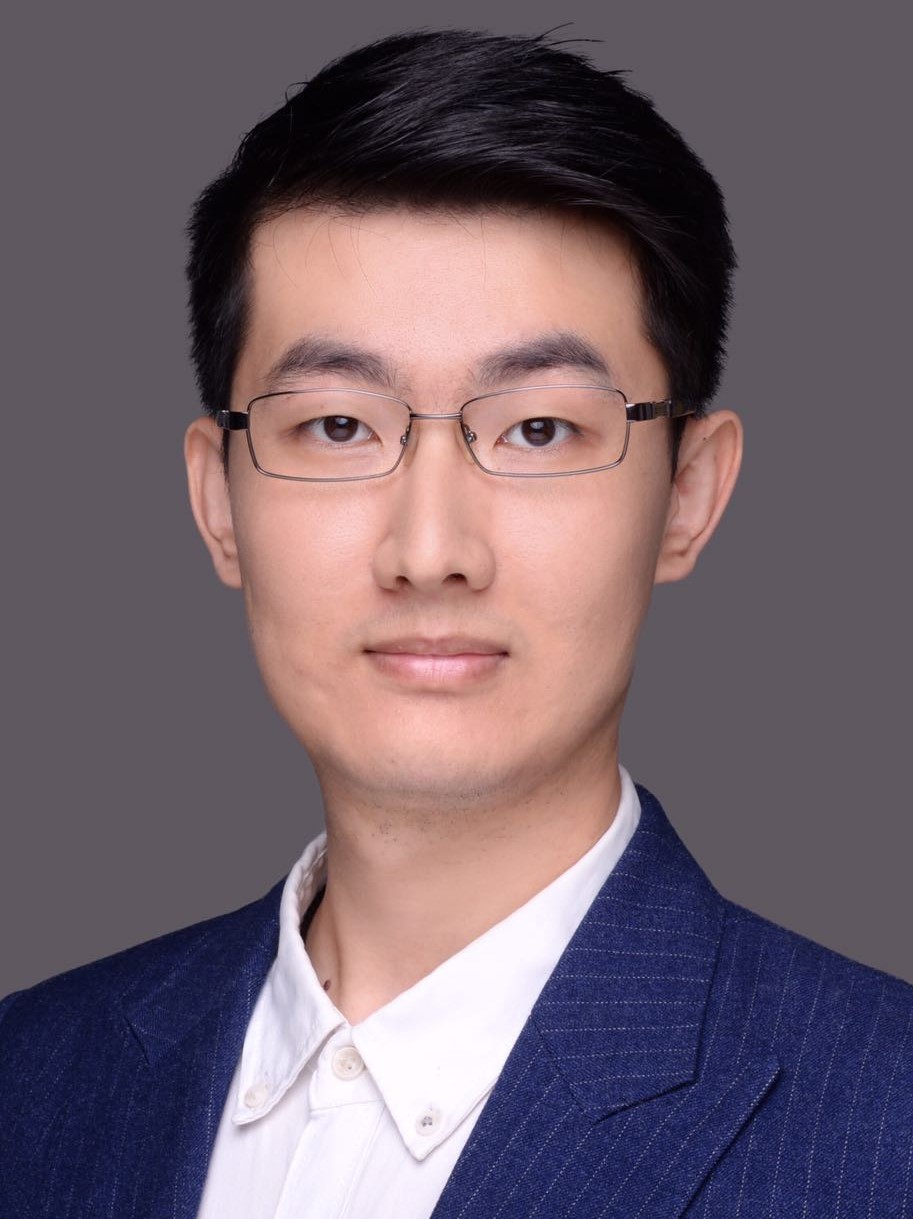}}]{Tianyu Zhang}
received the BS degree from Qingdao University in 2010, the MS degree from Northeastern Univerity, China, in 2013. He was a visiting scholar in the Department of Computer Science and Engineering at the University of Notre Dame during 2015.11 - 2017.05. His research interests include real-time systems, cyber-physical systems and wireless sensor networks. 
\end{IEEEbiography}

\vspace{-0.05in}

\begin{IEEEbiography}[{\includegraphics[width=1in,height=1.25in,clip,keepaspectratio]{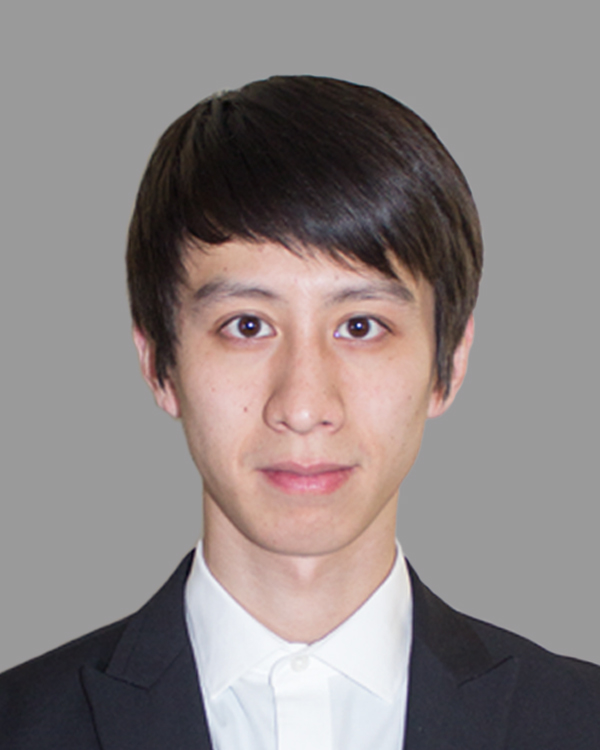}}]{Tao Gong}
received the BS degree from Beihang University, China, in 2013. He is currently pursuing the Ph.D. degree in the Department of Computer Science and Engineering at the University of Connecticut. His research interests include real-time and embedded systems, industrial wireless networks, and distributed real-time data analytics.
\end{IEEEbiography}

\vspace{-0.05in}

\begin{IEEEbiography}[{\includegraphics[width=1in,height=1.25in,clip,keepaspectratio]{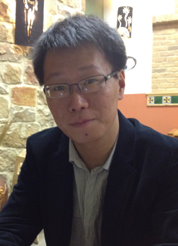}}]{Song Han}
received the BS degree from Nanjing University in 2003, the M.Phil. degree from the City University of Hong Kong in 2006, and the Ph.D. degree from the University of Texas at Austin in 2012, all in Computer Science. He is currently an assistant professor in the Department of Computer Science and Engineering at the University of Connecticut. His research interests include cyber-physical systems, real-time and embedded systems, and wireless networks.
\end{IEEEbiography}
\vspace{-0.05in}

\begin{IEEEbiography}[{\includegraphics[width=1in,height=1.25in,clip,keepaspectratio]{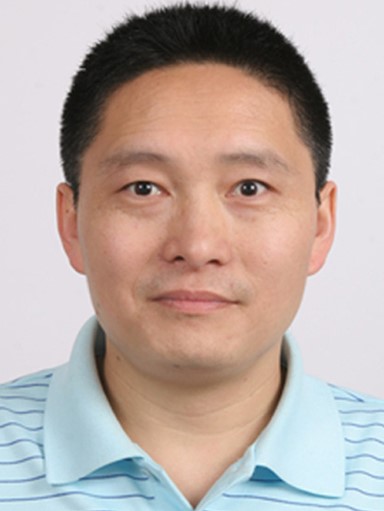}}]{Qingxu Deng}
received the Ph.D. degree in computer science from Northeastern University, Shenyang, China, in 1997. He is currently a Full Professor with the School of Computer Science and Engineering, Northeastern University, China. His research interests include reconfigurable computing systems, multiprocessor real-time scheduling, worst-case execution time (WCET) analysis and formal methods in real-time system analysis.
\end{IEEEbiography}
\vspace{-0.05in}

\eat{\begin{IEEEbiography}[{\includegraphics[width=1in,height=1.25in,clip,keepaspectratio]{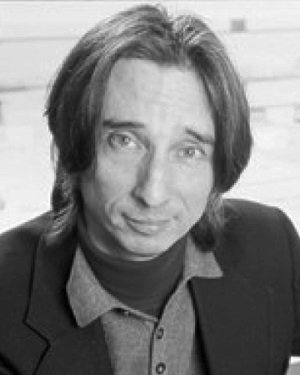}}]{Michael Lemmon}
received the B.S. degree in electrical engineering from Stanford University in 1979, and the Ph.D. degree in electrical and computer engineering from Carnegie Mellon University in 1990.
His current research interests include networked control system and the resilience of dynamical systems.
He was an Associate Editor of the IEEE Transactions on Neural Networks and the IEEE Transactions on Control Systems Technology.
\end{IEEEbiography}}

\begin{IEEEbiography}[{\includegraphics[width=1in,height=1.25in,clip,keepaspectratio]{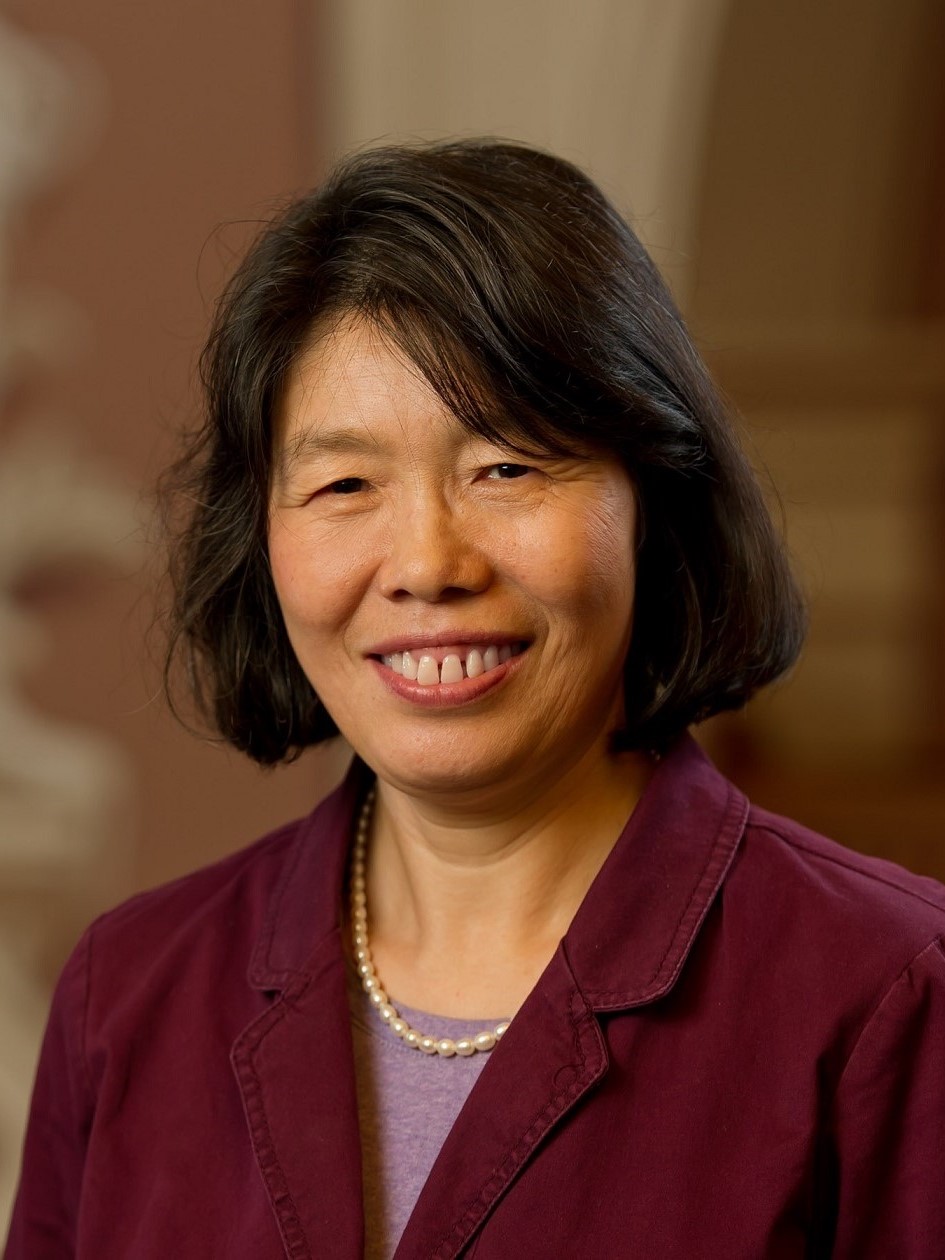}}]{Xiaobo Sharon Hu}
received the BS degree from Tianjin University, China, the MS degree from the Polytechnic Institute of New York, and the PhD from Purdue University. She is a professor in the Dept. of Computer Science and Engineering at the University of Notre Dame. Her research interests include real-time embedded systems, low-power system design, and computing with emerging technologies.
She has authored more than 250 papers in related areas. 
\end{IEEEbiography}

\end{document}